\definecolor{tu0}{rgb}{0.7451, 0.1176, 0.2353}
\definecolor{tu1}{rgb}{1.0000, 0.8039, 0.0000}
\definecolor{tu11}{rgb}{1.0000, 0.8627, 0.3020}
\definecolor{tu12}{rgb}{1.0000, 0.9020, 0.4980}
\definecolor{tu13}{rgb}{1.0000, 0.9412, 0.6980}
\definecolor{tu14}{rgb}{1.0000, 0.9608, 0.8000}
\definecolor{tu2}{rgb}{0.9804, 0.4314, 0.0000}
\definecolor{tu21}{rgb}{0.9882, 0.6039, 0.3020}
\definecolor{tu22}{rgb}{0.9882, 0.7137, 0.4980}
\definecolor{tu23}{rgb}{0.9922, 0.8275, 0.6980}
\definecolor{tu24}{rgb}{0.9961, 0.8863, 0.8000}
\definecolor{tu3}{rgb}{0.6902, 0.0000, 0.2745}
\definecolor{tu31}{rgb}{0.7529, 0.2000, 0.4196}
\definecolor{tu32}{rgb}{0.8431, 0.4980, 0.6353}
\definecolor{tu33}{rgb}{0.9216, 0.7490, 0.8196}
\definecolor{tu34}{rgb}{0.9529, 0.8510, 0.8902}
\definecolor{tu4}{rgb}{0.4863, 0.8039, 0.9020}
\definecolor{tu41}{rgb}{0.6431, 0.8627, 0.9333}
\definecolor{tu42}{rgb}{0.7412, 0.9020, 0.9490}
\definecolor{tu43}{rgb}{0.8431, 0.9412, 0.9686}
\definecolor{tu44}{rgb}{0.8980, 0.9608, 0.9804}
\definecolor{tu5}{rgb}{0.0000, 0.5020, 0.7059}
\definecolor{tu51}{rgb}{0.3020, 0.6510, 0.7961}
\definecolor{tu52}{rgb}{0.5490, 0.7765, 0.8667}
\definecolor{tu53}{rgb}{0.7490, 0.8745, 0.9255}
\definecolor{tu54}{rgb}{0.8510, 0.9255, 0.9569}
\definecolor{tu6}{rgb}{0.0000, 0.3255, 0.4549}
\definecolor{tu61}{rgb}{0.2510, 0.4941, 0.5922}
\definecolor{tu62}{rgb}{0.5490, 0.6941, 0.7529}
\definecolor{tu63}{rgb}{0.7490, 0.8314, 0.8627}
\definecolor{tu64}{rgb}{0.8510, 0.8980, 0.9176}
\definecolor{tu7}{rgb}{0.7765, 0.9333, 0.0000}
\definecolor{tu71}{rgb}{0.8431, 0.9529, 0.3020}
\definecolor{tu72}{rgb}{0.8863, 0.9647, 0.4980}
\definecolor{tu73}{rgb}{0.9333, 0.9804, 0.6980}
\definecolor{tu74}{rgb}{0.9569, 0.9882, 0.8000}
\definecolor{tu8}{rgb}{0.5373, 0.6431, 0.0000}
\definecolor{tu81}{rgb}{0.6784, 0.7490, 0.3020}
\definecolor{tu82}{rgb}{0.7686, 0.8196, 0.4980}
\definecolor{tu83}{rgb}{0.8588, 0.8941, 0.6980}
\definecolor{tu84}{rgb}{0.9059, 0.9294, 0.8000}
\definecolor{tu9}{rgb}{0.0000, 0.4431, 0.3373}
\definecolor{tu91}{rgb}{0.3020, 0.6118, 0.5373}
\definecolor{tu92}{rgb}{0.5490, 0.7490, 0.7020}
\definecolor{tu93}{rgb}{0.7490, 0.8588, 0.8353}
\definecolor{tu94}{rgb}{0.8549, 0.9176, 0.9059}
\definecolor{tu10}{rgb}{0.8000, 0.0000, 0.6000}
\definecolor{tu101}{rgb}{0.8706, 0.3490, 0.7412}
\definecolor{tu102}{rgb}{0.9216, 0.6000, 0.8392}
\definecolor{tu103}{rgb}{0.9608, 0.8000, 0.9216}
\definecolor{tu104}{rgb}{0.9804, 0.8980, 0.9608}
\definecolor{tu110}{rgb}{0.4627, 0.0000, 0.4627}
\definecolor{tu111}{rgb}{0.5961, 0.2510, 0.5961}
\definecolor{tu112}{rgb}{0.7294, 0.4980, 0.7294}
\definecolor{tu113}{rgb}{0.8392, 0.6980, 0.8392}
\definecolor{tu114}{rgb}{0.9216, 0.8510, 0.9216}
\definecolor{tu120}{rgb}{0.4627, 0.0000, 0.3294}
\definecolor{tu121}{rgb}{0.6118, 0.3020, 0.5333}
\definecolor{tu122}{rgb}{0.7569, 0.5490, 0.6980}
\definecolor{tu123}{rgb}{0.8667, 0.7490, 0.8314}
\definecolor{tu124}{rgb}{0.9216, 0.8510, 0.9020}
\definecolor{tu130}{rgb}{0.0314, 0.0314, 0.0314}
\definecolor{tu131}{rgb}{0.3725, 0.3725, 0.3725}
\definecolor{tu132}{rgb}{0.5882, 0.5882, 0.5882}
\definecolor{tu133}{rgb}{0.7529, 0.7529, 0.7529}
\definecolor{tu134}{rgb}{0.8667, 0.8667, 0.8667}
\definecolor{tu140}{rgb}{0.0000, 0.6875, 0.3125}
\newcommand{\mmchange}[1]{\textcolor{black}{#1}}
\newcommand{\cdchange}[1]{\textcolor{black}{#1}}
\newcommand\mcD{\ensuremath{\mathcal{D}}}
\newcommand\mcP{\ensuremath{\mathcal{P}}}
\newcommand\mcX{\ensuremath{\mathcal{X}}}
\newcommand\mcY{\ensuremath{\mathcal{Y}}}
\newcommand\nats{I\!\! N}
\newcommand\treq{\triangleq}
\newcommand\fall{\mbox{ for all }}
\newcommand{\watanabeset}{\mathcal{S}_Q(\gamma)}
\newcommand{\hayashiset}{\mathcal{T}_P(\gamma)}
\newcommand{\D}{{{\mathcal D}}}
\newcommand{\PP}{{{\mathcal P}}}
\newcommand{\X}{{{\mathcal X}}}
\newcommand{\Y}{{{\mathcal Y}}}
\newcommand{\setS}{{{\mathcal S}}}
\begin{document}
\title{Converse Techniques for Identification via Channels}
%
%
\author{Larissa~Brüche\inst{1} \and
Marcel~A.~Mross\inst{1,4}\orcidlink{0000-0003-1747-6876} \and
Yaning~Zhao\inst{1,3}\orcidlink{0009-0003-3601-5503} \and Wafa~Labidi\inst{1,2,3}\orcidlink{0000-0001-5704-1725} \and
Christian~Deppe\inst{1,3}\orcidlink{0000-0002-2265-4887} \and Eduard~A.~Jorswieck\inst{1,4}\orcidlink{0000-0001-7893-8435} }
\authorrunning{L. Brüche et al}
%
\institute{Technische Universität Braunschweig, Institute for Communications Technology, Braunschweig, Germany \and Technical University of Munich, TUM School of Computation, Information and Technology, Munich, Germany 
\and 6G-life, 6G research hub, Germany
\and 
6G-RIC Research and Innovation Cluster, Germany\\
\email{l.brueche@tu-bs.de, m.mross@tu-bs.de, yaning.zhao@tu-bs.de, wafa.labidi@tum.de, christian.deppe@tu-bs.de, e.jorswieck@tu-bs.de}}
\maketitle              
\begin{abstract}
\mmchange{The model of \textit{identification via channels}, introduced by Ahlswede and Dueck, has attracted increasing attention in recent years.} Unlike in Shannon's classical model, where the receiver aims to determine which message was sent from a set of $M$ messages, message identification focuses solely on discerning whether a specific message $m$ was transmitted. The encoder can operate deterministically or through randomization, with substantial advantages observed particularly in the latter approach. While Shannon's model allows transmission of $M = 2^{nC}$ messages, Ahlswede and Dueck's model facilitates the identification of $M = 2^{2^{nC}}$ messages, exhibiting a double exponential growth in block length.
In their seminal paper, Ahlswede and Dueck established the achievability and introduced a "soft" converse bound. Subsequent works have further refined this, culminating in a strong converse bound, applicable under specific conditions. Watanabe's contributions have notably enhanced the applicability of the converse bound.
The aim of this survey is multifaceted: to grasp the formalism and proof techniques outlined in the aforementioned works, analyze Watanabe’s converse, trace the evolution from earlier converses to Watanabe’s, emphasizing key similarities and differences that underpin the enhancements. 
Furthermore, we explore the converse proof for message identification with feedback, also pioneered by Ahlswede and Dueck. By elucidating how their approaches were inspired by preceding proofs, we provide a comprehensive overview. This overview paper seeks to offer readers insights into diverse converse techniques for message identification, with a focal point on the seminal works of Hayashi, Watanabe, and, in the context of feedback, Ahlswede and Dueck.

\keywords{Message Identification  \and Resolvability \and Converses.}
\end{abstract}
\section{Introduction}

\subsection{Historical Overview}
\label{sec:historical_development}

Inspired by the seminal works of Yao \cite{yao} and \mmchange{JáJá} \cite{jaja}, Ahlswede and Dueck delved into the realm of message identification through communication channels \cite{ahlswede}. \mmchange{They discovered that the number of identifiable messages grows double-exponentially, in strong contrast to the traditional exponential growth observed in classical communication.} This contribution earned them the prestigious Best Paper Award from the IEEE Information Theory Society.

\mmchange{However, the \textit{soft converse} provided in \cite{ahlswede} only holds under the condition that the error probability converges to zero exponentially.
Han and Verdú \cite{han_paper} developed the concept of \textit{channel resolvability} to prove a strong converse for the double exponential coding theorem.}
Steinberg \cite{steinberg} extended this idea and developed \textit{partial channel resolvability} to improve the strong converse and apply it to general channels. 
\mmchange{However, Hayashi \cite{hayashi} pointed out a gap in the proof of Lemma 2 in \cite{steinberg}. He derived non-asymptotic formulas for identification via channels and channel resolvability and applied these to the wiretap channel. In a related work,  Oohama \cite{oohama}, also studies the strong converse for identification via channels using similar methods.} Oohama used these results to prove that the sum of the two error probability tends to one exponentially as $n$ goes to infinity at transmission rates above the ID-capacity. \mmchange{Both works use the \textit{soft covering lemma} as the main tool to prove bounds for the channel resolvability. The concept of soft covering of a distribution by a codebook was introduced by Wyner \cite[Theorem 6.3]{wyner1975common}. He developed this tool to prove achievability in his study on the common information of two random variables. A later work by Hayashi and Watanabe \cite{hayashi_watanabe} investigates the strong converse of channel resolvability and uses the soft covering lemma to derive second-order rates under certain conditions. Table \ref{tab:overview_publications_intentions} gives an overview of some publications on channel resolvability and identification converses.}

\begin{table}[ht!]
    \centering
    \caption{Development of identification converses.}
    \label{tab:overview_publications_intentions}
    \begin{tabular}{|c|c|c|}
        \hline
        \textbf{Publication} & \textbf{Year} & \textbf{\mmchange{Contribution}} \\
        \hline
        \mmchange{Han \& Verdú \cite{han_verdu}} & \mmchange{1992} & \parbox{7.5cm}{\mmchange{establish the strong converse for identification using channel resollvability}} \\
        \hline
        Steinberg \cite{steinberg}&1998& \parbox{7.5cm}{\mmchange{prove strong converse for general channel, later shown to have a gap}} \\
        \hline
        Hayashi \cite{hayashi}& 2006 &\parbox{7.5cm}{\mmchange{derive non-asymptotic formulas based for channel resolvability and identification, apply it to the wiretap channel}} \\
        \hline
        Oohama \cite{oohama}& 2013&\parbox{7.5cm}{use soft covering lemma to prove that the sum of two error probabilities converge to one exponentially} \\
        \hline
        Hayashi \& Watanabe \cite{hayashi_watanabe}& 2014&\parbox{7.5cm}{\mmchange{use soft covering lemma to derive second-order rate expansion for channel resolvability}} \\
        \hline
        \mmchange{Watanabe \cite{watanabe}}& \mmchange{2022} &\parbox{7.5cm}{\mmchange{use partial resolvability to prove converse for general channel and second-order rate expansion.}} \\
        \hline
        \end{tabular}
\end{table}


Ahlswede's work in \cite{ahlswedeconcepts} presents a relatively short and conceptually straightforward proof of the converse theorem for identification via the discrete memoryless channel (DMC). In his approach, Ahlswede revisits the initial idea from \cite{ahlswede}, which involves replacing distributions with uniform distributions on "small" subsets, specifically those with cardinalities slightly above a certain threshold. The proof in \cite{ahlswedeconcepts} primarily relies on the theories of large deviations and hypergraphs. A detailed description of this proof, along with its generalization to quantum communication, is available in \cite{Ahlswedewinter}. Additionally, \cite{ahlswede_book} provides a comparison between this combinatorial method and the proof by Han and Verdú. This converse method has been applied to calculate capacities for the compound wiretap channel and the arbitrarily varying wiretap channel \cite{bochedepperobust}, \cite{bochedeppejamming}, as well as for quantum channels \cite{Ahlswedewinter}. In this survey, we will not delve into the combinatorial approach, as it is already comprehensively described in \cite{Ahlswedewinter} and \cite{ahlswede_book}.

It has been shown in \cite{Shannon56} that the capacity of a DMC is not increased by the availability of a feedback channel, even if the feedback channel is noiseless and has unlimited capacity. However, feedback can significantly reduce the complexity of encoding or decoding. A straightforward code construction for a DMC with feedback was explored in \cite{ConstructiveProofAhlswede}. Furthermore, \cite{MACtransmissionFeedback,DUECK19801,KramerFeedback} demonstrated that feedback enhances the capacities of discrete memoryless multiple-access channels and discrete memoryless broadcast channels. Additionally, \cite{Ahlswede2006} noted that noiseless feedback can be used to generate a secret key shared exclusively between the transmitter and the legitimate receiver.

Identification via arbitrarily varying channels (AVC) with noiseless feedback was investigated in \cite{Ahlswede2000}. Identification over discrete multi-way channels with complete feedback was presented in \cite{ID_Multiway_Feedback}. A unified theory of identification via channels with finite input and output alphabets in the presence of noisy feedback was established in \cite{GeneralTheory}. Additionally, the secure identification capacity over the discrete wiretap channel with secure feedback was studied in \cite{Ahlswede2006}.

In \cite{labidifeedback}, the Gaussian channel with feedback is considered. For a positive noise variance, a coding scheme is proposed that generates infinite common randomness between the sender and the receiver. It is shown that any rate for identification via the Gaussian channel with noiseless feedback can be achieved. The result holds regardless of the selected scaling for the rate. This result was generalized in \cite{wiesefeedback}
for general additive noise channels.

The first proof for identification with feedback was developed for both deterministic and stochastic encoding by Ahlswede and Dueck \cite{ahlswede_feedback}. Their proof bears similarity to Wolfowitz's earlier proof for transmission with feedback \cite[Theorem 4.8.2, p.95]{wolfowitz}. \mmchange{A thorough introduction into the theory of identification via channels as well as an extensive overview over results in the field can be found in \cite{ahlswede_book}.}

The structure of this survey will be as follows: After this introduction, we will present the main definitions and notations needed in Section \ref{sec:notation}. The problem formulations for identification via channels and identification with feedback, as well as the method of information spectrum quantities, will be presented in Sections \ref{problem_id},\ref{problem_idf} and \ref{sec:information_spectrum}. In Section \ref{id}, we will explore the converse technique for identification via channels based on channel resolvability. Additionally, we will analyze the converse proof of identification in the presence of noiseless feedback as presented by Ahlswede and Dueck in Section \ref{idf}. Our approach will be to first present Wolfowitz's method for transmission with feedback, followed by a comparative analysis with the findings of Ahlswede and Dueck.

\subsection{Definitions and Notations}
\label{sec:notation}
We revisit first some foundational results about message transmission through channels. 
\begin{definition}
A discrete memoryless channel (DMC) 
is a triple $(\X,\Y,W)$, where $\X$ is the finite input alphabet, 
$\Y$ is the finite output alphabet and   
\begin{equation} W=\left\{W(y|x):x\in\X, y\in\Y\right\}\end{equation} is a stochastic matrix.
The probability for a sequence $y^n\in\Y^n$ to be received if 
$x^n\in\X^n$ was sent
$$
W^n(y^n|x^n)=\prod_{j=1}^n W(y_j|x_j).
$$
\end{definition}
If the definitions of $\X$ and $\Y$ are clear, we simply refer to $W$ as the DMC.

In Shannon's model of information transmission, the sender's task is to encode messages as sequences of channel input symbols in such a way that, even if the channel does not transmit the sequence perfectly, the receiver can still correctly identify the sent message with high probability. For an arbitrary set $\setS$, we denote by $\mathcal{P}(\setS)$ the set of all \textit{probability distributions} on $\setS$. Let $P$ be a probability distribution on a finite set $\X$. The \textit{entropy} of $P$ is defined as
$
H(P):= -\sum_{x\in\X}P(x)\log P(x).
$
If $X$ is a random variable with distribution $P$, we denote the 
entropy of $X$ by
$
H(X):= H(P).
$
Let $X,Y$ be RVs on finite sets $\X,\Y$ with distributions
$P_X$ and $P_{Y|X}$. 
The \textit{conditional entropy} of $Y$ given $X$ is defined by
$$
H(Y|X):= -\sum_{x\in\X}P_X(x)\sum_{y\in\Y}P_{Y|X}(y|x)\log P_{Y|X}(y|x).
$$
Let $X$ and $Y$ be random variables on finite sets $\X$ and $\Y$ with distributions
$P_X$ and $P_{Y}$, 
respectively. Then we define the \textit{mutual information} between $X$ and $Y$ by
$$
I(X;Y):= H(Y)-H(Y|X).
$$
If it is clear which alphabets are
to be used, we omit them if we are
talking about the channel.
If $P$ is a probability distribution on $\X$ and 
$W=\left\{W(y|x):x\in\X, y\in\Y\right\}$, a stochastic matrix, we set
$
I(P,W):= I(X;Y),
$
where $X$ is a RV with distribution $P$ and $Y$ has conditional
distribution $W(\cdot|x)$, given $X=x$. 

\begin{definition}\label{Code}
A randomized $\left(n, M, \epsilon\right)$-transmission code for a channel $W$
is a family of pairs
$
\left\{ \left( Q_i,\D_i \right) | i=1, \ldots , M\right\} 
$
such that $\forall\  i=1,\cdots,M$ and $\forall\  i \not= j$
\begin{equation}
Q_i \in \PP\left( \X^n \right),
\D_i \subset \Y^n,\ 
\end{equation}
\begin{equation}
\label{randcctrdisjoint}
\D_i\cap\D_j = \emptyset,
\end{equation}
\begin{equation}\label{randcctrerrbound}
\sum_{x^n\in\X^n}{Q_i\left(x^n\right)W^n\left(\D_i|x^n\right)}
\ge 1-\epsilon.
\end{equation}
\end{definition}

Note that in classical transmission often deterministic encoding is used. This means we have no randomization at the input and therefore a probability of sending the message $i$ equal to one ($Q_i=1$).\

\begin{definition}
Let $W$ be a DMC.
\begin{enumerate}
\item The rate $R$ of a $(n, M,\epsilon)$ code is defined as
$R = \frac {\log M}n$ bits, i.e., $M=2^{nR}$.
\item A rate $R$ is said to be achievable if for all $\epsilon\in (0,1)$ there exists a $n_0(\epsilon)$, such 
that for all $n\geq n_0(\epsilon)$ there exists a $(n,2^{nR},\epsilon)$ code.
\item The transmission capacity $C(W)$ of a DMC $W$ is the supremum of all achievable rates.
\end{enumerate}
\end{definition}

Let $M(n,\epsilon)\treq\max\left\{M\in\nats:\textrm{A }(n,M,\epsilon)
\textrm{-Code exists}\right\}$. Then we have the following   

\begin{theorem}[Shannon's Coding Theorem]\label{thm:shannon}
Let $\epsilon\in(0,1)$ be fixed. Then
\begin{equation}\label{dmccapacity}
\lim_{n\to\infty}\frac{\log M(n,\epsilon)}{n}=
\max_{P\in\mcP(\mcX)}I(P,W)= C(W)
\end{equation}
\end{theorem}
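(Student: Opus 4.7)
The plan is to prove the two inequalities $\liminf_n \frac{1}{n}\log M(n,\epsilon) \geq C(W)$ (direct part) and $\limsup_n \frac{1}{n}\log M(n,\epsilon) \leq C(W)$ (converse), uniformly over $\epsilon\in(0,1)$. The direct part is insensitive to $\epsilon$ because shrinking $\epsilon$ at the cost of a vanishing rate loss suffices; the converse, on the other hand, must be a \emph{strong} converse to handle every $\epsilon<1$, which is where the main work lies.

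For the direct part, I would fix an input distribution $P^*\in\mcP(\mcX)$ achieving $\max_P I(P,W)$ and apply a random coding argument. Draw $M=\lceil 2^{n(C(W)-\delta)}\rceil$ codewords i.i.d.\ from $P^{*n}$, and use a joint-typicality decoder based on the set of $(x^n,y^n)$ with empirical mutual information close to $I(P^*,W)$. By the law of large numbers, the probability that the transmitted codeword is jointly typical with its output goes to one, and by a standard union bound the probability that any other codeword is jointly typical with the received $y^n$ is at most $M\cdot 2^{-n(I(P^*,W)-\delta/2)}\to 0$. Averaging over codebooks yields expected error probability below $\epsilon$ for $n$ large, hence one codebook exists with the same bound; this produces a $(n,M,\epsilon)$ transmission code (with deterministic encoding, which is a special case of Definition~\ref{Code}), so every $R<C(W)$ is achievable.

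For the converse I would use Wolfowitz's method of types, which delivers the strong converse. Given any $(n,M,\epsilon)$-code $\{(Q_i,\D_i)\}$, partition $\X^n$ into type classes $T_P^n$ and, for each codeword distribution $Q_i$, analyze the output distribution $(Q_iW^n)$ restricted to the $W$-shells $T_{W,x^n}^n$. A Chebyshev-type bound on the measure of the shell inside $\D_i$ gives
\begin{equation}
W^n(\D_i\mid x^n) \leq 2^{-n(I(P,W)-o(1))}\cdot |\D_i|\cdot \text{(polynomial factor)}
\end{equation}
for $x^n$ in a high-probability set, uniformly over $P$. Summing \eqref{randcctrerrbound} over $i$, together with \eqref{randcctrdisjoint} which forces $\sum_i|\D_i|\leq|\Y^n|$, yields $M(1-\epsilon) \leq 2^{n(\max_P I(P,W)+o(1))}$. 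Taking logarithms, dividing by $n$, and letting $n\to\infty$ gives the desired upper bound $C(W)$, independently of $\epsilon\in(0,1)$.

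The main obstacle is the strong converse: a straightforward Fano-plus-data-processing argument gives only $\frac{1}{n}\log M(n,\epsilon) \leq \frac{C(W)+h(\epsilon)/n}{1-\epsilon}$, whose right-hand side does not tend to $C(W)$ as $n\to\infty$ for fixed $\epsilon$ close to one. To close this gap one must exploit the concentration of type-class probabilities, the key technical input being that the conditional probability $W^n(T_{W,x^n}^n\mid x^n)$ is bounded below by a polynomial in $n$, while individual output sequences in the shell have probability at most $2^{-nH(W\mid P)}$, so disjointness of the decoding sets $\D_i$ buys exactly the factor $2^{nI(P,W)}$ needed. Handling the union over types uniformly, and controlling the randomization $Q_i$ in Definition~\ref{Code} (rather than deterministic codewords), are the places where care is needed, but both reduce to a standard types-counting argument since the number of type classes is polynomial in $n$.
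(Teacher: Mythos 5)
The paper states Theorem~\ref{thm:shannon} as classical background and gives no proof of it, so there is nothing to compare against directly; your outline must stand on its own. Your overall plan is the standard and correct one: random coding with joint typicality for achievability, and a types-based strong converse (rather than Fano, which you rightly note only yields $\frac{1}{n}\log M\leq\frac{C+h(\epsilon)/n}{1-\epsilon}$) so that the bound holds for every fixed $\epsilon\in(0,1)$. The achievability half is fine, and the reduction of randomized encoders $Q_i$ to deterministic ones is even easier than you suggest: since the average in \eqref{randcctrerrbound} is at least $1-\epsilon$, each $Q_i$ has a point $x^n_i$ in its support with $W^n(\D_i|x^n_i)\geq 1-\epsilon$, so one may pass to a deterministic code outright.

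There is, however, a genuine slip in the converse counting step. From your displayed bound $W^n(\D_i|x^n)\leq 2^{-n(I(P,W)-o(1))}|\D_i|\cdot\mathrm{poly}(n)$ together with \eqref{randcctrerrbound} one gets $|\D_i|\gtrsim(1-\epsilon)2^{nI(P,W)}$, and then \eqref{randcctrdisjoint} with $\sum_i|\D_i|\leq|\Y|^n$ yields $M\lesssim 2^{n(\log|\Y|-I(P,W))}$ --- not $2^{nI(P,W)}$. (For a noiseless binary channel this would give $M\lesssim 1$ instead of $2^n$.) The correct per-sequence bound on the conditional type shell is $W^n(y^n|x^n)\leq 2^{-n(H(W|P)-o(1))}$, and the disjointness count must then be taken inside the output-typical set of size roughly $2^{nH(PW)}$, which is what produces $M\lesssim 2^{n(H(PW)-H(W|P))}=2^{nI(P,W)}$; counting against all of $\Y^n$ loses exactly the gap between $\log|\Y|$ and $H(PW)$. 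Equivalently, and more cleanly, use the Kemperman--Wolfowitz form: on a set $\mcB_i$ with $W^n(\mcB_i|x^n_i)\geq 1-\nu$ one has $W^n(y^n|x^n_i)\leq 2^{n(I(P^*,W)+\delta)}(P^*W)^n(y^n)$, hence $1-\epsilon-\nu\leq W^n(\D_i\cap\mcB_i|x^n_i)\leq 2^{n(I(P^*,W)+\delta)}(P^*W)^n(\D_i)$, and summing the measures $(P^*W)^n(\D_i)$ of the disjoint decoding sets gives $M(1-\epsilon-\nu)\leq 2^{n(I(P^*,W)+\delta)}$ directly. This is precisely the argument the paper itself reproduces (for the feedback setting) in Lemma~\ref{lem:wolfowitz}, so you may want to align your converse with that formulation; the uniformity over the polynomially many types, which you already invoke, then closes the argument.
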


$C$ depends only on
the matrix $W$. The error probability $\epsilon$ only affects the speed 
of convergence.

Note that the theorem holds regardless of whether we use a deterministic $(n, M, \epsilon)$ code or a randomized $(n, M, \epsilon)$ code.

Denote the marginal output distribution induced by a channel input distribution $P$ on a channel $W$ with 
\begin{align}
    PW(y) = \sum_{x\in \mathcal{X}}P(x)W(y|x) 
\end{align}
and the joint distribution of channel input and output with
\begin{align}
    P\times W (x, y) = P(x)W(y|x).
\end{align}
The joint distribution of two statistically independent random variables $X$ and $Y$ is also denoted by
\begin{align}
    P \times Q(y) = P(x)Q(y).
\end{align}
The (normalized) variational distance $d(\cdot , \cdot)$ is defined by
\begin{equation}
    d(P,Q) := \frac{1}{2} \sum\limits_{x \in \mathcal{X}} |P(x)-Q(x)|. \label{eq:variational-distance}
\end{equation}
We follow the convention in \cite{watanabe}, which includes the factor $\frac{1}{2}$ in the definition of the variational distance. Not that it is also very common to omit this factor, as e.g. in \cite{hayashi}. Using the definition \eqref{eq:variational-distance}, the variational distance is equal to the total variation distance:
\begin{equation}
    d(P,Q) = \sup_{\mathcal{A} \subseteq \mathcal{X}} |P(\mathcal{A}) - Q(\mathcal{A})|.\label{eq:total-variation-distance}
\end{equation}

We refer to a sequence $\mathrm{\mathbf{W}} = \{W^n\}_{n = 1}^{\infty}$ of conditional distributions $W^n(y^n|x^n)$, $(x^n, y^n) \in \mathcal{X}^n \times \mathcal{Y}^n$, as a \textit{general channel}.

\subsection{\mmchange{Background}}
\label{sec:information_spectrum}

This chapter will give a brief overview over some \mmchange{necessary concepts and tools. For a more detailed introduction, see  \cite[Chapter 2]{tan} and \cite{han}.}

Consider a binary hypothesis testing between a null hypothesis $Z \sim P_Z$, indicated by 0 and an alternative hypothesis $Z \sim Q_Z$, indicated by 1. The hypothesis test can be described by a channel $W: \mathcal{Z} \to \{0,1\}$. Then we will find the error probability of first kind
\begin{equation}
    P_I(W) = \sum\limits_z P_Z(z) W(1|z) 
\end{equation}
and of the second kind
\begin{equation}
    P_{II}(W) = \sum\limits_z Q_Z(z)W(0|z) .
\end{equation}
Table \ref{tab:variables_hypothesis_testing} shows an overview of the hypothesis testing and its variables. We define the optimal type II error probability under the condition that the type I error probability is less or equal than $\varepsilon$ by
\begin{equation}
    \label{eq:optimal_type_II_error_prob}
    \beta_{\varepsilon}(P_Z, Q_Z) = \inf_{W: P_I(W)\leq\varepsilon} P_{II}(W) .
\end{equation}

\begin{table}[h!]
    \centering
    \caption{Variables of the hypothesis testing}
    \label{tab:variables_hypothesis_testing}
    \begin{tabular}{|c|c|c|}
        \hline
        hypothesis & null hypothesis & alternative hypothesis \\
        \hline
        probability & $P_Z$ & $Q_Z$ \\
        indicator & 0 & 1 \\
        channel decision & $W(0|z)$ & $W(1|z)$ \\
        \hline
        error probability & \parbox{5cm}{1st kind:\newline $P_I(T) = \sum\limits_z P_Z(z) W(1|z)$} & \parbox{5cm}{2nd kind:\newline $P_{II}(T) = \sum\limits_z Q_Z(z)W(0|z)$} \\
        description & \parbox{5cm}{channel decides for alternative hypothesis although null hypothesis was send} & \parbox{5cm}{channel decides for null hypothesis although alternative hypothesis was send}\\
        \hline
        \end{tabular}
\end{table}

\begin{definition}
    The \emph{$\varepsilon$-information spectrum divergence} \cite[eq. (2.9)]{tan} is given by
\begin{equation}
    \label{eq:epsilon_spectral_inf_divergence}
    D_\mathrm{s}^{\varepsilon}(P_Z\|Q_Z) = \sup \left\{\gamma \in \mathbb{R}: \mathbb{P}\left(\log\frac{P_Z(z)}{Q_Z(z)}\leq\gamma\right)\leq\varepsilon \right\}.
\end{equation}
\end{definition}

In order to get a better understanding of this definition, Figure \ref{fig:e_spectral_divergence} shows a visualization on the probability distribution. We choose the supremum of all thresholds $\gamma$, such that the probability distribution of $\log\frac{P_Z(z)}{Q_Z(z)}$ is less than or equal to $\varepsilon$.
\begin{figure}[!ht]
     \centering
        \includegraphics[width=0.65\textwidth]{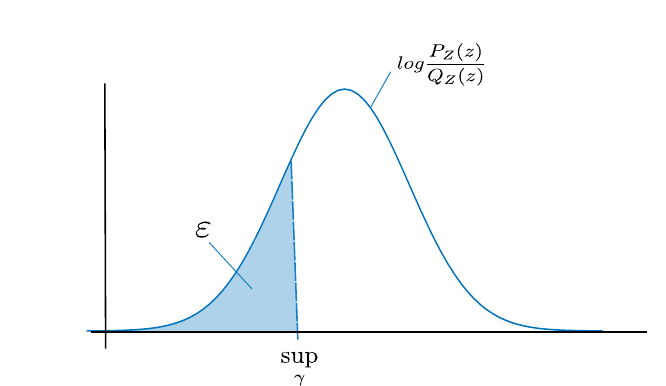}
     \caption{Visualization of the $\varepsilon$-spectral inf divergence}
     \label{fig:e_spectral_divergence}
\end{figure}

The following Lemma connects the $\varepsilon$-information spectrum divergence to the optimal type-II error of a binary hypothesis test.
\begin{lemma}[{\cite[Lemma 2.4]{tan}}]
    \label{lem:divergence_leq_logbeta}
    For $0\leq\varepsilon<1$, it holds that
    \begin{equation}
        D_\mathrm{s}^{\varepsilon}(P_Z\|Q_Z) \leq -\log\beta_{\varepsilon}(P_Z, Q_Z) \leq D_\mathrm{s}^{\varepsilon+\zeta}(P_Z\|Q_Z) + \log\left(\frac{1}{\zeta}\right) 
    \end{equation}
    for any $0<\zeta<1-\varepsilon$.
\end{lemma}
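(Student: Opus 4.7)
The plan is to prove the two inequalities separately, using in each direction an explicit hypothesis test of threshold type on the log-likelihood ratio $\log\frac{P_Z(z)}{Q_Z(z)}$.

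For the lower bound $D_\mathrm{s}^{\varepsilon}(P_Z\|Q_Z) \leq -\log\beta_{\varepsilon}(P_Z,Q_Z)$, I would take any admissible $\gamma$, i.e.\ any $\gamma$ satisfying $\mathbb{P}_{P_Z}\!\left(\log\frac{P_Z(z)}{Q_Z(z)}\leq\gamma\right)\leq\varepsilon$, and construct the deterministic test
\begin{equation*}
    W(1|z)=\mathbb{1}\!\left\{\log\tfrac{P_Z(z)}{Q_Z(z)}\leq\gamma\right\}, \qquad W(0|z)=1-W(1|z).
\end{equation*}
The type-I error of $W$ is exactly $\mathbb{P}_{P_Z}(\log\frac{P_Z}{Q_Z}\leq\gamma)\leq\varepsilon$, so $W$ is feasible in the infimum defining $\beta_\varepsilon$. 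On the set $\{\log\frac{P_Z}{Q_Z}>\gamma\}$ we have $Q_Z(z)< 2^{-\gamma}P_Z(z)$, which gives $P_{II}(W)\leq 2^{-\gamma}$ after summing against $P_Z$. Hence $\beta_\varepsilon\leq 2^{-\gamma}$, i.e.\ $-\log\beta_\varepsilon\geq\gamma$, and taking supremum over admissible $\gamma$ yields the first inequality.

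For the upper bound $-\log\beta_\varepsilon \leq D_\mathrm{s}^{\varepsilon+\zeta}(P_Z\|Q_Z)+\log(1/\zeta)$, I would argue the contrapositive direction: fix any $\gamma> D_\mathrm{s}^{\varepsilon+\zeta}(P_Z\|Q_Z)$, so that by the definition of the supremum $\mathbb{P}_{P_Z}(\log\frac{P_Z}{Q_Z}\leq\gamma)>\varepsilon+\zeta$. Let $A_\gamma=\{z:\log\frac{P_Z(z)}{Q_Z(z)}\leq\gamma\}$ and let $W$ be any test with $P_I(W)\leq\varepsilon$. Splitting $P_Z(A_\gamma)$ into the contribution of acceptances and rejections and using $\sum_{z\in A_\gamma}P_Z(z)W(1|z)\leq P_I(W)\leq\varepsilon$ gives
\begin{equation*}
    \sum_{z\in A_\gamma} P_Z(z)\,W(0|z) \;\geq\; P_Z(A_\gamma)-\varepsilon \;>\; \zeta.
\end{equation*}
On $A_\gamma$ the bound $P_Z(z)\leq 2^{\gamma}Q_Z(z)$ applies pointwise, so the left-hand side is at most $2^{\gamma}\sum_{z\in A_\gamma}Q_Z(z)W(0|z)\leq 2^{\gamma}P_{II}(W)$. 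Rearranging yields $P_{II}(W)>\zeta\, 2^{-\gamma}$, and since $W$ was arbitrary, $\beta_\varepsilon\geq\zeta\, 2^{-\gamma}$. Taking logarithms and letting $\gamma\downarrow D_\mathrm{s}^{\varepsilon+\zeta}(P_Z\|Q_Z)$ finishes the second inequality.

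The calculations themselves are routine; the only subtle point is the bookkeeping of strict versus non-strict inequalities around the supremum in the definition of $D_\mathrm{s}^{\varepsilon+\zeta}$. The "slack" $\zeta$ is exactly what allows us to convert the strict inequality $\mathbb{P}_{P_Z}(\log\frac{P_Z}{Q_Z}\leq\gamma)>\varepsilon+\zeta$ (valid for $\gamma$ just above $D_\mathrm{s}^{\varepsilon+\zeta}$) into a quantitative lower bound on $P_{II}(W)$, and the additive $\log(1/\zeta)$ term in the statement is precisely the price one pays for this slack. This is the step where I would be most careful, but it poses no real obstacle beyond a clean limit argument.
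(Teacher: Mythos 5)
Your proof is correct. The paper does not prove this lemma at all --- it is imported verbatim as \cite[Lemma 2.4]{tan} --- and your argument is exactly the standard one from that source: a threshold test on the log-likelihood ratio gives the left inequality (feasibility from the admissibility of $\gamma$, and $Q_Z(z) < 2^{-\gamma}P_Z(z)$ off the threshold set bounds the type-II error), while for the right inequality the change of measure $P_Z(z)\leq 2^{\gamma}Q_Z(z)$ on the threshold set, combined with the slack $\zeta$, lower-bounds $P_{II}(W)$ for every feasible $W$. Your handling of the strict/non-strict inequalities around the supremum and the final limit $\gamma\downarrow D_\mathrm{s}^{\varepsilon+\zeta}(P_Z\|Q_Z)$ is the right way to close the argument, and it also covers randomized tests since only $W(0|z)+W(1|z)=1$ is used.
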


Now consider the hypothesis testing between 
\begin{equation}
    P \times W(x, y) \hspace{1cm} \text{and} \hspace{1cm} P \times Q(x, y), 
\end{equation}
where P describes an input and Q describes an output distribution.
An important quantity for the analysis of the asymptotics of channel coding is
\begin{align}
    \label{eq:inf-sup-beta}
    \inf\limits_{P \in \mathcal{P}(\mathcal{X})} \sup\limits_{Q\in \mathcal{P}(\mathcal{Y})} \beta_{\varepsilon}(P\times W, P\times Q) .
\end{align}
For finite alphabets $\mathcal{X}$ and $\mathcal{Y}$, the following saddle-point property holds.

\begin{lemma}[Saddle-Point Property, \cite{polyanskiy}]
    \label{lem:saddle_point_property}
    Let $0\leq\varepsilon<1$, then the optimal value in \ref{eq:inf-sup-beta} is attainable and
    \begin{equation}
        \min_{P \in \mathcal{P}(\mathcal{X})} \max_{Q\in \mathcal{P}(\mathcal{Y})} \beta_{\varepsilon}(PW, PQ) = \max_{Q\in \mathcal{P}(\mathcal{Y})} \min_{P \in \mathcal{P}(\mathcal{X})} \beta_{\varepsilon}(PW, PQ). 
    \end{equation}
\end{lemma}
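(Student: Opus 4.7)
The plan is to apply the classical minimax theorem (Sion's theorem, or its von Neumann variant for finite-dimensional compact convex sets) to the function $F(P,Q) := \beta_\varepsilon(P \times W, P \times Q)$ on the compact convex simplices $\mathcal{P}(\mathcal{X})$ and $\mathcal{P}(\mathcal{Y})$. The first step is to rewrite $\beta_\varepsilon$ as a linear program over tests $T \colon \mathcal{X}\times\mathcal{Y} \to [0,1]$ (identifying $T(x,y)$ with the probability of deciding the null hypothesis), obtaining
\[
F(P,Q) = \inf_T \left\{\, \sum_{x,y} P(x)Q(y)T(x,y) \;\colon\; \sum_{x,y} P(x)W(y|x)T(x,y) \geq 1-\varepsilon \,\right\}.
\]
From this LP formulation, I would then establish three properties of $F$: concavity in $Q$, convexity in $P$, and joint continuity.

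Concavity in $Q$ for fixed $P$ is immediate: the feasible set depends only on $P$ and $W$, while the objective is linear in $Q$, so $F(P,\cdot)$ is an infimum of affine functions and therefore concave. Joint continuity follows from a standard parametric LP argument: the set of admissible tests is compact in $[0,1]^{\mathcal{X}\times\mathcal{Y}}$ and the feasible region is always non-empty, since $T\equiv 1$ trivially satisfies the constraint.

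Convexity in $P$ for fixed $Q$ is the main obstacle, because both the objective and the feasible region now depend on $P$, so one cannot directly mimic the concavity argument. I would handle this by a Bayes-weighting construction: for $P = \alpha P_1 + (1-\alpha) P_2$ with tests $T_1, T_2$ attaining $F(P_1,Q)$ and $F(P_2,Q)$, define
\[
T(x,y) \;:=\; \frac{\alpha P_1(x)T_1(x,y) + (1-\alpha)P_2(x)T_2(x,y)}{P(x)},
\]
and set $T(x,y):=0$ whenever $P(x)=0$ (which forces both $P_i(x)=0$). A short check shows that $T(x,y)\in[0,1]$, that $T$ satisfies the feasibility constraint at $P$ (by linearity of the constraint in the pair $(P, T)$ along this weighting), and that its objective value at $(P,Q)$ equals exactly $\alpha F(P_1,Q)+(1-\alpha)F(P_2,Q)$, so $F(\cdot,Q)$ is convex.

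With the convex-concavity and continuity of $F$ on compact convex sets in place, the classical minimax theorem delivers both the $\min$-$\max$ equality and the attainability of the outer extrema, completing the proof.
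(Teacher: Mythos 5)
The paper does not prove this lemma at all --- it is imported verbatim from the cited reference \cite{polyanskiy}, so there is no in-paper proof to compare against. Your proposal is essentially the standard argument from that reference: the LP formulation of $\beta_\varepsilon$ over randomized tests, concavity in $Q$ as an infimum of affine functions, convexity in $P$ via the Bayes-weighted test $T = \bigl(\alpha P_1 T_1 + (1-\alpha)P_2 T_2\bigr)/P$, and then a minimax theorem on the two simplices. The convexity construction is correct: $T\in[0,1]$, the type-I constraint at $P$ follows by linearity of $(P,T)\mapsto \sum_{x,y}P(x)W(y|x)T(x,y)$ along your weighting, and the objective splits exactly.

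The one step you should tighten is the continuity claim. Compactness and non-emptiness of the feasible set give you existence of optimal tests and, via a subsequence argument, \emph{lower} semicontinuity of $F(\cdot,Q)$ in $P$; they do not by themselves give upper semicontinuity, because the feasible region moves with $P$ (there is no Slater point when $\varepsilon=0$). Either invoke the fact that a convex function on a polytope is automatically upper semicontinuous, or simply observe that Sion's theorem only requires $F(\cdot,Q)$ quasi-convex and lower semicontinuous together with $F(P,\cdot)$ quasi-concave and upper semicontinuous (the latter being immediate for an infimum of continuous functions), which also yields attainment of all four extrema by semicontinuity on compact sets. With that adjustment the argument is complete.
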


Thus far, we have seen non-asymptotic quantities that are used to derive non-asymptotic bounds. To express asymptotic capacity results for general channels, we also need the following definition.
\begin{definition} Consider a general channel $\mathrm{\mathbf{W}} = \{W^n\}_{n = 1}^{\infty}$ with input $\mathrm{\mathbf{X}} = \{x^n\}_{n = 1}^{\infty}$ and output $\mathrm{\mathbf{Y}} = \{Y^n\}_{n = 1}^{\infty}$. The \emph{$\varepsilon$-spectral-inf mutual information rate} is defined by 
\begin{equation}
    \label{eq:e-spectral-inf-mutual-information}
    \underline{I}^{\varepsilon}(\mathrm{\mathbf{X}};\mathrm{\mathbf{Y}}) := \sup \left\{a: \limsup_{n\to\infty} \mathbb{P}\left(\frac{1}{n} \log\frac{W^n(Y^n|X^n)}{P_{Y^n}(Y^n)} \leq a \right) \leq \varepsilon \right\}.
\end{equation}
The \emph{spectral-inf mutual information rate} is defined by 
    \begin{equation}
    \label{eq:e-spectral-inf-mutual-information}
    \underline{I}(\mathrm{\mathbf{X}};\mathrm{\mathbf{Y}}) := \sup \left\{a: \lim_{n\to\infty} \mathbb{P}\left(\frac{1}{n} \log\frac{W^n(Y^n|X^n)}{P_{Y^n}(Y^n)} \leq a \right) = 0 \right\}
\end{equation}
and the \emph{spectral-sup mutual information rate} is defined as
\begin{equation}
    \label{eq:e-spectral-inf-mutual-information}
    \overline{I}(\mathrm{\mathbf{X}};\mathrm{\mathbf{Y}}) := \inf \left\{a: \lim_{n\to\infty} \mathbb{P}\left(\frac{1}{n} \log\frac{W^n(Y^n|X^n)}{P_{Y^n}(Y^n)} \geq a \right) = 0 \right\}.
\end{equation}
\end{definition}

\subsection{Problem Formulations of Identification via channels}
\label{problem_id}
In this section, we formulate the identification via channels problem, and state the related results. 
In the context of message identification, the receiver is tasked with determining whether a received message corresponds to a specific message $i$ that he possesses. The receiver must make this determination based on the information received through the communication channel, aiming for a high probability of correct identification.
\begin{definition}\label{defidcode}
A randomized $\left(n, N, \epsilon, \lambda\right)$ 
identification code (abbreviated as ID-code) is a family of pairs
$$
\left\{ \left( Q_i, 
        \mcD_i \right) | i=1, \ldots , N\right\} 
$$
with
$$
Q_i \in \mcP\left( \mcX^n \right),\quad 
\mcD_i \subset \mcY^n \fall i=1,\ldots,N
$$
and with errors of first resp.\ second kind bounded by
\begin{equation}\label{idcodeerrbound1}
\varepsilon_n^{(i)}=\sum_{x\in\mcX^n}{Q_i\left(x^n\right)W^n\left(\mcD_i|x^n\right)}
\ge 1-\epsilon \fall i=1, \ldots , N
\end{equation}
and 
\begin{equation}\label{idcodeerrbound2}
\delta_n^{(i,j)}=\sum_{x\in\mcX^n}{Q_j\left(x^n\right)W^n\left(\mcD_i|x^n\right)}
\le\lambda \fall i,j=1, \ldots , N, \quad i\ne j
\end{equation}
\end{definition}

The receiver who is interested in message $i$ will decide that his message
was transmitted iff the received channel output is in $\mathcal{D}_i$,
otherwise he will deny that message $i$ was sent.

The two types of errors $\epsilon$ and $\lambda$ will differ in their origins: errors of the first type $\epsilon$ are caused by channel noise, while errors of the second kind $\lambda$ primarily result from the identification (ID) code scheme.

The main difference compared to transmission codes is that the disjointness condition for decoding sets is replaced by the weaker property \eqref{idcodeerrbound2}. Instead of a single receiver interested in a specific message, one can imagine a scenario where all decoders are in the same location. Each receiver adds "his" message to a common list if he believes that his message has been sent. This suggests that ID-codes are somewhat similar to list-codes. While list-codes typically impose a limit on the list size, ID-codes make it unlikely for a message to be included in the decoding list unless it is the sent message. In both cases, there is a high probability that the sent message will be in the decoding list.
Then we define the optimal code size of identification via channel $W$ as follows.
\begin{align}
    N^*(\varepsilon, \lambda|W) := \sup\{N: \text{an } (N, \varepsilon, \lambda)-\text{ID-code exists for the channel } W\}.
\end{align}

\begin{theorem}[Ahlswede, Dueck, Han, Verd\'u]\label{thm:double_exponent_coding_theorem}
Let 
${\epsilon,\lambda\le\frac{1}{2}}$. Then
\begin{equation}
\lim_{n\to\infty}\frac{\log\log N^*(\varepsilon, \lambda|W^n)}{n}=C(W),
\end{equation}
where $C(W)$ again denotes the channel capacity from Theorem\ref{thm:shannon}.
\end{theorem}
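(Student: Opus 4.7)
The plan is to prove the theorem in two halves: the achievability bound $\liminf_n \tfrac{1}{n}\log\log N^*(\varepsilon,\lambda|W^n) \geq C(W)$ and the strong converse $\limsup_n \tfrac{1}{n}\log\log N^*(\varepsilon,\lambda|W^n) \leq C(W)$.

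For achievability I would follow the original Ahlswede--Dueck construction. Start from a good transmission code for $W$ of rate $R < C(W)$ with $M = 2^{nR}$ codewords $x_1^n,\ldots,x_M^n$ and decoding sets $\mcE_1,\ldots,\mcE_M$ of small average error, which exists by Theorem~\ref{thm:shannon}. Then build $N = 2^{2^{nR'}}$ identification codewords for some $R' < R$ as follows: for each $i \in \{1,\ldots,N\}$ select a subset $T_i \subseteq \{1,\ldots,M\}$ of size $K = 2^{nR'}$, let $Q_i$ be uniform on $\{x_k^n : k \in T_i\}$, and put $\mcD_i = \bigcup_{k \in T_i} \mcE_k$. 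The first-kind error is inherited from the underlying transmission code, while the second-kind error $\delta_n^{(i,j)}$ is dominated by $|T_i \cap T_j|/|T_i|$ plus the transmission error. Existence of $N$ such subsets with pairwise intersection at most $\tfrac{\lambda}{2}K$ follows from a probabilistic argument: draw each $T_i$ uniformly at random, apply a Chernoff bound to $|T_i\cap T_j|$, and union-bound over the $\binom{N}{2}$ pairs. This succeeds because $\log N$ may be as large as $K = 2^{nR'}$, yielding the double-exponential growth; taking $R,R' \to C(W)$ closes the gap.

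For the converse I would use the channel-resolvability route of Han--Verd\'u surveyed in Section~\ref{id}. The key step is to approximate each $Q_i$ by a distribution $\tilde Q_i$ supported on at most $K' = 2^{n(C(W)+\delta)}$ input sequences, in the sense that $d(Q_i W^n,\,\tilde Q_i W^n)$ can be made arbitrarily small; resolvability guarantees this whenever $K'$ exceeds $2^{nC(W)}$. Substituting $\tilde Q_i$ for $Q_i$ in (\ref{idcodeerrbound1})--(\ref{idcodeerrbound2}) and using the triangle inequality on total variation, the family of output distributions $\{\tilde Q_i W^n\}$ must still be pairwise separated by a margin $1-\varepsilon-\lambda - 2(\text{resolvability error}) > 0$. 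Since each $\tilde Q_i$ is determined by a support of size $K'$ inside $\mcX^n$ plus a bounded-precision weight vector, the number of such distributions that can be pairwise well-separated is at most $2^{K'\cdot n\log|\mcX|\,(1+o(1))} \le 2^{2^{n(C(W)+2\delta)}}$, which forces $\tfrac{1}{n}\log\log N^*(\varepsilon,\lambda|W^n) \le C(W)+2\delta$. Letting $\delta \to 0$ concludes the argument.

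The main obstacle is the converse, specifically driving the resolvability approximation error below $\tfrac{1}{2}-\max(\varepsilon,\lambda)$ so that the covering/pigeonhole argument still sees a strictly positive separation margin, while the support size stays at $2^{n(C(W)+\delta)}$. This quantitative control is exactly what distinguishes a \emph{strong} converse from the soft converse of \cite{ahlswede}, and it is where the refinements listed in Table~\ref{tab:overview_publications_intentions}---Han--Verd\'u, Steinberg, Hayashi, Oohama, and Watanabe---do the heavy lifting; in this plan I would import a resolvability bound of that table as a black box and carry out only the substitution, the triangle inequality, and the combinatorial counting in full detail.
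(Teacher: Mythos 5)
Your proposal is correct and follows essentially the same route as the paper: the achievability half is the standard Ahlswede--Dueck subset construction that the paper cites but does not reproduce, and your converse is precisely the resolvability argument of Section~\ref{id} --- approximate each $Q_i$ by an $M$-type input via a soft covering bound and pigeonhole on the at most $|\mcX^n|^{M}$ distinct $M$-types to contradict the separation $d(Q_iW^n,Q_jW^n)\ge 1-\epsilon-\lambda$. The only caveat, shared with the paper's own Lemmas~\ref{lem:id-error-bound-hayashi} and~\ref{lem:id-error-bound-watanabe}, is that this margin argument needs $\epsilon+\lambda<1$ strictly, so the boundary case $\epsilon=\lambda=\tfrac{1}{2}$ admitted by the theorem's hypothesis is not actually covered.
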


The double exponent coding theorem is one of the major gains of Ahlswede's work \cite{ahlswede}. According to Shannon's theorem (see Theorem \ref{thm:shannon}), the number of messages $N$ grows exponentially with the number of bits $n$ ($\sim 2^{nC}$), with Theorem \ref{thm:double_exponent_coding_theorem} by Ahlswede $N$ grows double exponentially with $n$ ($\sim 2^{2^{nC}}$)\footnote{Strictly, basis 2 is only correct for binary codes. For codes with $b$ symbols, chose $b$ as a basis.}. This fact can be attributed due to the nature of identification and transmission codes. 
In a transmission code with a message set of size $M$, the receiver faces an $M$-hypothesis testing problem, requiring  $M$ pairwise disjoint decoding sets. In an identification code, the receiver faces, independently of the size of the message set, a two-hypothesis testing on not necessarily pairwise disjoint decoding sets, deciding between the hypothesis, whether the message was sent or not. The possible overlap of the decoding regions is proportional to the allowed error probabilities. In other words: we tolerate a certain error probability to accept not disjoint decoding regions.

Table \ref{tab:comparison_transmission_ID} shows the main characteristics of the ID code compared to the transmission code. The ID-code is applicable for boolean decisions at the receiver, while in transmission the content of the message is of interest. Identification is realized by not disjunct decoding sets, resulting in a second error type. The first kind error is defined similarly to the transmission error. As a result we obtain double exponentially growth for the coding rate in identification instead of exponential growth, as in transmission.
\begin{table}[ht!]
    \centering
    \caption{Comparison of transmission and ID-codes}
    \label{tab:comparison_transmission_ID}
    \begin{tabular}{|c|c|c|}
        \hline
        & \textbf{Transmission} & \textbf{Identification} \\
        \hline
        intention & \parbox{4cm}{\centering receiver wants to know which message is sent} & \parbox{4.5cm}{\centering receiver just wants to know whether message is sent or not} \\ 
        \hline
        decoding sets &\parbox[c]{4cm}{\centering\vspace{0.01cm}\includegraphics[width=3cm]{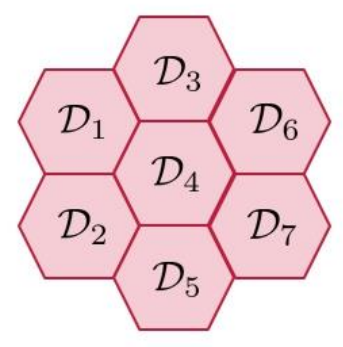}} & \parbox[c]{4.5cm}{\centering\vspace{0.01cm} \includegraphics[width=3cm]{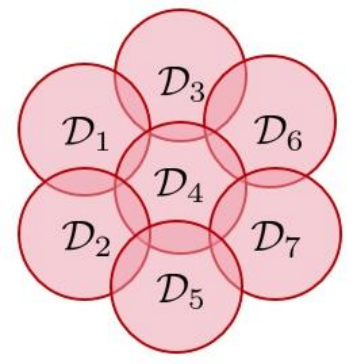}}\\
        \hline
        error probabilities & $W^n(\mathcal{D}_i^c|x_i) \leq \varepsilon$ & \makecell{first kind: $\sum_{x\in\mcX^n}Q_i\left(x^n\right)W^n\left(\mcD_i|x^n\right)$ \\ second kind: $\sum_{x\in\mcX^n}Q_j\left(x^n\right)W^n\left(\mcD_i|x^n\right)$ }\\
        \hline
        number of messages & $\sim 2^{nC}$ & $\sim 2^{2^{nC}}$ \\
        \hline
        \end{tabular}
\end{table}
\newline
The direct part of Theorem \ref{thm:double_exponent_coding_theorem} was proven by Ahlswede and Dueck in 1989, while the converse was only proven under the assumption that the error probabilities converge exponentially to zero \cite{ahlswede}. \mmchange{The authors called this a \textit{soft converse:} 
\begin{align}
    \limsup_{n\to\infty} \frac{1}{n} \log\log N^*(2^{-n\epsilon}, 2^{-n\delta}|W^n) \leq C \qquad \forall \epsilon, \delta > 0.
\end{align}
As already outlined in \cite{ahlswede}, a \textit{weak converse} in this notation means
\begin{align}
    \inf_{\epsilon, \delta \in (0,1)} \limsup_{n\to\infty} \frac{1}{n} \log\log N^*(\epsilon, \delta|W^n) \leq C 
\end{align}
and a strong converse means
\begin{align}
    \limsup_{n\to\infty} \frac{1}{n} \log\log N^*(\epsilon, \delta|W^n) \leq C \qquad \forall \epsilon, \delta > 0.
\end{align}
The strong converse was first proved by Han and Verdú \cite{han_verdu}. In Section \ref{id}, we present two proofs of the strong converse by Hayashi \cite{hayashi} and Watanabe \cite{watanabe}.
}

\mmchange{The identification coding theorem not only holds for DMCs, but was proven for the class of \textit{general channels}, which are neither required to be stationary nor ergodic. We limit ourselves to discrete input and output alphabets $\mathcal{X}$ and $\mathcal{Y}$. A \textit{general channel} is a sequence $\mathbf{W} = \{W^n\}_{n=1}^{\infty}$, where $W^n$ are arbitrary conditional probability distributions, interpreted as a channels from $\mathcal{X}^n$ and $\mathcal{Y}^n$.}
We first define the following terms for the identification problem via a general channel $\mathbf{W}$.
\begin{definition}
\label{def:e_d_achievable}
    Given $\varepsilon\geq 0 $, $\delta<1$ a rate $R$ of an ID-code is $(\varepsilon,\delta)$-achievable if there exists an $(N_n,\varepsilon_n,\delta_n)$-ID-code satisfying
    \begin{align}
        &\limsup_{n \to \infty} \varepsilon_n \leq \varepsilon, \\
        &\limsup_{n \to \infty} \delta_n \leq \delta, \\
        &\liminf_{n \to \infty} \frac{1}{n} \log\log N_n \geq R,
    \end{align}
    where $\varepsilon_n$ and $\delta_n$ is the maximal type I and type II error probability, respectively, i.e.,
    \begin{align}
        \varepsilon_n&=\max_{1\le i\le N_n}\varepsilon_n^{(i)},\\
        \delta_n&= \max_{1\le i\ne j\le N_n}\delta_n^{(i,j)}.
    \end{align}
\end{definition}

\begin{definition}[{\cite[Definition 6.2.2]{han}}]
\mmchange{The ($\epsilon, \delta$)-ID capacity for a general channel $\mathbf{W}$ is defined as}
\begin{align} 
    C_{ID}(\varepsilon,\delta|\mathbf{W})&:=\sup\{R|\text{R is }(\varepsilon,\delta)\text{-achievable}\} \label{eq:id-capacity-1}\\
    &:=\lim\inf_{n\to\infty}\frac{1}{n}\log\log{N^*(\varepsilon,\delta|\mathbf{W})}. \label{eq:id-capacity-2}
\end{align}
\mmchange{The ID capacity is defined as the $(\varepsilon,\delta)$-ID capacity with $\varepsilon=0$ and $\delta=0$, i.e., $C_{ID}(\mathbf{W}):=C_{ID}(0,0|\mathbf{W})$.}
\end{definition}
An achievable bound of $C_{ID}(\varepsilon,\delta|\mathbf{W})$ is stated as follows.
\begin{theorem}[{\cite{hayashi}}]
    \label{thm:id-capacity-achievable-general}
    For $0\leq\varepsilon$, $\delta<1$, $\varepsilon+\delta<1$ and a sequence $\textbf{W} = (W^n)_{n\in\mathbb{N}}$ of general channels, we have
    \begin{equation}
        C_\mathrm{ID}(\varepsilon,\delta|\textbf{W}) \geq \sup_{\textbf{X}} \underline{I}^{\varepsilon}(\textbf{X};\textbf{Y}).
    \end{equation}
\end{theorem}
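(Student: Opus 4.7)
The plan is to follow Hayashi's two-level construction, which combines an information-spectrum decoding argument with the random-subset technique of Ahlswede and Dueck. Fix an input process $\mathbf{X}$ with output $\mathbf{Y}$ and a rate $R<\underline{I}^{\varepsilon}(\mathbf{X};\mathbf{Y})$. Choose $\eta>0$ with $R+\eta<\underline{I}^{\varepsilon}(\mathbf{X};\mathbf{Y})$ and some $\delta'\in(0,\delta)$. Set $M_n=\lceil 2^{nR}\rceil$ and $K_n=\lceil\delta' M_n\rceil$. Draw codewords $X^n_1,\ldots,X^n_{M_n}$ i.i.d.\ from $P_{X^n}$, and define the threshold decoding atoms
$$\mathcal{A}_m=\left\{y^n\in\mathcal{Y}^n : \log\frac{W^n(y^n|X^n_m)}{P_{Y^n}(y^n)}>n(R+\eta)\right\}.$$
For each identification message $i=1,\ldots,N_n$, independently draw a subset $\mathcal{S}_i\subseteq\{1,\ldots,M_n\}$ of size $K_n$ uniformly at random, and set $Q_i$ equal to the uniform distribution on $\{X^n_m:m\in\mathcal{S}_i\}$ together with $\mathcal{D}_i=\bigcup_{m\in\mathcal{S}_i}\mathcal{A}_m$.

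Next I would bound the expected errors. Because $\mathcal{A}_m\subseteq\mathcal{D}_i$ whenever the actually used codeword index $m$ lies in $\mathcal{S}_i$, averaging over the random code yields
$$\mathbb{E}\,\varepsilon_n^{(i)}\le\mathbb{P}\!\left[\tfrac{1}{n}\log\tfrac{W^n(Y^n|X^n)}{P_{Y^n}(Y^n)}\le R+\eta\right]\le\varepsilon+o(1)$$
directly from the definition of $\underline{I}^{\varepsilon}$. For the type~II error with $j\ne i$, splitting $\mathcal{S}_j$ into its intersection with $\mathcal{S}_i$ and its complement gives
$$\mathbb{E}\,\delta_n^{(i,j)}\le\frac{\mathbb{E}|\mathcal{S}_i\cap\mathcal{S}_j|}{K_n}+\frac{1}{K_n}\sum_{m\in\mathcal{S}_j\setminus\mathcal{S}_i}\mathbb{E}\,W^n(\mathcal{D}_i|X^n_m).$$
The first term equals $K_n/M_n=\delta'<\delta$. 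For the second, independence of $X^n_m$ from $X^n_{m'}$ for $m'\ne m$ together with the standard change-of-measure bound $P_{Y^n}(\mathcal{A}_{m'})\le 2^{-n(R+\eta)}$ yields $\mathbb{E}\,W^n(\mathcal{D}_i|X^n_m)\le K_n\cdot 2^{-n(R+\eta)}=\delta'\,2^{-n\eta}\to 0$.

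Finally I would derandomize via concentration. Chernoff's inequality for the hypergeometric random variable $|\mathcal{S}_i\cap\mathcal{S}_j|$ yields $\mathbb{P}[|\mathcal{S}_i\cap\mathcal{S}_j|/K_n>\delta]\le e^{-cM_n}$ for some $c=c(\delta,\delta')>0$. Because $M_n=2^{nR}$, this tail is \emph{double-exponentially} small in $n$, so the union bound over the $\binom{N_n}{2}$ message pairs still survives provided $\log N_n\ll cM_n$. Setting $N_n=\lfloor e^{cM_n/4}\rfloor$ and applying Markov's inequality to the type~I error (discarding at most half of the messages) extracts a deterministic $(N_n,\varepsilon+o(1),\delta)$-ID code with $\liminf_n\tfrac{1}{n}\log\log N_n=R$. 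Taking $R\uparrow\underline{I}^{\varepsilon}(\mathbf{X};\mathbf{Y})$ and then the supremum over $\mathbf{X}$ concludes the argument.

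The principal obstacle is balancing the Chernoff exponent $cM_n$ against the double-exponential number of pairs in the type~II union bound; the constants $\eta$, $\delta'$ and $c$ must be tuned so that all four quantities—the two error bounds, the subset-intersection concentration, and the resulting ID rate—are simultaneously controllable. A secondary subtlety is that the type~I and type~II events share the same random codewords, so they cannot be derandomized independently; the standard remedy is a two-step application of Markov's inequality followed by expurgation.
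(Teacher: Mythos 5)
The paper does not actually prove Theorem~\ref{thm:id-capacity-achievable-general}; it only quotes the achievability bound from Hayashi, so there is no in-paper proof to compare against. Your architecture — an information-spectrum threshold decoder at the transmission level combined with the Ahlswede--Dueck random-subset construction — is indeed the standard route (Han--Verd\'u / Han's Theorem~6.2.1, and essentially Hayashi's), and your expectation calculations for both error types are correct. The problems are in the derandomization, and two of them are genuine gaps rather than constant-tuning.

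First, the type~I step. Markov plus ``discard half the messages'' only guarantees $\varepsilon_n^{(i)}\leq 2(\varepsilon+o(1))$ for the surviving messages, which does not give $\limsup_n\varepsilon_n\leq\varepsilon$ when $\varepsilon>0$; expurgating a vanishing fraction makes the threshold worse, not better. The correct tool is concentration, not Markov: $\varepsilon_n^{(i)}\leq\frac{1}{K_n}\sum_{m\in\mathcal{S}_i}W^n(\mathcal{A}_m^c\mid X_m^n)$ is the average over a random $K_n$-subset of i.i.d.\ $[0,1]$-valued variables with mean $\mu_n\to\leq\varepsilon$, so Hoeffding gives a deviation probability $e^{-2K_n\tau_n^2}$, double-exponentially small, and the union bound over all $N_n$ messages closes with no factor of $2$. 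Second, and more seriously, the cross term $\frac{1}{K_n}\sum_{m\in\mathcal{S}_j\setminus\mathcal{S}_i}W^n(\mathcal{D}_i\mid X_m^n)$ is controlled only \emph{in expectation} over the codebook (as $\delta'2^{-n\eta}$), and its tail via Markov is merely exponentially small in $n$. Your union bound is over $\binom{N_n}{2}\approx e^{cM_n/2}$ pairs, which is double-exponential, so this term cannot be derandomized the way you derandomize the hypergeometric overlap $|\mathcal{S}_i\cap\mathcal{S}_j|/K_n$ (whose tail genuinely is $e^{-cK_n}$). The root cause is that your atoms $\mathcal{A}_m$ overlap. The standard fix is to first extract a transmission code with \emph{pairwise disjoint} decoding sets $\mathcal{B}_m$ (Feinstein/Verd\'u--Han lemma at rate $R+\eta$), after which $W^n(\mathcal{D}_i\mid x_m)\leq W^n(\mathcal{B}_m^c\mid x_m)$ holds deterministically for every $m\notin\mathcal{S}_i$ and the only pair-dependent randomness left is the overlap term. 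Alternatively one can keep overlapping atoms but must separate the two sources of randomness: condition once (via a single Markov step over the codebook) on $\frac{1}{M_n}\sum_m\min\bigl(1,\sum_{m'\neq m}W^n(\mathcal{A}_{m'}\mid X_m^n)\bigr)$ being small, and only then apply Hoeffding over the subset randomness. As written, the proof does not close.
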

\mmchange{In Section \ref{id}, we will present two converse proofs. The first one, presented in Section \ref{sec:hayashi-converse}, only matches the achievability bound in Theorem \ref{thm:id-capacity-achievable-general} under the assumption of the \textit{strong converse property}. The second one, presented in Section \ref{sec:converse_technique_by_watanabe} matches the bound in Theorem \ref{thm:id-capacity-achievable-general} for $\delta = 0$ without further assumptions, establishing the identification capacity of a general (discrete) channel.}

\subsection{Problem Formulations of Transmission or Identification in the presence of feedback}
\label{problem_idf}
In this section, we formulate the problem of transmission or identification in the presence of noiseless feedback (IDF) and review recent results.

The problem of message transmission with noiseless feedback is illustrated in Fig \ref{fig:transmission_feedback_diagramm}.
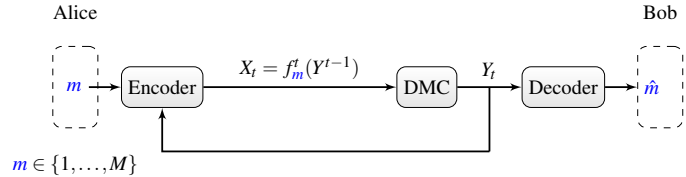
\begin{figure}[!ht]
    \centering
    \scalebox{0.92}{
\tikzstyle{farbverlauf} = [ top color=white, bottom color=white!80!gray]
\tikzstyle{block} = [draw,top color=white, bottom color=white!80!gray, rectangle, rounded corners,
minimum height=2em, minimum width=2.5em]

\tikzstyle{blocked} = [draw, fill=none, rectangle, rounded corners,
minimum height=4em, minimum width=.7
cm]
\tikzstyle{block1} = [draw, fill=none, rectangle, rounded corners,
minimum height=4em, minimum width=2
cm]
\tikzstyle{input} = [coordinate]
\tikzstyle{sum} = [draw, circle,inner sep=0pt, minimum size=2mm,  thick]

\scalebox{.93}{
\tikzstyle{arrow}=[draw,->] 
\begin{tikzpicture}[auto, node distance=2cm,>=latex']
\node[] (M) {${\color{blue}{m}}$};
\node[blocked, dashed, left=-0.6cm of M] (alice) {};
\node[block,right=.5cm of M] (enc) {Encoder};
\node[block, right=3cm of enc] (channel) {DMC};
\node[block, right=1cm of channel] (dec) {Decoder};
\node[blocked, dashed, right=0.5 cm of dec] (bob) {};
\node[align=left,right=.5cm of dec] (Mhat) {\color{blue}{$\hat{m}$}};
\node[input,right=.5cm of channel] (t1) {};
\node[input,below=1cm of t1] (t2) {};
\node[above= 0.3 cm of alice] (a) {Alice};
\node[above= 0.3 cm of bob] (b) {Bob};
\node[below=0.3 cm of alice] (message) {${\color{blue}{m}} \in \{1,\ldots,M\}$};
\draw[->,thick] (M) -- (enc);
\draw[->,thick] (enc) --node[above]{ $X_t={f}^t_{\color{blue}{m}}(Y^{t-1})$} (channel);
\draw[->,thick] (channel) --node[above]{$Y_t$} (dec);
\draw[->,thick] (dec) -- (Mhat);
\draw[-,thick] (t1) -- (t2);
\draw[->,thick] (t2) -| (enc);
\end{tikzpicture}}
}
    \caption{Transmission with noiseless feedback over a DMC}
    \label{fig:transmission_feedback_diagramm}
    \end{figure}
Given an input message $i$ from a message set $\mathcal{M}=\{1,...,M\}$ with cardinally $M$ and a DMC $W$, the encoder maps the feedback sequences $y^{t-1}=y_1,\cdots, y_{t-1}$ using the feedback encoding functions $\boldsymbol{f}_i$, and then sends $f_i^t(y^{t-1})$ over the DMC $W$. The decoder outputs an estimate $\hat{i}\in\mathcal{M}$ of the message sent. A transmission feedback code is defined as follows:
\begin{definition}
    An $(n,M,\varepsilon)$ transmission feedback code for the DMC $W$ is a system $\left\{\boldsymbol{f}_i,\mathcal{D}_i\right\}_{i=1}^{M}$ with
    \begin{align}
        \boldsymbol{f}_i\in\mathcal{F}_n,\quad \mathcal{D}_i\subset\mathcal{Y}^n, \quad \forall i\in\mathcal{M},
    \end{align}
    and an error probability that satisfies
    \begin{align}
    \label{eq:error}
        P_e(i)=Pr[i\ne\hat{i}]\le \varepsilon,
    \end{align}
    for all $i=1,\ldots,M$ and some $\varepsilon \in (0,1)$.
\end{definition}

Wolfowitz proved that feedback can not increase the transmission capacity of DMCs\cite{wolfowitz}.

Consider the IDF problem via single-user DMC $W$, as illustrated in Figure \ref{fig:idf_code}. The message $i$ from the message set $\mathcal{N}=\{1,...,N\}$ with cardinality $|\mathcal{N}|=N$ is encoded by a feedback encoding function $\boldsymbol{f}_i$ with respect to message $i$ and is transmitted over the channel $W$. A feedback encoding function is defined as follows.
\begin{definition}
A feedback encoding function $\boldsymbol{f}_i$ w.r.t. message $i\in\mathcal{N}$ is a vector-valued function
\begin{align}
    \boldsymbol{f}_i=[f_i^1,\cdots,f_i^n],
\end{align}
where $f_i^1\in\mathcal{X}$ and $f_i^t:\mathcal{Y}^{t-1}\mapsto\mathcal{X}$. We denote the set of all feedback encoding functions with length $n$ as $\mathcal{F}_n$.
\end{definition}
The output $Y_t$ of the channel is returned via a feedback loop connected to the encoder, which is represented by the feedback strategy $\boldsymbol{f}$. 
\begin{figure}[!ht]
    \centering  
    \scalebox{0.85}{
\tikzstyle{farbverlauf} = [ top color=white, bottom color=white!80!gray]
\tikzstyle{block} = [draw,top color=white, bottom color=white!80!gray, rectangle, rounded corners,
minimum height=2em, minimum width=2.5em]

\tikzstyle{blocked} = [draw, fill=none, rectangle, rounded corners,
minimum height=4em, minimum width=.7
cm]
\tikzstyle{block1} = [draw, fill=none, rectangle, rounded corners,
minimum height=4em, minimum width=2
cm]
\tikzstyle{input} = [coordinate]
\tikzstyle{sum} = [draw, circle,inner sep=0pt, minimum size=2mm,  thick]

\scalebox{.93}{
\tikzstyle{arrow}=[draw,->] 
\begin{tikzpicture}[auto, node distance=2cm,>=latex']
\node[] (M) {${\color{blue}{i}}$};
\node[blocked, dashed, left=-0.6cm of M] (alice) {};
\node[block,right=.5cm of M] (enc) {Encoder};
\node[block, right=3cm of enc] (channel) {DMC};
\node[block, right=1cm of channel] (dec) {Decoder};
\node[block1, dashed, right=0.5 cm of dec] (bob) {};
\node[align=left,right=.5cm of dec] (Mhat) {Is ${i^\prime}$ sent? \\ Yes or No?};
\node[input,right=.5cm of channel] (t1) {};
\node[input,below=1cm of t1] (t2) {};
\node[above= 0.3 cm of alice] (a) {Alice};
\node[above= 0.3 cm of bob] (b) {Bob};
\node[below=0.3 cm of alice] (message) {${\color{blue}{i}} \in \{1,\ldots,N\}$};
\draw[->,thick] (M) -- (enc);
\draw[->,thick] (enc) --node[above]{ $X_t={f}^t_{\color{blue}{i}}(Y^{t-1})$} (channel);
\draw[->,thick] (channel) --node[above]{$Y_t$} (dec);
\draw[->,thick] (dec) -- (Mhat);
\draw[-,thick] (t1) -- (t2);
\draw[->,thick] (t2) -| (enc);
\end{tikzpicture}}
}
    \caption{Identification with noiseless feedback over a DMC}
    \label{fig:idf_code}
\end{figure}
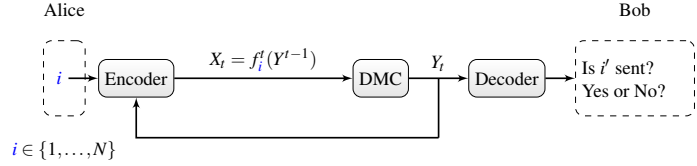

In the following, we revisit the definition of deterministic and randomized IDF codes for DMCs, as introduced in \cite{ahlswede_feedback}, respectively.
\begin{definition}
    An $(n,N,\lambda)$ deterministic IDF code for a DMC $W$ is a system $\left\{\boldsymbol{f}_i,\mathcal{D}_i)\right\}_{i=1}^N$, where
    \begin{align}
        \boldsymbol{f}_i\in\mathcal{F}_n,\quad \mathcal{D}_i\subset \mathcal{Y}^n,
    \end{align}
and for all $i,j\in\mathcal{N}$ with $i\ne j$ and some $\lambda \in (0,\frac{1}{2})$, where the type I and type II error satisfy
\begin{align}
    P_I(i)&=W^n(\mathcal{D}^c|\boldsymbol{f}_i)\le \epsilon\leq \lambda,\\
    P_{II}(i,j)&=W^n(\mathcal{D}_j|\boldsymbol{f}_i)\le \delta\leq \lambda.
\end{align}
\end{definition}
\begin{definition}
    An $(n,N,\lambda)$ randomized IDF code for a DMC $W$ is a system $\left\{Q_F(\cdot|i),\mathcal{D}_i\right\}$, where
    \begin{align}
        Q_F(\cdot|i)\in\mathcal{P}(\mathcal{F}_n),\quad \mathcal{D}_i\subset\mathcal{Y}^n,
    \end{align}
   and for all $i,j\in\mathcal{N}$ with $i\ne j$ and some $\lambda \in (0, \frac{1}{2})$, the type I and type II error satisfy
    \begin{align}
        P_I(i)&=\sum_{\boldsymbol{g}\in\mathcal{F}_n}Q_F(\boldsymbol{g}|i)W^n(\mathcal{D}_i^c|\boldsymbol{g})\le \lambda,\\
    P_{II}(i,j)&=\sum_{\boldsymbol{g}\in\mathcal{F}_n}Q_F(\boldsymbol{g}|i)W^n(\mathcal{D}_j|\boldsymbol{g})\le \lambda.
    \end{align}
\end{definition}
\mmchange{Let $N_f(n, \lambda)$ and $N_F(n, \lambda)$ be the maximal integers $N$ such that an $(n, N, \lambda)$ deterministic or randomized IDF code exists, respectively.}
We define the rate of both deterministic and randomized IDF codes as \mmchange{double logarithmic}, i.e.,
\begin{align}
    R=\frac{1}{n}\log\log{N}.
    \label{eq.R_IDF}
\end{align}
An IDF rate $R$ is achievable if, for some $\lambda\in(0,\frac{1}{2})$, there's an $(n,2^{2^{nR}},\lambda)$ IDF code. The deterministic and randomized IDF capacities $C_{dIDF}(W)$ and $C_{rIDF}(W)$ are the supremum of all achievable rates.
Although feedback does not increase the transmission capacity of DMCs, Ahlswede and Dueck proved in \cite{ahlswede_feedback}, that feedback increases the identification capacity via noisy channels. They examined the IDF capacity formulas with deterministic and randomized encoders, which are given in Theorem \ref{thm:deterministic_encoder} and \ref{thm:randomized_encoder}, respectively. 
\begin{theorem}[\cite{ahlswede_feedback}]
\label{thm:deterministic_encoder}
    Let $C(W)$ be the Shannon capacity of the DMC $W$. Then the deterministic ID feedback capacity $C_{dIDF} (W)$ of the DMC $W$ is given by
    \begin{equation}
        C_{dIDF} = 
        \begin{cases}
            0, &\text{ if } C(W)=0 \text{ or } W \text{ is noiseless} \\
            \max\limits_{x\in\mathcal{X}} H(W(\cdot|x)), &\text{ otherwise }.
        \end{cases}
    \end{equation}
   
\end{theorem}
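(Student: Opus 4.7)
The plan is to treat the degenerate cases ($C(W)=0$ or $W$ noiseless) separately and then establish matching achievability and converse bounds $C_{\mathrm{dIDF}}(W)=H^*:=\max_{x\in\mathcal X}H(W(\cdot|x))$ in the generic case. If $C(W)=0$, then $W(y|x)$ is independent of $x$, so every feedback strategy induces the identical output distribution $P_i=W^n$; no two messages can be separated, giving $N\le 1$. If $W$ is noiseless, then $Y_t$ is a deterministic function of $X_t$, so each $\boldsymbol{f}_i$ produces a single fixed output sequence and $N\le|\mathcal X|^n$; both cases yield $\tfrac{1}{n}\log\log N\to 0$. Observe that ``$W$ noiseless'' is equivalent to $H^*=0$, so it is also subsumed by the general converse below; only the $C(W)=0$ case requires the separate identical-distribution argument just given.

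For achievability in the generic case, fix $x^*\in\arg\max_x H(W(\cdot|x))$ (so $H^*>0$ since $W$ is not noiseless) and split the block as $n=n_1+n_2$ with $n_2=o(n)$ and $n_2\to\infty$. In phase~1 the encoder transmits $x^*$ for $n_1$ consecutive uses regardless of the message; via the feedback link, sender and receiver then share $\omega:=Y^{n_1}\sim W(\cdot|x^*)^{n_1}$, whose probability mass concentrates on a typical set of size $\approx 2^{n_1 H^*}$. In phase~2 the encoder transmits a hash $h_i(\omega)\in\{1,\dots,2^M\}$, with $M=n_2(C(W)-\epsilon)$, using an ordinary transmission code (well-defined because $C(W)>0$), and the decoder interested in $i'$ accepts iff the decoded hash equals $h_{i'}(\omega)$. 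Drawing the $h_i$'s as i.i.d.\ uniformly random functions, $\mathbb E[\Pr_\omega[h_i(\omega)=h_j(\omega)]]=2^{-M}$; a Hoeffding bound over the $\approx 2^{n_1 H^*}$ near-equiprobable atoms of $\omega$ makes the deviation doubly-exponentially small in $n_1$, so a union bound over $\binom{N}{2}$ pairs succeeds for $N$ up to $2^{c\cdot 2^{n_1 H^*}}$, yielding $\tfrac{1}{n}\log\log N\to H^*$.

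For the converse, consider any deterministic IDF code $\{(\boldsymbol f_i,\mathcal D_i)\}_{i=1}^N$ with $\lambda<1/2$. Because $X_t=f_i^t(Y^{t-1})$ is deterministic given past outputs, the induced output distribution $P_i$ on $\mathcal Y^n$ satisfies
\begin{equation*}
H(P_i)=\sum_{t=1}^n H(Y_t\mid Y^{t-1})=\sum_{t=1}^n\mathbb E_{Y^{t-1}}\bigl[H(W(\cdot\mid f_i^t(Y^{t-1})))\bigr]\le nH^*.
\end{equation*}
Define $T_i:=\{y^n:P_i(y^n)\ge 2^{-n(H^*+\delta)}\}$, so $|T_i|\le 2^{n(H^*+\delta)}$ and $P_i(T_i)\ge 1-o(1)$, and replace each $\mathcal D_i$ by $\mathcal D_i':=\mathcal D_i\cap T_i$. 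For any $i\ne j$, the equality $\mathcal D_i'=\mathcal D_j'$ would force $P_j(\mathcal D_j')=P_j(\mathcal D_i')\ge 1-\lambda-o(1)>\lambda$, contradicting the type~II bound; hence the $\mathcal D_i'$ are pairwise distinct, so $N$ is at most the number of subsets of $\mathcal Y^n$ of cardinality $\le 2^{n(H^*+\delta)}$. Applying $\log\binom{M}{K}\le K\log(eM/K)$ yields $\log N\le 2^{n(H^*+\delta)}\cdot n\log|\mathcal Y|\cdot(1+o(1))$ and therefore $\tfrac{1}{n}\log\log N\le H^*+\delta+o(1)$; sending $\delta\to 0$ closes the bound. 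The main obstacle is this truncation-and-count step---in the spirit of Wolfowitz's strong converse for transmission with feedback noted in Section~\ref{sec:historical_development}---where one must simultaneously keep the restricted decoding sets pairwise distinct (this is exactly why $\lambda<\tfrac12$ is required) and then invoke the combinatorial subset bound; in the achievability, the analogous delicate point is the doubly-exponential Hoeffding concentration over $\binom{N}{2}$ hash collisions, which hinges on the near-uniformity of $\omega$ on a $2^{n_1 H^*}$-size typical set.
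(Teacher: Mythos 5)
Your converse follows the same overall strategy as the paper's (the Ahlswede--Dueck adaptation of Wolfowitz's technique): restrict each decoding set to a small set of not-too-improbable output sequences, use the two error bounds together with $\lambda<\tfrac12$ to show the restricted decoding sets are pairwise distinct, and count subsets of bounded cardinality. Note that the paper, being a survey of converse techniques, only proves the converse direction of this theorem; your achievability sketch (phase-1 common randomness $Y^{n_1}\sim W(\cdot|x^*)^{n_1}$ shared via feedback, followed by a hashed verification sent with an ordinary transmission code, which requires $C(W)>0$) is the correct Ahlswede--Dueck scheme, though it remains a sketch.

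There is, however, a genuine gap in your converse at the concentration step. You claim that $H(P_i)\le nH^*$ implies $P_i(T_i)\ge 1-o(1)$ for $T_i=\{y^n: P_i(y^n)\ge 2^{-n(H^*+\delta)}\}$. An entropy upper bound does not imply that the mass carried by low-probability atoms vanishes: a distribution putting mass $\tfrac12$ on a single atom and spreading the remaining $\tfrac12$ uniformly over $|\mathcal Y|^n$ atoms has entropy about $\tfrac n2\log|\mathcal Y|$ yet assigns mass $\tfrac12$ to atoms of probability about $2^{-n\log|\mathcal Y|}$. Markov's inequality applied to $-\log P_i(Y^n)$ only yields $P_i(T_i^c)\le H^*/(H^*+\delta)$, a constant that does not vanish unless $\delta\to\infty$, which destroys the rate bound; your distinctness argument needs $1-\lambda-P_i(T_i^c)>\lambda$ for every $\lambda<\tfrac12$, so a constant defect is fatal. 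What is actually required --- and what the paper supplies --- is a concentration inequality for $-\log P_i(Y^n)=\sum_{t=1}^n Y_t^*$ with $Y_t^*=-\log W(Y_t|f_i^t(Y^{t-1}))$. Because the inputs are chosen adaptively through feedback, the terms are neither independent nor identically distributed, so one introduces the martingale differences $U_t=Y_t^*-\mathbb E[Y_t^*|Y^{t-1}]$, bounds their second moments by a universal constant $\beta$ depending only on $|\mathcal Y|$, and applies Chebyshev's inequality to obtain $\mathbb P\bigl(\sum_{t=1}^n Y_t^*>nH^*+\alpha\sqrt n\bigr)\le\nu$ with $\alpha=\sqrt{\beta/\nu}$; this makes $P_i(T_i^c)\le\nu$ with $\nu$ arbitrarily small (at the price of replacing your $n\delta$ slack by $\alpha\sqrt n$, which is harmless). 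With this repaired, the rest of your argument --- distinctness of $\mathcal D_i\cap T_i$, the subset count, and $\tfrac1n\log\log N\le H^*+o(1)$ --- goes through exactly as in the paper.
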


\begin{theorem}[\cite{ahlswede_feedback}]
\label{thm:randomized_encoder}
 Let $C(W)$ be the Shannon capacity of the DMC $W$. Then the  randomized IDF capacity $C_{rIDF}(W)$ of the DMC $W$ is given by   
    \begin{equation}
        C_{rIDF} = 
        \begin{cases}
            C(W), &\text{ if } C=0 \text{ or } W \text{ is noiseless} \\
            \max\limits_{P \in \mathcal{P}(\mathcal{X})} H(P \cdot W), &\text{otherwise},
        \end{cases}
    \end{equation}
    where $P\cdot W(y)=\sum_{x\in\mathcal{X}}P(x)W(y|x)$ for all $y\in\mathcal{Y}$.
\end{theorem}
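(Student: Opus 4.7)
The plan is to dispatch the two degenerate cases quickly and then establish the nontrivial formula $C_{rIDF}(W)=\max_{P\in\mathcal{P}(\mathcal{X})}H(P\cdot W)$ by matching achievability and converse bounds. When $C(W)=0$ the channel is useless and both sides vanish; when $W$ is noiseless the feedback carries no new information to the encoder (it already knows $Y^n=X^n$), so randomized IDF collapses to ordinary randomized ID, whose capacity equals $C(W)=\log|\mathcal{X}|=\max_P H(P\cdot W)$. For the interesting case $C(W)>0$ with $W$ noisy, the core idea is that noiseless feedback converts each channel use into a piece of common randomness distributed as $Y$, whose entropy is at most $\max_P H(P\cdot W)$, and that common randomness is almost as good as transmission for the purposes of randomized identification.

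\textbf{Achievability.} I would fix $P^{\ast}\in\arg\max_P H(P\cdot W)$ and split the block into phases of lengths $n_1=n-\sqrt{n}$ and $n_2=\sqrt{n}$. In phase one the encoder ignores the message and transmits i.i.d.\ samples $X^{n_1}\sim (P^{\ast})^{n_1}$; the receiver observes $Y^{n_1}$, and through noiseless feedback the encoder sees it as well, so the two parties share a common random variable whose normalized entropy concentrates at $H(P^{\ast}\cdot W)$. In phase two, where $C(W)>0$ guarantees a usable transmission subcode, the encoder uses the shared $Y^{n_1}$ to apply a random hash (in the spirit of the direct part of Theorem \ref{thm:double_exponent_coding_theorem}) on the message set and transmits the short hash value through the remaining $n_2$ channel uses. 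A direct computation, as in \cite{ahlswede_feedback}, shows that with $N\approx 2^{2^{n_1 H(P^{\ast}\cdot W)}}$ both kinds of error tend to zero.

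\textbf{Converse.} For any $(n,N,\lambda)$ randomized IDF code, the output process $Y^n$ induced by sending message $i$ under $Q_F(\cdot|i)$ satisfies, by the chain rule and the fact that the input $X_t$ at time $t$ is a (possibly random) function of the message and $Y^{t-1}$,
\begin{equation}
H(Y^n \mid i) \;\le\; \sum_{t=1}^{n} H(Y_t \mid Y^{t-1},i) \;\le\; n\max_P H(P\cdot W).
\end{equation}
I would then invoke the Ahlswede--Dueck elimination technique to replace the randomized encoder by a mixture over a polynomially small pool of deterministic feedback strategies, at negligible cost in rate. Packing pairwise-$\lambda$-separated decoding sets into the essentially typical output region of size $2^{n H(P^{\ast}\cdot W)}$ then yields $\log\log N \le n H(P^{\ast}\cdot W)+o(n)$, matching the achievability bound.

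\textbf{Main obstacle.} The hard part is the converse, and within it the argument that adaptive input selection through feedback cannot manufacture more than $H(P^{\ast}\cdot W)$ bits of fresh common randomness per channel use. One must combine the per-letter entropy bound above with a robust discretization of the random feedback strategies and a large-deviation argument on empirical output types to obtain the matching log-log bound. In contrast to the deterministic IDF converse (Theorem \ref{thm:deterministic_encoder}), where the input letter is fixed and only a single output entropy $H(W(\cdot|x))$ appears, the encoder's freedom to randomize its inputs here forces the maximization over $P$ in the formula, and it is precisely the elimination step that converts this freedom into the tight bound $\max_P H(P\cdot W)$.
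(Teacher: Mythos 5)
Your overall shape is right, but the converse --- the only direction this survey actually proves, following Ahlswede and Dueck's adaptation of Wolfowitz's method --- has a genuine gap, and one of your tools is a detour. The chain-rule bound $H(Y^n\mid i)\le n\max_P H(P\cdot W)$ is an \emph{average} statement; it does not by itself produce a high-probability output set of size roughly $2^{nH(P^{\ast}\cdot W)}$ into which the decoding sets can be packed. Moreover, because the input at time $t$ is chosen adaptively from $Y^{t-1}$, the outputs are neither i.i.d.\ nor stationary, so a ``large-deviation argument on empirical output types'' does not directly apply --- there is no single type class to concentrate on. The paper closes exactly this gap with the Wolfowitz-style martingale argument: it sets $Y_t'=-\log P_Y(Y_t\mid f_i^t(Y^{t-1}))$, shows $\mathbb{E}[Y_t'\mid Y^{t-1}]\le H(P^{\ast}W)$ pointwise in $Y^{t-1}$, forms the uncorrelated differences $U_t'=Y_t'-\mathbb{E}[Y_t'\mid Y^{t-1}]$, and applies Chebyshev's inequality to obtain a set $\mathcal{E}'\subset\mathcal{Y}^n$ with $W^n(\mathcal{E}'\mid f)\ge 1-\nu$ and $|\mathcal{E}'|\le K=2^{nH(P^{\ast}W)+\alpha\sqrt{n}}$. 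This pointwise conditional bound plus concentration is the step your sketch is missing.

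Second, the Ahlswede--Dueck elimination technique (discretizing the randomized feedback strategies into a small pool of deterministic ones) is not needed and is not what the paper does. The counting argument never enumerates encoders: it enumerates the sets $\mathcal{D}_i\cap\mathcal{E}_i\subseteq\mathcal{Y}^n$ of cardinality at most $K$, of which there are at most $(|\mathcal{Y}|^n)^K=2^{nK\log|\mathcal{Y}|}$, and their pairwise distinctness is forced by the two error probabilities (if $\mathcal{D}_i\cap\mathcal{E}_i=\mathcal{D}_j\cap\mathcal{E}_j$ one derives $1-\nu<\varepsilon+\delta$, a contradiction). The randomization is absorbed directly into the mixture output distribution $P_Y$, whose per-letter conditional entropy is bounded by $\max_P H(P\cdot W)$; this is precisely where the maximization over $P$ enters, with no reduction to deterministic strategies. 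Your achievability sketch (common-randomness generation via feedback followed by hashing over a short transmission phase) matches the original reference, but the survey does not prove that direction, so it cannot be checked against the paper.
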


\section{Strong Converse for Identification via Channels}
\label{id}
In this section, we will introduce the strong converse for the identification capacity based on channel resolvability. We will skip the first formulation by Han and Verdú \cite{han_verdu}, which is also presented in detail in \cite{han}. We will focus on two newer and related formulations of the converse proof. \mmchange{First, we will briefly introduce the concepts of channel resolvability and partial channel resolvability in Section \ref{sec:channel_resolvability}. In Section \ref{sec:hayashi-converse}, we present the converse based on channel resolvability} due to Hayashi \cite{hayashi} and Oohama \cite{oohama}. They proved the strong converse, but it only holds for channels where the condition $\overline{I}(X;Y) = \underline{I}(X;Y)$ \mmchange{is fulfilled}. \mmchange{The proof we present in Section \ref{sec:converse_technique_by_watanabe}} is due to Watanabe \cite{watanabe}, who was able to strengthen the proof and apply it to general channels without the strong converse condition. 

\subsection{Channel Resolvability}
\label{sec:channel_resolvability}
The idea of channel resolvability first emerged for the converse of identification via channels \cite{han_paper} and was later studied on its own \cite{han_verdu}. A fundamental concept for channel resolvability are $M$-type distributions $\tilde{P}$.

\begin{definition}
    For a positive integer $M$, a probability distribution $\tilde{P}$ is called \emph{$M$-type} if
    \begin{align}
        \tilde{P}(x) \in \left\{0, \frac{1}{M},\frac{2}{M},...,1\right\} \qquad \text{for all } x \in \mathcal{X}.
    \end{align}
\end{definition}
The number of $M$-type distributions is upper bounded by $|\mathcal{X}|^M$ \cite{han_verdu}.

In channel resolvability, the output distribution $PW$ of a channel $W$ induced by an input distribution $P$ should be approximated by the output distribution $\tilde{P}W$ induced by the $M$-type $\tilde{P}$ such that the variational distance between the two is bounded within an allowed approximation error $\zeta$ satisfying $0 \leq \zeta < 1$:
\begin{align}
    d(\tilde{P}W, PW) \leq \zeta. \label{eq:resolvability-condition}
\end{align}
The goal is to choose $M$ as small as possible such that \eqref{eq:resolvability-condition} is still fulfilled.

The requirement \eqref{eq:resolvability-condition} can be relaxed such that only a part of the output distribution must be accurately appoximated, which leads to the concept of \textit{partial channel resolvability} \cite{steinberg}. For any subset $\mathcal{S} \subset \mathcal{X} \times \mathcal{Y}$ denote with
\begin{align}
    PW^{\mathcal{S}}(y) := \sum_{x \in \mathcal{X}} P(x)W(y|x) \mathds{1}[(x,y) \in \mathcal{S}] \label{eq:partial_response}
\end{align}
the \textit{partial response $W$ due to $P$ on $\mathcal{S}$} \cite{steinberg}. An important property which we will use several times is that, for arbitrary $P$, $W$, and $\mathcal{S}$, it holds that
\begin{align}
    d(PW, PW^{\mathcal{S}}) &= \frac{1}{2}\left|\sum_{y \in \mathcal{Y}}\sum_{x \in \mathcal{X}} P(x)W(y|x)- P(x)W(y|x) \mathds{1}[(x,y) \in \mathcal{S}]\right| \notag\\
    &= \frac{1}{2}\sum_{y \in \mathcal{Y}}\sum_{x \in \mathcal{X}} P(x)W(y|x) \mathds{1}[(x,y) \in \mathcal{S}^c] \notag \\
    &= \frac{1}{2}\sum_{y \in \mathcal{Y}}PW^{\mathcal{S}^c}(y) \label{eq:var-distance-partial-response-1}\\
    &= \frac{1}{2}P \times W(\mathcal{S}^c), \label{eq:var-distance-partial-response-2}
\end{align}

The goal of partial channel resolvability is to approximate the output distribution $PW^{\mathcal{S}}$ for a given set $\mathcal{S}$ by using an $M$-type input:
\begin{align}
    d(\tilde{P}W^{\mathcal{S}}, PW^{\mathcal{S}}) \leq \zeta. \label{eq:partial-resolvability-condition}
\end{align}
Since we will use the resolvability only as an auxiliary tool to prove the converse of the identification capacity, we will not go into details with definitions of \textit{achievable resolution rates} or the \textit{resolvability}, which is the smallest achievable resolvability rate. For further details, see \cite{han_verdu}, \cite{han}.

\subsection{Converse based on Channel Resolvability}\label{sec:hayashi-converse}
The resolvability-based converse proof consists of two main steps. The first step is to find a bound for the channel resolvability problem \eqref{eq:resolvability-condition}, i.e., to find a connection between the $M$-type size $M$ and the variational distance between the output distributions. This is done in the so-called \textit{soft covering lemma} (Lemma \ref{lem:soft_covering_lemma_hayashi}). In a second step, we will connect the channel resolvability problem to the problem of identification via channels and use the soft covering lemma together with a type-counting argument to prove a converse bound for the identification via channels problem.

\begin{lemma}[Soft Covering Lemma \cite{hayashi},\cite{oohama}]
\label{lem:soft_covering_lemma_hayashi}
    Let $\gamma$ be any real constant and define the set
    \begin{equation}
        \mathcal{T} = \hayashiset = \left\{ (x,y) \in \mathcal{X} \times\mathcal{Y} | \log\frac{W(y|x)}{PW(y)} \leq \gamma \right\}.\label{eq:definition_T}
    \end{equation}
    For a given input $P$, there exists an $M$-type input $\Tilde{P}$ such that
    \begin{equation}
    \label{eq:soft_covering_hayashi}
        d(\Tilde{P}W, PW) \leq P\times W(\mathcal{T}^c) + \frac{1}{2} \sqrt{\frac{e^\gamma}{M}}.
    \end{equation}
\end{lemma}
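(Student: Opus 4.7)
The plan is a random coding argument in the spirit of Wyner's soft covering, combined with the partial-response identity \eqref{eq:var-distance-partial-response-2} that the paper has already recorded. I draw $M$ i.i.d. samples $X_1,\ldots,X_M \sim P$ and define the empirical distribution $\tilde{P}(x) = \frac{1}{M}\sum_{i=1}^{M}\mathds{1}[X_i = x]$, which is automatically an $M$-type. It then suffices to bound the expectation $\mathbb{E}[d(\tilde{P}W, PW)]$ by the right-hand side of \eqref{eq:soft_covering_hayashi}, since at least one realization of $\tilde{P}$ must attain that bound.

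First I would insert the partial responses $\tilde{P}W^{\mathcal{T}}$ and $PW^{\mathcal{T}}$ via the triangle inequality:
\begin{equation*}
d(\tilde{P}W, PW) \le d(\tilde{P}W, \tilde{P}W^{\mathcal{T}}) + d(\tilde{P}W^{\mathcal{T}}, PW^{\mathcal{T}}) + d(PW^{\mathcal{T}}, PW).
\end{equation*}
By \eqref{eq:var-distance-partial-response-2}, the outer two terms equal $\tfrac{1}{2}\tilde{P}\times W(\mathcal{T}^c)$ and $\tfrac{1}{2}P\times W(\mathcal{T}^c)$. Since $\mathbb{E}[\tilde{P}(x)] = P(x)$, after taking expectations they combine to exactly $P\times W(\mathcal{T}^c)$, which is the first summand appearing in the claim.

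The middle term is the heart of the argument. The random variable $\tilde{P}W^{\mathcal{T}}(y) = \frac{1}{M}\sum_{i=1}^{M} W(y|X_i)\mathds{1}[(X_i,y) \in \mathcal{T}]$ is an average of $M$ i.i.d.\ summands with mean $PW^{\mathcal{T}}(y)$. Jensen's inequality and the i.i.d.\ variance formula then give
\begin{equation*}
\mathbb{E}\bigl|\tilde{P}W^{\mathcal{T}}(y) - PW^{\mathcal{T}}(y)\bigr| \le \sqrt{\tfrac{1}{M}\,\mathbb{E}\bigl[W(y|X_1)^2 \mathds{1}[(X_1,y)\in\mathcal{T}]\bigr]}.
\end{equation*}
Now the defining inequality of $\mathcal{T}$ in \eqref{eq:definition_T}, namely $W(y|x) \le e^{\gamma}PW(y)$ on $\mathcal{T}$, is used to bound \emph{one} of the two factors $W(y|X_1)$ by $e^{\gamma}PW(y)$, leaving the other factor to be absorbed into $PW^{\mathcal{T}}(y)$. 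The expectation under the root becomes at most $e^{\gamma}PW(y)\,PW^{\mathcal{T}}(y)$. Summing over $y$ and applying Cauchy--Schwarz against $\sum_y PW(y) \le 1$ and $\sum_y PW^{\mathcal{T}}(y) \le 1$ yields
\begin{equation*}
\mathbb{E}\bigl[d(\tilde{P}W^{\mathcal{T}}, PW^{\mathcal{T}})\bigr] \le \tfrac{1}{2}\sqrt{\tfrac{e^{\gamma}}{M}}\,\sum_y \sqrt{PW(y)}\sqrt{PW^{\mathcal{T}}(y)} \le \tfrac{1}{2}\sqrt{\tfrac{e^{\gamma}}{M}}.
\end{equation*}

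Adding the three contributions gives the claimed bound on the expectation, and the existence of a good $M$-type $\tilde{P}$ follows immediately. The main obstacle is really the middle step: one has to preserve the indicator $\mathds{1}[(x,y)\in\mathcal{T}]$ through the variance computation so that the pointwise ratio bound supplied by $\mathcal{T}$ can legitimately be invoked on one copy of $W(y|x)$; everything else is just triangle inequality, linearity of expectation, and a single application of Cauchy--Schwarz.
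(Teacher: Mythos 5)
Your proposal is correct and follows essentially the same route as the paper's proof: i.i.d.\ random coding to generate the $M$-type, the triangle-inequality decomposition through the partial responses $\tilde{P}W^{\mathcal{T}}$ and $PW^{\mathcal{T}}$ whose outer terms average to $P\times W(\mathcal{T}^c)$, and a second-moment bound on the middle term that invokes $W(y|x)\le e^{\gamma}PW(y)$ on $\mathcal{T}$ exactly once. The only (immaterial) difference is that you apply Jensen pointwise in $y$ and defer Cauchy--Schwarz to the final sum $\sum_y\sqrt{PW(y)}\sqrt{PW^{\mathcal{T}}(y)}\le 1$, whereas the paper applies Jensen to the full sum and then Cauchy--Schwarz with weights $PW^{\mathcal{T}}(y)$; both yield $\tfrac{1}{2}\sqrt{e^{\gamma}/M}$.
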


In the following, we will typically use the shorthand notation $\mathcal{T}$ to refer to the set $\hayashiset$, unless the dependency on $P$ and $\gamma$ mus be explicitly stated. The following formulation of the proof follows Oohama \cite{oohama}.
\begin{proof}
    For the proof, we use a random coding argument. Consider a codebook $\mathcal{C} = \{X_1, ..., X_M\}$, where each codeword is randomly generated with distribution $P$. We now define the $M$-type $\tilde{P}$ as
    \begin{align}
        \tilde{P}(x) = \frac{1}{M}\sum_{i = 1}^M \mathds{1}[X_i = x].
    \end{align}
    Therefore, we have
    \begin{align}
        \tilde{P}W(y) &= \sum_{x\in\mathcal{X}^n}\tilde{P}(x)W(y|x) = \sum_{x\in\mathcal{X}^n}\frac{1}{M}\sum_{i = 1}^M \mathds{1}[X_i = x]W(y|x)
    \end{align}
    The proof is structured in three steps.
    \begin{enumerate}
        \item \textit{Decomposing the variational distance} $d(\Tilde{P}W, PW)$: We use the partial response of the channel on the set $\mathcal{T}$ and the triangular inequality to obtain
        \begin{align}
            d(\Tilde{P}W, PW) &{\leq} d(\tilde{P}W, \tilde{P}W^{\mathcal{T}}) + d(\tilde{P}W^{\mathcal{T}}, PW^{\mathcal{T}}) +  d(PW^{\mathcal{T}}, PW) \notag\\
            &\stackrel{\eqref{eq:var-distance-partial-response-1}}{=} \frac{1}{2}\sum_{y \in \mathcal{Y}}\left[PW^{\mathcal{T}^c}(y) + \tilde{P}W^{\mathcal{T}^c}(y)\right] + d(\tilde{P}W^{\mathcal{T}}, PW^{\mathcal{T}}).\label{eq:proof-scl-hayashi-decompose-vd}
        \end{align}
    
        \item \textit{Random coding argument:} We now take the expectation of \eqref{eq:proof-scl-hayashi-decompose-vd} over the codebook $\mathcal{C}$. If the expectation can be bounded by some value, we know that there must exist at least one codebook where the same bound on the variational distance must hold. We note that 
        \begin{align}
        \mathbb{E}[\Tilde{P}W^\mathcal{T}(y)] &= \sum\limits_{X_1 \in \mathcal{X}^n} P(X_1) \cdots\sum\limits_{X_M \in \mathcal{X}^n} P(X_M) \sum_{x\in\mathcal{X}^n}\frac{1}{M}\sum_{i = 1}^M \mathds{1}[X_i = x]W(y|x) \notag\\
        &= \sum\limits_{i=1}^{M} \frac{1}{M} \sum\limits_{x\in \mathcal{X}^n} P(x)W^\mathcal{T}(y|x) \notag\\
        &= PW^\mathcal{T}(y).\label{eq:proof_hayashi_2}
    \end{align}
    Taking the expectation on \eqref{eq:proof-scl-hayashi-decompose-vd} and the linearity of the expectation yields
    \begin{align}
        \mathbb{E}[d(\Tilde{P}W,PW)] \leq& \frac{1}{2} \sum\limits_{y\in \mathcal{Y}^n} \underbrace{\mathbb{E}[\Tilde{P}W^{\mathcal{T}^c}(y)]}_{= PW^{\mathcal{T}^c}(y)} + \underbrace{\mathbb{E}[PW^{\mathcal{T}^c}(y)]}_{= PW^{\mathcal{T}^c}(y)} +\mathbb{E}\left[d(\tilde{P}W^{\mathcal{T}}, PW^{\mathcal{T}})\right]. \notag\\
        =& \sum\limits_{y\in \mathcal{Y}^n} PW^{\mathcal{T}^c}(y) + \mathbb{E}\left[d(\tilde{P}W^{\mathcal{T}}, PW^{\mathcal{T}})\right]\notag\\
        =& P\times W(\mathcal{T}^c)+ \mathbb{E}\left[d(\tilde{P}W^{\mathcal{T}}, PW^{\mathcal{T}})\right]\label{eq:proof-scl-hayashi-random-coding}
    \end{align}
    \item \textit{Upper bounding the remaining variational distance:} We now aim to find an upper bound for the second term in \eqref{eq:proof-scl-hayashi-random-coding}. 
    \begin{align}
        \mathbb{E}\left[ d(\tilde{P}W^{\mathcal{T}}, PW^{\mathcal{T}})\right] &=  \sqrt{\mathbb{E}\left[\frac{1}{2} \sum\limits_{y\in \text{supp}(PW^\mathcal{T})}|\Tilde{P}W^\mathcal{T}(y)-PW^\mathcal{T}(y)|\right]^2} \notag\\
        &\leq \sqrt{\mathbb{E}\left[\left(\frac{1}{2} \sum\limits_{y\in \text{supp}(PW^\mathcal{T})}|\Tilde{P}W^\mathcal{T}(y)-PW^\mathcal{T}(y)|\right)^2\right]} \label{eq:proof-scl-hayashi-vd-bound-variance-jensen} \\
        &= \sqrt{\frac{1}{4}\mathbb{E}\left[\left(\sum\limits_{y\in \text{supp}(PW^\mathcal{T})}\mkern-18mu \sqrt{PW^{\mathcal{T}}(y)}\frac{|\Tilde{P}W^\mathcal{T}(y)-PW^\mathcal{T}(y)|}{\sqrt{PW^{\mathcal{T}}(y)}}\right)^2\right]} \notag\\
        &\leq \sqrt{\frac{1}{4}\mathbb{E}\left[\underbrace{\sum\limits_{y\in \text{supp}(PW^\mathcal{T})}\mkern-18mu PW^{\mathcal{T}}(y)}_{=1} \mkern-18mu \sum\limits_{y\in \text{supp}(PW^\mathcal{T})} \mkern-18mu \frac{|\Tilde{P}W^\mathcal{T}(y)-PW^\mathcal{T}(y)|^2}{PW^{\mathcal{T}}(y)}\right]} \label{eq:proof-scl-hayashi-vd-bound-variance-csi}\\
        &=  \sqrt{\frac{1}{4}\sum\limits_{y\in \text{supp}(PW^\mathcal{T})}\frac{\text{Var}\left[\Tilde{P}W^\mathcal{T}(y)\right]}{PW^{\mathcal{T}}(y)}} \label{eq:proof-scl-hayashi-vd-bound-variance}
    \end{align}
    The inequality \eqref{eq:proof-scl-hayashi-vd-bound-variance-jensen} follows from Jensen's inequality and \eqref{eq:proof-scl-hayashi-vd-bound-variance-csi} follows from the Cauchy-Schwarz inequality.
    \item \textit{Upper bounding the variance:} By definition of the variance, we have
    \begin{equation}
        \text{Var}[\Tilde{P}W^\mathcal{T}(y)] = \mathbb{E}[\Tilde{P}W^\mathcal{T}(y)^2]-\Big(\mathbb{E}[\Tilde{P}W^\mathcal{T}(y)]\Big)^2. \label{eq:proof-scl-hayashi-variance-def}
    \end{equation}
    Therefore we consider
    \begin{align}
        \Tilde{P}W^\mathcal{T}(y)^2 &= \left(\sum\limits_{i=1}^{M}\frac{1}{M} W^\mathcal{T}(y|X_i)\right)^2 \notag \\
        &= \frac{1}{M^2} \sum\limits_{i=1}^{M} W^\mathcal{T}(y|X_i)^2 + \frac{1}{M^2} \sum\limits_{\substack{i,j=1\\i\neq j}}^{M} W^\mathcal{T}(y|X_i) W^\mathcal{T}(y|X_j). 
    \end{align}
    Now we take the expectation over the codebook $\{X_1, ..., X_M\}$ on both sides. Note that 
    \begin{align}
        \mathbb{E}\left[W^\mathcal{T}(y|X_i)\right] = PW^\mathcal{T}(y).
    \end{align}
    and 
    \begin{align}
        \mathbb{E}[W^\mathcal{T}(y|X_i) W^\mathcal{T}(y|X_j)] &= \mathbb{E}[W^\mathcal{T}(y|X_i) ]\mathbb{E}[W^\mathcal{T}(y|X_j)]\\
        &= PW^\mathcal{T}(y)^2.
    \end{align}
    Thus, we get
    \begin{align}
        \mathbb{E}[\big(\Tilde{P}W^\mathcal{T}(y)\big)^2] &=  \frac{1}{M^2} \sum\limits_{i=1}^{M} \mathbb{E}[W^\mathcal{T}(y|X_i)^2] + \frac{1}{M^2} \sum\limits_{\substack{i,j=1\\i\neq j}}^{M} \mathbb{E}[W^\mathcal{T}(y|X_i) W^\mathcal{T}(y|X_j)] \notag \\
        & = \frac{1}{M^{2}} \cdot M \sum\limits_{x} P(x) W^\mathcal{T}(y|x)^2 + \frac{1}{M^2} \cdot M\cdot(M-1) PW^\mathcal{T}(y)^2 \notag\\
        &\leq \frac{1}{M}  \sum\limits_{x} P(x) W^\mathcal{T}(y|x)^2 + PW^\mathcal{T}(y)^2 \label{eq:proof-scl-hayashi-second-moment}
    \end{align}
    Inserting \eqref{eq:proof-scl-hayashi-second-moment} and \eqref{eq:proof_hayashi_2} into \eqref{eq:proof-scl-hayashi-variance-def}, we obtain
    \begin{align}
        \text{Var}[\Tilde{P}W^\mathcal{T}(y)] &\leq \frac{1}{M}  \sum\limits_{x} P(x) W^\mathcal{T}(y|x)^2 + PW^\mathcal{T}(y)^2 - PW^\mathcal{T}(y)^2 \notag\\
        &= \frac{1}{M}  \sum\limits_{x} P(x) W^\mathcal{T}(y|x)^2 \notag\\
        &\stackrel{\eqref{eq:partial_response}, \eqref{eq:definition_T}}{=} \frac{1}{M}  \sum\limits_{x} P(x) W(y|x)^2 \cdot \mathds{1}\left[\log\frac{W(y|x)}{PW(y)} \leq \gamma \right] \notag\\
        &\leq \frac{1}{M}  \sum\limits_{x} P(x) W(y|x) PW(y) \cdot e^{\gamma} \cdot \mathds{1}\left[\log\frac{W(y|x)}{PW(y)} \leq \gamma \right] \label{eq:proof-scl-hayashi-variance-bound}
    \end{align}
    \item \textit{Completion:} We can now insert the upper bound on the variance \eqref{eq:proof-scl-hayashi-variance-bound} into \eqref{eq:proof-scl-hayashi-vd-bound-variance}:
    \begin{align}
        \mathbb{E}\left[d(\tilde{P}W^{\mathcal{T}}, PW^{\mathcal{T}})\right] 
        &\leq \sqrt{\frac{1}{4}\sum\limits_{y\in \mathcal{Y}^n}\frac{\frac{1}{M}  \sum\limits_{x\in\mathcal{X}^n} P(x) W(y|x) PW^\mathcal{T}(y) \cdot e^{\gamma}}{PW^{\mathcal{T}}(y)}} \notag\\
        &= \sqrt{\frac{1}{4M}\cdot e^{\gamma}\sum\limits_{y\in \mathcal{Y}^n}\sum\limits_{x\in\mathcal{X}^n}   P(x) W(y|x)} \notag\\
        &= \frac{1}{2}\sqrt{\frac{e^{\gamma}}{M}} 
    \end{align}
    Together with \eqref{eq:proof-scl-hayashi-random-coding}, we obtain
    \begin{align}
       \mathbb{E}[d(\Tilde{P}W,PW)] &\leq P\times W(\mathcal{T}^c) + \frac{1}{2} \sqrt{\frac{e^\gamma}{M}}. 
    \end{align}
    \end{enumerate}
\end{proof} 

Lemma \ref{lem:soft_covering_lemma_hayashi} can be used to prove the achievability part of channel resolvability: it can be shown that, as long as $\frac{\log M}{n}$ is larger than the capacity, the expression on the right hand side vanishes asymptotically for a proper choice of $\gamma$. We are, however, interested in the application of Lemma \ref{lem:soft_covering_lemma_hayashi} to identification via channels. The next lemma will create that connection. The fundamental idea is that identification codes whose codeword distributions are $M$-types are limited in size by the total number of $M$-types. We then use Lemma \ref{lem:soft_covering_lemma_hayashi} to approximate arbitrary codeword distributions by $M$-types, resulting in a bound for arbitrary ID-codes.

\begin{lemma}[{\cite[Lemma 3]{hayashi}}]
\label{lem:id-error-bound-hayashi}
    Let $\mathcal{T}$ be defined as in \eqref{eq:definition_T} for any $\gamma\in\mathbb{R}$. For an integer $M$, any ($N$, $\epsilon$, $\delta$)-ID-code with $N > |\mathcal{X}|^M$ must satisfy
    \begin{equation}
        \varepsilon+\delta\geq \inf\limits_{P} \left\{1-2 P\times W(\mathcal{T}^c)\right\} -\sqrt{\frac{e^\gamma}{M}}.
    \end{equation}
\end{lemma}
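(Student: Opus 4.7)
The plan is to combine the soft covering lemma (Lemma~\ref{lem:soft_covering_lemma_hayashi}) with a pigeonhole argument on the set of $M$-type input distributions. For each codeword distribution $Q_i$, $i = 1, \ldots, N$, the soft covering lemma furnishes an $M$-type $\tilde{Q}_i$ whose induced output distribution $\tilde{Q}_i W$ is close to $Q_i W$ in total variation, with error controlled as in \eqref{eq:soft_covering_hayashi} by the partial-response mass $Q_i \times W(\mathcal{T}_{Q_i}(\gamma)^c)$ and the universal term $\tfrac{1}{2}\sqrt{e^\gamma/M}$. Since every $M$-type on $\mathcal{X}$ arises as the empirical distribution of some sequence in $\mathcal{X}^M$, there are at most $|\mathcal{X}|^M$ distinct $M$-types, so the hypothesis $N > |\mathcal{X}|^M$ forces a collision: there exist indices $i \neq j$ with $\tilde{Q}_i = \tilde{Q}_j$.

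Exploiting such a colliding pair, the ID-code error conditions \eqref{idcodeerrbound1} and \eqref{idcodeerrbound2} give $Q_i W(\mathcal{D}_i) \geq 1 - \varepsilon$ and $Q_j W(\mathcal{D}_i) \leq \delta$, so
\begin{equation*}
1 - \varepsilon - \delta \;\leq\; Q_i W(\mathcal{D}_i) - Q_j W(\mathcal{D}_i).
\end{equation*}
Adding and subtracting $\tilde{Q}_i W(\mathcal{D}_i)$ and $\tilde{Q}_j W(\mathcal{D}_i)$, and using $\tilde{Q}_i W = \tilde{Q}_j W$ to eliminate the middle term, the total-variation characterization \eqref{eq:total-variation-distance} yields
\begin{equation*}
1 - \varepsilon - \delta \;\leq\; d(Q_i W, \tilde{Q}_i W) + d(Q_j W, \tilde{Q}_j W).
\end{equation*}

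Applying the soft covering bound to each of the two distances and rearranging produces
\begin{equation*}
\varepsilon + \delta \;\geq\; 1 - Q_i \times W(\mathcal{T}_{Q_i}(\gamma)^c) - Q_j \times W(\mathcal{T}_{Q_j}(\gamma)^c) - \sqrt{e^\gamma/M},
\end{equation*}
after which bounding each partial-response mass by $\sup_P P \times W(\mathcal{T}_P(\gamma)^c)$ converts the right-hand side into $\inf_P\{1 - 2 P \times W(\mathcal{T}^c)\} - \sqrt{e^\gamma/M}$, as claimed. The step that I expect to require the most care is this final passage: the concrete bound in hand is pinned to the two specific colliding distributions $Q_i$ and $Q_j$, and the dependence of the set $\mathcal{T}_{Q_i}(\gamma)$ on $Q_i$ (through the normalization $Q_i W$ in the log-likelihood ratio) makes it non-obvious how to repackage the two contributions into a single infimum over $P$. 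The crude inequality $Q_i \times W(\mathcal{T}_{Q_i}^c) + Q_j \times W(\mathcal{T}_{Q_j}^c) \leq 2\sup_P P \times W(\mathcal{T}_P^c)$ is what makes the factor of $2$ in the statement unavoidable, while the pigeonhole step and the triangle-inequality manipulation are routine once the soft covering lemma is taken for granted.
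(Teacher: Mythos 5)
Your proposal is correct and follows essentially the same route as the paper's proof: soft covering to replace each codeword distribution by an $M$-type, a pigeonhole collision forced by $N > |\mathcal{X}|^M$, a triangle-inequality decomposition around the colliding pair, and the final crude bound of the two input-dependent partial-response masses by a single supremum, which is exactly where the factor of $2$ enters in the paper as well. Your rewriting via adding and subtracting $\tilde{Q}_i W(\mathcal{D}_i)$ on the fixed set $\mathcal{D}_i$ is just the triangle inequality for the total-variation characterization \eqref{eq:total-variation-distance}, so there is no substantive difference.
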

\begin{proof}
Conisder a given ($N$, $\epsilon$, $\delta$)-ID code $\{(P_i, \mathcal{D}_i)\}_{i=1}^N$.
\begin{enumerate}
    \item \textit{Relation between the variational distance and the error probabilities}: 
    Note that $P_iW(\mathcal{D}_i) = 1- P_iW(\mathcal{D}_i^c)$, where $P_iW(\mathcal{D}_i^c)$ denotes the type-I error probability and $P_jW(\mathcal{D}_i)$ denotes the type-II error probability for $i\neq j$. Therefore, $P_iW(\mathcal{D}_i) \geq 1-\varepsilon$ and $P_jW(\mathcal{D}_i)\leq \delta$ must hold. We can now connect the error probabilities of the ID code to the variational distance of the output distributions of different inputs:
\begin{align}
    d(P_iW,P_jW) &\stackrel{\eqref{eq:variational-distance}}{=} \sup_{\mathcal{A} \subseteq \mathcal{Y}} |P_iW(\mathcal{A}) - P_jW(\mathcal{A})|\notag\\
    &\geq P_iW(\mathcal{D}_i) - P_jW(\mathcal{D}_i) \notag\\
    &\geq 1-\varepsilon - \delta \label{eq:hayashi-vd-error-bound}
\end{align}
    for every $i \neq j$. 
    \item \textit{Replacing the input by $M$-types.} We now replace the input $P_i$ for each message $i$ by an $M$-type distribution $\tilde{P}_i$. Lemma \ref{lem:soft_covering_lemma_hayashi} guarantees that we can find an $M$-type distribution such that the variational distance of the output distributions is bounded by 
    \begin{equation}
        d(\Tilde{P}_iW, P_iW) \leq P\times W(\mathcal{T}_{P_{i}}(\gamma)^c) + \frac{1}{2} \sqrt{\frac{e^\gamma}{M}}. \label{eq:hayashi-vd-codeword-bound}
    \end{equation}
    \item \textit{Non-distinctness of inputs.} \label{step:proof-hayashi-non-distinct} Notice that the number of distinct $M$-types is upper bounded by $|\mathcal{X}|^M$. However, by assumption, we have $N \geq |\mathcal{X}|^M$, and therefore, there must exist a pair $i'$ and $j'$ such that $\tilde{P}_{i'} = \tilde{P}_{j'}$. For this pair, it immediately follows that also
    \begin{align}
        d(\Tilde{P}_{i'}W,\Tilde{P}_{j'}W) = 0\label{eq:hayashi-vd-output-same-input}
    \end{align}
    must hold.
    \item \textit{Estimation of the variational distance with the triangular equality}:
    We will now use \eqref{eq:hayashi-vd-codeword-bound} to find an upper bound on $d(P_{i'}W,P_{j'}W)$ for the pair $(i', j')$ from step \ref{step:proof-hayashi-non-distinct}. To this end, we apply the triangular inequality:
\begin{align}
    d(P_{i'}W,P_{j'}W) &\leq d(P_{i'}W,\Tilde{P}_{i'}W) + d(\Tilde{P}_{i'}W,\Tilde{P}_{j'}W)+ d(\Tilde{P}_{j'}W,P_{j'}W) \notag \\
    &\stackrel{\eqref{eq:hayashi-vd-codeword-bound}, \eqref{eq:hayashi-vd-output-same-input}}{\leq}  P\times W(\mathcal{T}_{P_{i'}}(\gamma)^c) +  P\times W(\mathcal{T}_{P_{j'}}(\gamma)^c) + \sqrt{\frac{e^\gamma}{M}}\notag\\
    &\leq \sup_P \left\{2P\times W(\mathcal{T}_{P}(\gamma)^c)\right\} + \sqrt{\frac{e^\gamma}{M}}.  \label{eq:hayashi-vd-different-output-prime}
\end{align}
    \item \textit{Completion.} Since \eqref{eq:hayashi-vd-error-bound} holds for every pair $(i,j)$, it must also hold for the specific pair $(i', j')$. Combining \eqref{eq:hayashi-vd-error-bound} and \eqref{eq:hayashi-vd-different-output-prime} results in
\begin{align}
    1-\varepsilon-\delta &\leq d(P_{i'}W,P_{j'}W) \\
    &\leq \sup_P \left\{2P\times W(\mathcal{T}_{P}(\gamma)^c)\right\} + \sqrt{\frac{e^\gamma}{M}}.
\end{align}
By rearranging the terms, we obtain
\begin{align}
    \varepsilon+\delta \geq \inf\limits_{P} \left\{1-2 P\times W(\mathcal{T}_{P}(\gamma)^c)\right\} -\sqrt{\frac{e^\gamma}{M}}.
\end{align}
\end{enumerate}
\end{proof}
\begin{remark}
    The structure of the converse proof is very similar to the original converse proof by Han and Verdú, which can be seen from the formulation of that proof in \cite[Lemma 6.4.1]{han}. The main difference is the usage of Lemma \ref{lem:soft_covering_lemma_hayashi}, which gives a sharper bound on the variational distance than \cite[Lemma 6.3.1]{han}. 
\end{remark}
In the first proof step of Lemma \ref{lem:id-error-bound-hayashi}, \eqref{eq:hayashi-vd-error-bound} imposes a lower bound on the minimum distance of the output distributions of an ID code. The crucial step in the proof is the observation that the total number of $M$-types is bounded, and in consequence, not all $M$-type inputs can be distinct. The minimum distance of the output distributions can therefore only be fulfilled through the approximation error, which follows from Lemma \ref{lem:soft_covering_lemma_hayashi}. If this approximation error is small, it immediately follows that the sum of the error probabilities of the ID code $\epsilon+\delta$ must be large.

As a consequence of Lemma \ref{lem:id-error-bound-hayashi}, it is possible to derive the following upper bound for the ID capacity. \mmchange{We refer the reader to \cite{oohama} for a proof based on the tools presented in this Section.}

\begin{theorem}[{\cite{han_verdu} \cite[Theorem 4]{oohama}}] 
    \label{thm:id_capacity_converse_sup}
    For a sequence $\mathrm{\mathbf{W}} = (W^n)_{n\in\mathbb{N}}$ of general channels, we have
    \begin{equation}
        C_\mathrm{ID}(\mathrm{\mathbf{W}}) \leq \sup_{\mathrm{\mathbf{X}}} \overline{I}(\mathrm{\mathbf{X}};\mathrm{\mathbf{Y}}).
    \end{equation}
\end{theorem}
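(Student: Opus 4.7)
The approach is a proof by contradiction that mirrors the structure of Lemma \ref{lem:id-error-bound-hayashi}: I would take a hypothetical identification code of rate exceeding $C^{*} := \sup_{\mathbf{X}}\overline{I}(\mathbf{X};\mathbf{Y})$, plug it into the non-asymptotic bound of the lemma, and force the right-hand side up to $1$, contradicting the vanishing of the error probabilities. Concretely, suppose $R > C^{*}$ is achievable. By Definition \ref{def:e_d_achievable} with $\varepsilon = \delta = 0$ there exists a sequence of $(N_n,\varepsilon_n,\delta_n)$-ID-codes for $W^n$ with $\varepsilon_n,\delta_n \to 0$ and $\log\log N_n \geq nR$ for all large $n$. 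Fix real numbers $\alpha, \rho$ with $C^{*} < \alpha < \rho < R$, and apply Lemma \ref{lem:id-error-bound-hayashi} to the channel $W^n$ with $\gamma_n := n\alpha$ and $M_n := \lceil e^{n\rho}\rceil$. The cardinality hypothesis $N_n > |\mathcal{X}^n|^{M_n}$ holds for large $n$ because $\log N_n \geq e^{nR}$ dominates $n M_n \log|\mathcal{X}|$, and the lemma gives
\begin{equation}
    \varepsilon_n + \delta_n \;\geq\; 1 - 2\sup_{P \in \mathcal{P}(\mathcal{X}^n)} P \times W^n\!\left(\mathcal{T}_P(\gamma_n)^c\right) \;-\; \sqrt{\frac{e^{n\alpha}}{e^{n\rho}}}.
\end{equation}
The last term is $e^{n(\alpha-\rho)/2}$ and vanishes since $\alpha < \rho$.

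The main obstacle is controlling the supremum $\sup_{P} P \times W^n(\mathcal{T}_P(n\alpha)^c)$, because $P$ is allowed to depend on $n$ and $\overline{I}$ is defined only for a single input \emph{process}. The key is a diagonalization: for each $n$, pick $P^{*n} \in \mathcal{P}(\mathcal{X}^n)$ within $1/n$ of the supremum, and assemble them into the input process $\mathbf{X}^{*} = \{X^{*n}\}_{n \geq 1}$ with output process $\mathbf{Y}^{*}$. Since $\mathbf{X}^{*}$ is admissible, the definition of $C^{*}$ yields $\overline{I}(\mathbf{X}^{*};\mathbf{Y}^{*}) \leq C^{*} < \alpha$. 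Because the set of $a$'s defining $\overline{I}$ is upward closed, any $\alpha$ strictly above $\overline{I}(\mathbf{X}^{*};\mathbf{Y}^{*})$ belongs to it, so
\begin{equation}
    \lim_{n\to\infty}\mathbb{P}\!\left(\tfrac{1}{n}\log\tfrac{W^n(Y^{*n}|X^{*n})}{P_{Y^{*n}}(Y^{*n})} \geq \alpha\right) \;=\; 0,
\end{equation}
which by definition of $\mathcal{T}_P$ is precisely $P^{*n} \times W^n(\mathcal{T}_{P^{*n}}(n\alpha)^c) \to 0$. By the near-optimality of $P^{*n}$ the full supremum vanishes as well.

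Inserting both asymptotics into the bound gives $\liminf_{n\to\infty}(\varepsilon_n + \delta_n) \geq 1$, contradicting $\varepsilon_n, \delta_n \to 0$. Hence no $R > C^{*}$ is achievable and $C_{\mathrm{ID}}(\mathbf{W}) \leq \sup_{\mathbf{X}}\overline{I}(\mathbf{X};\mathbf{Y})$. The delicate step is exactly the per-$n$ to process reduction: this is where the converse differs from the version in Section \ref{sec:hayashi-converse}, since we cannot appeal to the strong-converse condition $\overline{I} = \underline{I}$ and must extract information about the worst-case $P$ at each block length through the defining inequality of $\overline{I}$ only.
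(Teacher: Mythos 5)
Your proof is correct and is essentially the argument the paper intends: the paper itself omits the proof of Theorem~\ref{thm:id_capacity_converse_sup} and defers to Oohama, noting only that it is ``based on the tools presented in this Section,'' and your combination of Lemma~\ref{lem:id-error-bound-hayashi} with the choices $\gamma_n = n\alpha$, $M_n = \lceil e^{n\rho}\rceil$ for $C^{*} < \alpha < \rho < R$, followed by the per-blocklength diagonalization over near-optimal input distributions to control $\sup_P P\times W^n(\mathcal{T}_P(n\alpha)^c)$ via the definition of $\overline{I}$, is exactly that standard route. The only cosmetic point is that the cardinality check $N_n > |\mathcal{X}^n|^{M_n}$ should invoke some $R'$ with $\rho < R' < R$, since $\liminf_n \tfrac{1}{n}\log\log N_n \geq R$ only guarantees $\log\log N_n \geq nR'$ for large $n$; this does not affect the argument.
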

\mmchange{When comparing Theorem \ref{thm:id_capacity_converse_sup} to the achievability bound in Theorem \ref{thm:id-capacity-achievable-general}, we observe that the two bounds only match under the condition}
\begin{align}
    \sup_{\mathrm{\mathbf{X}}} \underline{I}(\mathrm{\mathbf{X}};\mathrm{\mathbf{Y}}) = \sup_{\mathrm{\mathbf{X}}} \overline{I}(\mathrm{\mathbf{X}};\mathrm{\mathbf{Y}}).\label{eq:strong-converse-assumption}
\end{align}
\mmchange{This is equivalent to the \textit{strong converse property} (see \cite{han}, Definition 3.5.1 and Theorem 3.5.1).} In the next section, we will present a similar approach to the converse of the ID coding theorem, that will enable us to derive a converse theorem without the assumption \eqref{eq:strong-converse-assumption}.

\subsection{Converse based on Partial Channel Resolvability}
\label{sec:converse_technique_by_watanabe}

In this proof we will use the idea of \textit{partial channel resolvability}. As in the channel ressovability problem, we replace the input distribution $P$ by an $M$-type input distribution $\Tilde{P}$. In partial channel resolvability, however, the goal is that only the \textit{partial responses} \eqref{eq:partial_response} for some given set $\mathcal{S}$ are close to each other:
\begin{align}
    d(\tilde{P}W^{\mathcal{S}}, PW^{\mathcal{S}}) \leq \zeta.
\end{align}
The set $\mathcal{S}$ can be chosen, e.g. to be the set $\hayashiset$ in \eqref{eq:definition_T}. However, to obtain a stronger result, \cite{watanabe} follows \cite{zhang_covert_2021-1}, where the set in question is defined as 
\begin{equation}
    \mathcal{S} = \watanabeset := \left\{(x,y) \in \mathcal{X}\times \mathcal{Y}: \, \log\frac{W(y|x)}{Q(y)}\leq\gamma\right\}, \label{eq:s_truncated_channel}
\end{equation}
where $Q$ is an arbitrary output distribution. In the following, we will just use the notation $\mathcal{S}$ to refer to the set \eqref{eq:s_truncated_channel} unless the dependency on $Q$ and $\gamma$ must be made explicit. The difference of \eqref{eq:s_truncated_channel} to the set $\hayashiset$ in \eqref{eq:definition_T} is that an \textit{auxiliary output distribution} $Q$ is used in the denominator instead of using the channel output distribution $PW(y)$.  The following modification of the soft covering lemma for partial channel resolvability with the set \eqref{eq:s_truncated_channel} was proved as an intermediate result in \cite{zhang_covert_2021-1} and explicitly stated in \cite{watanabe}.

\begin{lemma}[Soft Covering Lemma with auxiliary output distribution {\cite[Lemma 10]{zhang_covert_2021-1}, \cite[Lemma 3]{watanabe}}]
    \label{lem:soft-covering-lemma-watanabe}
    Let the set $\mathcal{S}$ be given by \eqref{eq:s_truncated_channel} with an arbitrary auxiliary output distribution $Q$ and $\gamma\in\mathbb{R}$. For a given $P$, there exists an $M$-type $\Tilde{P}$ such that 
    \begin{equation}
        d(\Tilde{P}W^\mathcal{S},PW^\mathcal{S}) \leq \frac{1}{2} \sqrt{\frac{e^\gamma}{M}}.
    \end{equation}
\end{lemma}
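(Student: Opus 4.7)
The plan is to follow the same random-coding skeleton as in the proof of Lemma~\ref{lem:soft_covering_lemma_hayashi}, while exploiting the fact that the defining inequality of $\mathcal{S}$ now involves the auxiliary distribution $Q$ instead of $PW$. I would draw $X_1,\ldots,X_M$ i.i.d.\ from $P$ and set $\tilde{P}(x) = \frac{1}{M}\sum_{i=1}^{M} \mathds{1}[X_i = x]$. Just as in the proof of Lemma~\ref{lem:soft_covering_lemma_hayashi}, one immediately obtains $\mathbb{E}[\tilde{P}W^{\mathcal{S}}(y)] = PW^{\mathcal{S}}(y)$, so that the deviation $\tilde{P}W^{\mathcal{S}}(y)-PW^{\mathcal{S}}(y)$ has mean zero and its expected square equals $\text{Var}[\tilde{P}W^{\mathcal{S}}(y)]$.

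The crucial departure from Lemma~\ref{lem:soft_covering_lemma_hayashi} lies in the Cauchy--Schwarz step: rather than weighting by $\sqrt{PW^{\mathcal{T}}(y)}$, I would weight by $\sqrt{Q(y)}$ (which still normalises to one). Combining Jensen's inequality applied to the square root with this Cauchy--Schwarz step produces
\begin{equation}
    \mathbb{E}\bigl[d(\tilde{P}W^{\mathcal{S}}, PW^{\mathcal{S}})\bigr] \leq \frac{1}{2}\sqrt{\sum_{y}\frac{\text{Var}[\tilde{P}W^{\mathcal{S}}(y)]}{Q(y)}},
\end{equation}
where the sum is effectively restricted to $\text{supp}(Q)$, since $Q(y)=0$ forces $W^{\mathcal{S}}(y|x)=0$ through the definition \eqref{eq:s_truncated_channel}. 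The standard i.i.d.\ variance estimate, identical to that in the proof of Lemma~\ref{lem:soft_covering_lemma_hayashi}, then yields $\text{Var}[\tilde{P}W^{\mathcal{S}}(y)] \leq \frac{1}{M}\sum_{x}P(x)W^{\mathcal{S}}(y|x)^{2}$.

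To close the argument I would invoke the defining inequality of $\mathcal{S}$, namely $W(y|x)\leq e^{\gamma}Q(y)$ whenever $(x,y)\in\mathcal{S}$, which gives the pointwise bound $W^{\mathcal{S}}(y|x)^{2}\leq e^{\gamma}Q(y)W(y|x)$. Substituting yields
\begin{equation}
    \sum_{y}\frac{\text{Var}[\tilde{P}W^{\mathcal{S}}(y)]}{Q(y)} \leq \frac{e^{\gamma}}{M}\sum_{x,y}P(x)W(y|x) = \frac{e^{\gamma}}{M},
\end{equation}
so that $\mathbb{E}[d(\tilde{P}W^{\mathcal{S}},PW^{\mathcal{S}})] \leq \tfrac{1}{2}\sqrt{e^{\gamma}/M}$. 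A standard probabilistic-method extraction then yields a specific realisation of $\{X_1,\ldots,X_M\}$, i.e.\ an $M$-type $\tilde{P}$, attaining the bound.

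The main subtlety, and the reason this formulation is cleaner than Lemma~\ref{lem:soft_covering_lemma_hayashi}, is the precise matching between the weight $Q(y)$ chosen in the Cauchy--Schwarz step and the $Q(y)$ appearing in the denominator inside the definition of $\mathcal{S}$: it is exactly this matching that cancels the factor $Q(y)$ against itself and eliminates the otherwise unavoidable $P\times W(\mathcal{S}^{c})$ term. Any alternative choice of weight would reintroduce such a term; conversely, replacing $\mathcal{S}$ with the Hayashi set $\mathcal{T}$ would make the cancellation impossible. This is precisely why the auxiliary distribution $Q$ is introduced in \eqref{eq:s_truncated_channel} and why the resulting partial-resolvability lemma is the appropriate tool for the stronger converse in Section~\ref{sec:converse_technique_by_watanabe}.
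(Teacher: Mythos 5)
Your proposal is correct and follows essentially the same route as the paper's proof: random coding with an i.i.d.\ codebook defining the $M$-type, Jensen plus Cauchy--Schwarz weighted by $\sqrt{Q(y)}$, the i.i.d.\ variance bound, and the pointwise inequality $W(y|x)\leq e^{\gamma}Q(y)$ on $\mathcal{S}$ to cancel the $Q(y)$ in the denominator. Your closing observation about why the weight must match the denominator of $\mathcal{S}$ is exactly the point the paper makes when contrasting this lemma with Lemma~\ref{lem:soft_covering_lemma_hayashi}.
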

The proof of Lemma \ref{lem:soft-covering-lemma-watanabe} is very similar to the proof of Lemma \ref{lem:soft_covering_lemma_hayashi}. The two major differences are
\begin{enumerate}
    \item Lemma \ref{lem:soft_covering_lemma_hayashi} is for channel resolvability, whereas Lemma \ref{lem:soft-covering-lemma-watanabe} is for partial channel resolvability. The version of Lemma \ref{lem:soft-covering-lemma-watanabe} for channel resolvability can be found in \cite[Lemma 10]{zhang_covert_2021-1}.
    \item Lemma \ref{lem:soft_covering_lemma_hayashi} uses the set $\mathcal{T}$ with the true output distribution in the denominator, while Lemma \ref{lem:soft-covering-lemma-watanabe} uses the set $\mathcal{S}$ with the auxiliary output distribution $Q$.
\end{enumerate}
The fact that we are now only interested in partial channel resolvability means that we can skip the first two steps in the proof of Lemma \ref{lem:soft_covering_lemma_hayashi}, which result in the term $P\times W(\mathcal{T}^c)$. In the following proof, we briefly show the main steps and highlight the similarities and differences to the proof of Lemma \ref{lem:soft_covering_lemma_hayashi}.
\begin{proof}
    \begin{enumerate}
    \item \textit{Random Coding:} As in Lemma \ref{lem:soft_covering_lemma_hayashi}, we use random coding to establish that, if $\mathbb{E}_{\mathcal{C}}\left[d(\Tilde{P}W^\mathcal{S},PW^\mathcal{S}) \right]$ can be bounded, then there must exist at least one codebook for which the same bound holds.
        \item \textit{Upper bounding the variational distance:} This step corresponds to step 3 in the proof of Lemma \ref{lem:soft_covering_lemma_hayashi}. We use Jensen's inequality and the Cauchy-Schwarz inequality in the same manner to obtain
    \begin{align}
        \mathbb{E}\left[d(\Tilde{P}W^\mathcal{S},PW^\mathcal{S}) \right]=\mathbb{E}&\left[ \frac{1}{2} \sum\limits_{y} |\Tilde{P}W^\mathcal{S}(y)-PW^\mathcal{S}(y)|\right] \notag \\
        &\leq  \sqrt{\frac{1}{4}\sum\limits_{y\in \mathcal{Y}^n}\frac{\text{Var}\left[\Tilde{P}W^\mathcal{T}(y)\right]}{Q(y)}}. \label{eq:proof-scl-watanabe-vd-bound-variance}
    \end{align}
    \item \textit{Upper bounding the variance:} We can re-use the result from \eqref{eq:proof-scl-hayashi-variance-bound} and just have to modify the last step, where we have to replace $PW(y)$ by $Q(y)$ due to the different definitions of $\mathcal{T}$ and $\mathcal{S}$:
    \begin{align}
        \text{Var}[\Tilde{P}W^\mathcal{S}(y)] &\leq \frac{1}{M}  \sum\limits_{x} P(x) W(y|x) Q(y) \cdot e^{\gamma} \cdot \mathds{1}\left[\log\frac{W(y|x)}{Q(y)} \leq \gamma \right]. \label{eq:proof-scl-watanabe-variance-bound}
    \end{align}
    \item \textit{Completion:} We can now insert the upper bound on the variance \eqref{eq:proof-scl-watanabe-variance-bound} into \eqref{eq:proof-scl-watanabe-vd-bound-variance}:
    \begin{align}
        \mathbb{E}\left[d(\Tilde{P}W^\mathcal{S},PW^\mathcal{S}) \right] 
        &\leq \sqrt{\frac{1}{4}\sum\limits_{y\in \mathcal{Y}^n}\frac{\frac{1}{M}  \sum\limits_{x\in\mathcal{X}^n} P(x) W(y|x) Q(y) \cdot e^{\gamma}\cdot \mathds{1}\left[\log\frac{W(y|x)}{Q(y)} \leq \gamma \right]}{Q(y)}} \notag\\
        &= \sqrt{\frac{1}{4M}\cdot e^{\gamma}\sum\limits_{y\in \mathcal{Y}^n}\sum\limits_{x\in\mathcal{X}^n}   P(x) W(y|x)\cdot \mathds{1}\left[\log\frac{W(y|x)}{Q(y)} \leq \gamma \right]} \notag\\
        &\leq \sqrt{\frac{1}{4M}\cdot e^{\gamma}\sum\limits_{y\in \mathcal{Y}^n}\sum\limits_{x\in\mathcal{X}^n}   P(x) W(y|x)} \notag\\
        &= \frac{1}{2}\sqrt{\frac{e^{\gamma}}{M}}.
    \end{align}
    \end{enumerate}
\end{proof}

The introduction of the auxiliary output distribution $Q$ enables a strengthening of Lemma \ref{lem:id-error-bound-hayashi}, which is stated in the following.

\begin{lemma}[{\cite[Theorem 1]{watanabe}}]
    \label{lem:id-error-bound-watanabe}
    Let $Q$ be an arbitrarily given output distribution, $\gamma \in \mathbb{R}$ and $\mathcal{S}$ be defined as in (\ref{eq:s_truncated_channel}). Then for an arbitrary integer M, any $(N, \varepsilon, \delta)$-ID-code with $N>|\mathcal{X}|^M$ must satisfy
    \begin{equation}
        \varepsilon + \delta \geq \inf_{P} P\times W(\mathcal{S}) - \sqrt{\frac{e^\gamma}{M}}. 
    \end{equation}
\end{lemma}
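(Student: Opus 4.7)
The strategy is to emulate the architecture of the proof of Lemma \ref{lem:id-error-bound-hayashi}, but with the partial-resolvability soft-covering bound of Lemma \ref{lem:soft-covering-lemma-watanabe} in place of Lemma \ref{lem:soft_covering_lemma_hayashi}, and with the whole analysis carried out on the partial responses $P_iW^{\mathcal{S}}$ rather than on the full output distributions $P_iW$. Starting from an arbitrary $(N,\varepsilon,\delta)$-ID-code $\{(P_i,\mathcal{D}_i)\}_{i=1}^{N}$, I would first invoke Lemma \ref{lem:soft-covering-lemma-watanabe} with the common set $\mathcal{S}=\watanabeset$ to produce, for every $i$, an $M$-type $\tilde{P}_i$ satisfying $d(\tilde{P}_iW^{\mathcal{S}},P_iW^{\mathcal{S}}) \le \tfrac{1}{2}\sqrt{e^{\gamma}/M}$. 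Because the number of distinct $M$-types is at most $|\mathcal{X}|^M$ and $N>|\mathcal{X}|^M$ by hypothesis, the pigeonhole principle furnishes a pair of indices $i'\neq j'$ with $\tilde{P}_{i'}=\tilde{P}_{j'}$, and the triangle inequality applied to $d(P_{i'}W^{\mathcal{S}},P_{j'}W^{\mathcal{S}})$ via this common $M$-type partial response yields
\[
d\bigl(P_{i'}W^{\mathcal{S}},P_{j'}W^{\mathcal{S}}\bigr) \le \sqrt{e^{\gamma}/M}.
\]

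The complementary step is to produce a matching lower bound on the same quantity in terms of $\varepsilon$, $\delta$ and $P\times W(\mathcal{S})$. The two ID-code constraints transfer to the partial responses through the simple estimates $P_{i'}W^{\mathcal{S}}(\mathcal{D}_{i'}) \ge P_{i'}W(\mathcal{D}_{i'})-P_{i'}\times W(\mathcal{S}^c) \ge 1-\varepsilon-P_{i'}\times W(\mathcal{S}^c)$ and $P_{j'}W^{\mathcal{S}}(\mathcal{D}_{i'}) \le P_{j'}W(\mathcal{D}_{i'}) \le\delta$, both of which follow from the fact that replacing $W$ by $W^{\mathcal{S}}$ can only subtract mass, and that the subtracted mass on any set is bounded by $P\times W(\mathcal{S}^c)$.

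The principal obstacle is that $P_iW^{\mathcal{S}}$ is only a sub-probability measure, so the identity $d(P,Q)=\sup_{\mathcal{A}}|P(\mathcal{A})-Q(\mathcal{A})|$ used in the analogous step of Lemma \ref{lem:id-error-bound-hayashi} no longer applies verbatim. Decomposing $\sum_{y}|P(y)-Q(y)|$ into positive and negative parts produces the sharper substitute
\[
d(P,Q) \ge \bigl(P(\mathcal{A})-Q(\mathcal{A})\bigr) - \tfrac{1}{2}\bigl(P(\mathcal{Y})-Q(\mathcal{Y})\bigr),
\]
valid for any non-negative measures $P$ and $Q$. Applied to $P=P_{i'}W^{\mathcal{S}}$, $Q=P_{j'}W^{\mathcal{S}}$, $\mathcal{A}=\mathcal{D}_{i'}$, the correction term $P_{i'}\times W(\mathcal{S})-P_{j'}\times W(\mathcal{S})$ exactly cancels the asymmetry introduced by keeping $P_{i'}\times W(\mathcal{S}^c)$ only on the first index; after using $P_i\times W(\mathcal{S})+P_i\times W(\mathcal{S}^c)=1$, this telescoping yields
\[
d\bigl(P_{i'}W^{\mathcal{S}},P_{j'}W^{\mathcal{S}}\bigr) \ge \tfrac{1}{2}\bigl(P_{i'}\times W(\mathcal{S})+P_{j'}\times W(\mathcal{S})\bigr)-\varepsilon-\delta.
\]
Chaining this with the earlier upper bound and lower-bounding the average on the right-hand side by $\inf_{P}P\times W(\mathcal{S})$ gives the claimed inequality.
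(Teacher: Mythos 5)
Your proposal is correct, and it arrives at exactly the same intermediate quantity $\tfrac{1}{2}\bigl(P_{i'}\times W(\mathcal{S})+P_{j'}\times W(\mathcal{S})\bigr)-\sqrt{e^\gamma/M}$ as the paper, but by a genuinely different arrangement of the two halves of the argument. The paper keeps the error-probability lower bound on the \emph{full} output distributions, $d(P_{i'}W,P_{j'}W)\ge 1-\varepsilon-\delta$ (where the standard identity \eqref{eq:total-variation-distance} for probability measures applies), and then upper-bounds $d(P_{i'}W,P_{j'}W)$ by the five-term chain $P_{i'}W\to P_{i'}W^{\mathcal{S}}\to\tilde{P}_{i'}W^{\mathcal{S}}=\tilde{P}_{j'}W^{\mathcal{S}}\to P_{j'}W^{\mathcal{S}}\to P_{j'}W$, picking up each $\tfrac{1}{2}P\times W(\mathcal{S}^c)$ term once via \eqref{eq:var-distance-partial-response-2}. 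You instead work entirely on the partial responses: the upper bound $d(P_{i'}W^{\mathcal{S}},P_{j'}W^{\mathcal{S}})\le\sqrt{e^\gamma/M}$ is immediate from the triangle inequality through the common $M$-type, but the lower bound then requires transferring the ID-code constraints to sub-probability measures, for which the identity $d(P,Q)=\sup_{\mathcal{A}}|P(\mathcal{A})-Q(\mathcal{A})|$ fails and must be replaced by your corrected inequality $d(P,Q)\ge\bigl(P(\mathcal{A})-Q(\mathcal{A})\bigr)-\tfrac{1}{2}\bigl(P(\mathcal{Y})-Q(\mathcal{Y})\bigr)$, which is indeed valid for non-negative measures (it follows from bounding $P(\mathcal{A})-Q(\mathcal{A})$ by the positive part of $P-Q$). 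I verified that the normalization correction cancels as you claim, so the telescoping is sound. The paper's route is slightly more economical in that it needs no auxiliary fact beyond the triangle inequality and \eqref{eq:var-distance-partial-response-2}; your route isolates more cleanly where the sub-probability nature of the partial response enters, at the cost of proving and invoking the extra measure-theoretic inequality. Either way the supremum/infimum step and the final rearrangement are identical.
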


\begin{proof}
Conisder a given ($N$, $\epsilon$, $\delta$)-ID code $\{(P_i, \mathcal{D}_i)\}_{i=1}^N$.
\begin{enumerate}
    \item \textit{Relation between the variational distance and the error probabilities}: 
    Just as in the proof of Lemma \ref{lem:id-error-bound-hayashi}, we start with the observation that the variational distance between output distributions of two different identifiers $i \neq j$ is bounded by
\begin{equation}
    d(P_iW,P_jW) \geq P_iW(\mathcal{D}_i) - P_jW(\mathcal{D}_i) \geq 1-\varepsilon - \delta. \label{eq:watanabe-vd-error-bound}
\end{equation}
    \item \textit{Replacing the input by $M$-types.} We now replace the input $P_i$ for each message $i$ by an $M$-type distribution $\tilde{P}_i$. Lemma \ref{lem:soft-covering-lemma-watanabe} guarantees that we can find an $M$-type distribution such that the variational distance of the partial channel responses is bounded by 
    \begin{equation}
        d(\Tilde{P}_iW^{\mathcal{S}}, P_iW^{\mathcal{S}}) \leq \frac{1}{2} \sqrt{\frac{e^\gamma}{M}}. \label{eq:watanabe-vd-codeword-bound}
    \end{equation}
    \item \textit{Non-distinctness of inputs.} \label{step:proof-watanabe-non-distinct} Notice that the number of distinct $M$-types is upper bounded by $|\mathcal{X}|^M$. However, by assumption, we have $N \geq |\mathcal{X}|^M$, and therefore, there must exist a pair $i'$ and $j'$ such that $\tilde{P}_{i'} = \tilde{P}_{j'}$. For this pair, it follows that also
    \begin{align}
        d(\Tilde{P}_{i'}W^{\mathcal{S}},\Tilde{P}_{j'}W^{\mathcal{S}}) = 0\label{eq:watanabe-vd-output-same-input}
    \end{align}
    must hold since the set $\mathcal{S} = \watanabeset$ is independent of the channel input distribution.
    \item \textit{Estimation of the variational distance with the triangular equality}:
    We will now use \eqref{eq:watanabe-vd-codeword-bound} to find an upper bound on $d(P_{i'}W,P_{j'}W)$ for the pair $(i', j')$ from step \ref{step:proof-watanabe-non-distinct}. Compared to the proof of Lemma \eqref{lem:id-error-bound-hayashi}, we now use the triangular inequality differently to include the restriction on the set $\mathcal{S}$.
 \begin{align}
        \label{eq:d_piw_pjw_leq_sup}
        d(P_{i'}W,P_{j'}W) {\leq}& d(P_{i'}W, \Tilde{P}_{i'}W^{\mathcal{S}}) + \underbrace{d(\Tilde{P}_{i'}W^{\mathcal{S}}, \Tilde{P}_{j'}W^{\mathcal{S}})}_{=0, \text{ since } \Tilde{P}_{i'} = \Tilde{P}_{j'}} + d(\Tilde{P}_{j'}W^{\mathcal{S}}, P_{j'}W) \notag \\
        {\leq}& \underbrace{d(P_{i'}W, P_{i'}W^{\mathcal{S}})}_{= \frac{1}{2} P_{i'}\times W(\mathcal{S}^c) \text{ by } \eqref{eq:var-distance-partial-response-2}} + \underbrace{d(P_{i'}W^{\mathcal{S}}, \Tilde{P}_{i'}W^{\mathcal{S}})}_{\leq\frac{1}{2}\sqrt{\frac{e^\gamma}{M}} \text{ by Lemma } \ref{lem:soft-covering-lemma-watanabe}} \notag\\
        &+ \underbrace{d(\Tilde{P}_{j'}W^{\mathcal{S}}, P_{j'}W^{\mathcal{S}})}_{\leq\frac{1}{2}\sqrt{\frac{e^\gamma}{M}} \text{ by Lemma } \ref{lem:soft-covering-lemma-watanabe}} + \underbrace{d(P_{j'}W^{\mathcal{S}}, P_{j'}W)}_{= \frac{1}{2} P_{j'}\times W(\mathcal{S}^c) \text{ by } \eqref{eq:var-distance-partial-response-2}} \notag \\
        {\leq}& \frac{1}{2}(P_{i'}\times W(\mathcal{S}^c) + P_{j'}\times W(\mathcal{S}^c)) + \sqrt{\frac{e^\gamma}{M}} \notag \\
        \leq& \sup_{P} P\times W(\mathcal{S}^c) + \sqrt{\frac{e^\gamma}{M}}.
    \end{align}
    \item \textit{Completion.} Since \eqref{eq:watanabe-vd-error-bound} holds for every pair $(i,j)$, it must also hold for the specific pair $(i', j')$. Combining \eqref{eq:watanabe-vd-error-bound} and \eqref{eq:d_piw_pjw_leq_sup} leads to
    \begin{align}
        &1-\varepsilon-\delta \leq d(P_iW, P_jW) \leq \sup_{P} P\times W(\mathcal{S}^c) + \sqrt{\frac{e^\gamma}{M}}\notag \\
        \Leftrightarrow&\quad 1 - \sup_{P} P\times W(\mathcal{S}^c) - \sqrt{\frac{e^\gamma}{M}} \leq \varepsilon + \delta \notag \\
        \Leftrightarrow&\quad \inf_{P} P\times W(\mathcal{S}) - \sqrt{\frac{e^\gamma}{M}} \leq \varepsilon + \delta. 
    \end{align}
\end{enumerate}
\end{proof}

\begin{remark}
    The crucial difference to the proof of Lemma \ref{lem:id-error-bound-hayashi} is the way the variational distance is decomposed: while \eqref{eq:hayashi-vd-different-output-prime} makes use of the channel resolvability by splitting the variational distance into 
    \begin{align*}
        d(P_{i'}W,P_{j'}W) &{\leq} \underbrace{d(P_{i'}W,\Tilde{P}_{i'}W)}_{\leq P_{i'}\times W(\mathcal{T}^c) + \frac{1}{2}\sqrt{\frac{e^\gamma}{M}}} + \underbrace{d(\Tilde{P}_{i'}W,\Tilde{P}_{j'}W)}_{=0}+ \underbrace{d(P_{j'}W,\Tilde{P}_{j'}W)}_{\leq P_{j'}\times W(\mathcal{T}^c) + \frac{1}{2}\sqrt{\frac{e^\gamma}{M}}},
    \end{align*}
    in \eqref{eq:d_piw_pjw_leq_sup}, the partial channel resolvability is used:
    \begin{align*}
        d(P_{i'}W,P_{j'}W) {\leq}&  \underbrace{d(P_{i'}W, \Tilde{P}_{i'}W^{\mathcal{S}})}_{\leq \frac{1}{2} P_{i'}\times W(\mathcal{S}^c) + \frac{1}{2}\sqrt{\frac{e^\gamma}{M}}} + \underbrace{d(\Tilde{P}_{i'}W^{\mathcal{S}}, \Tilde{P}_{j'}W^{\mathcal{S}})}_{=0} + \underbrace{d(\Tilde{P}_{j'}W^{\mathcal{S}}, P_{j'}W)}_{\leq \frac{1}{2} P_{j'}\times W(\mathcal{S}^c) + \frac{1}{2}\sqrt{\frac{e^\gamma}{M}}}.
    \end{align*}
    In \eqref{eq:hayashi-vd-different-output-prime}, the resulting upper bound is larger than in \eqref{eq:d_piw_pjw_leq_sup} since the terms $\frac{1}{2} P_{i'}\times W(\mathcal{T}^c)$ and $\frac{1}{2} P_{j'}\times W(\mathcal{T}^c)$ are counted twice, respectively. The reason can easily be seen from the subsequent decompositions: while for \eqref{eq:hayashi-vd-different-output-prime}, we use the decomposition \eqref{eq:proof-scl-hayashi-decompose-vd} in combination with a random coding argument
    \begin{align*}
        \mathbb{E}\left[d(\Tilde{P}W, PW)\right] &{\leq} \underbrace{\mathbb{E}\left[d(\tilde{P}W, \tilde{P}W^{\mathcal{T}})\right]}_{= \frac{1}{2} P\times W(\mathcal{T}^c)}  + \underbrace{\mathbb{E}\left[d(\tilde{P}W^{\mathcal{T}}, PW^{\mathcal{T}})\right]}_{\leq \frac{1}{2}\sqrt{\frac{e^\gamma}{M}}} +  \underbrace{\mathbb{E}\left[d(PW^{\mathcal{T}}, PW)\right]}_{= \frac{1}{2} P\times W(\mathcal{T}^c)},
    \end{align*}
    while in \eqref{eq:d_piw_pjw_leq_sup}, we use
    \begin{align*}
        d(PW, \Tilde{P}W^{\mathcal{S}})\leq& \underbrace{d(PW, PW^{\mathcal{S}})}_{= \frac{1}{2} P\times W(\mathcal{S}^c)} + \underbrace{d(PW^{\mathcal{S}}, \Tilde{P}W^{\mathcal{S}})}_{\leq\frac{1}{2}\sqrt{\frac{e^\gamma}{M}} },
    \end{align*}
    with $P \in \{P_{i'}, P_{j'}\}$. Written this way, it is obvious that the improved bound in Lemma \ref{lem:id-error-bound-watanabe} comes from the fact that we do not need the additional variational distance $d(\tilde{P}W, \tilde{P}W^{\mathcal{T}})$ to get from the partial channel response to the channel response. Note that  partial channel resolvability cannot be used with the set $\mathcal{T}$, since in that case, even though $\Tilde{P}_{i'} = \tilde{P}_{j'}$, \eqref{eq:watanabe-vd-codeword-bound} does not hold:
    \begin{align*}
        d&\left(\Tilde{P}_{i'}W^{\mathcal{T}_{P_i}(\gamma)}, \Tilde{P}_{j'}W^{\mathcal{T}_{P_j}(\gamma)}\right) \\
        &= \frac{1}{2}\sum_{y\in \mathcal{Y}^n} \left|\Tilde{P}_{i'}W(y) \mathds{1}\left[\log\frac{W(y|x)}{P_{i'}W(y)} \leq \gamma \right] - \Tilde{P}_{j'}W(y) \mathds{1}\left[\log\frac{W(y|x)}{P_{j'}W(y)} \leq \gamma \right] \right| \\
        &= \frac{1}{2}\sum_{y\in \mathcal{Y}^n} \left|\Tilde{P}_{i'}W(y) \mathds{1}\left[\log\frac{W(y|x)}{P_{i'}W(y)} \leq \gamma \right] - \Tilde{P}_{i'}W(y) \mathds{1}\left[\log\frac{W(y|x)}{P_{j'}W(y)} \leq \gamma \right] \right|,
    \end{align*}
    which is not necessarily equal to zero since $P_{i'} \neq P_{j'}$. Using the auxiliary output distributions $Q$, this problem is avoided by making the set $\mathcal{S}$ independent of the input distribution.
\end{remark}
\begin{table}[ht!]
    \centering
    \caption{Overview of the similarities and differences in the soft covering lemma by Hayashi and by Watanabe}
    \label{tab:overview_hayashi_watanabe}
    \begin{tabular}{|c|c|c|}
        \hline
        & Hayashi \cite{hayashi} & Watanabe \cite{watanabe}\\
        \hline
        \parbox{1.5cm}{Set} &  \parbox{5cm}{\footnotesize{$\mathcal{T} = \Big\{ (x,y) |\log\frac{W(y|x)}{\color{tu5}PW(y)} \leq \gamma \Big\}$}} & \footnotesize{$ \mathcal{S} = \Big\{ (x,y) | \log| \frac{W(y|x)}{\color{tu5}Q(y)} \leq \gamma \Big\}$}  \\
        \hline
        \parbox{1.5cm}{output\\ distribution} & \parbox{5cm}{true output distribution $PW$} & \parbox{5cm}{auxiliary output distribution $Q$} \\
        \hline
        \multirow{2}*{\parbox{1.5cm}{soft\\ covering\\ lemma}} & \parbox{5cm}{\footnotesize{$d(\Tilde{P}W,PW)\leq \color{tu5}P\!\!\times \!\!W(\mathcal{T}^c)\color{black}+ \frac{1}{2}\sqrt{\frac{e^{\gamma}}{M}}$, any $\gamma\in\mathbb{R}$}} & \parbox{5cm}{\footnotesize{$d(\Tilde{P}W^{\color{tu5}\mathcal{S}},PW^{\color{tu5}\mathcal{S}}) \leq \frac{1}{2}\sqrt{\frac{e^\gamma}{M}}$, any $\gamma\in\mathbb{R}$}} \\
        \cline{2-3}
        & \multicolumn{2}{|c|}{\parbox{10cm}{extra term since Hayashi does not restrict the output distribution on the set $\mathcal{T}$}} \\
        \hline
        \parbox{1.5cm}{$\Tilde{P_{i'}} = \Tilde{P}_{j'} \Rightarrow$} & \parbox{5cm}{$\Tilde{P}_iW^\mathcal{T}=\Tilde{P}_jW^\mathcal{T}$ does not necessarily hold, since $\mathcal{T}$ depends on $P$} & \parbox{5cm}{$\Tilde{P}_iW^\mathcal{S}=\Tilde{P}_jW^\mathcal{S}$ does hold, since $\mathcal{S}$ does not depend on $P$}\\
        \hline
        \multirow{2}*{\parbox{1.5cm}{ID error\\ bound}} & \parbox{5cm}{$\varepsilon+\delta\geq \inf\limits_{P} \color{tu5}2\color{black}PW^{\mathcal{T}} - \sqrt{\frac{e^\gamma}{M}}$} & \parbox{5cm}{$\varepsilon+\delta\geq\inf\limits_{P}PW^{\mathcal{S}}-\sqrt{\frac{e^\gamma}{M}}$} \\
        \cline{2-3}
        & \multicolumn{2}{|c|}{\parbox{10cm}{factor 2 results from the usage of channel resolvability or partial channel resolvability}} \\
        \hline
        \end{tabular}
\end{table}

Table \ref{tab:overview_hayashi_watanabe} provides an overview of the differences and similarities in the soft covering lemma and the resulting bound on the error probabilities of an ID code. The differences are marked in blue: the true/ auxiliary output distribution, the additional term in the soft covering lemma and a factor 2 in the following lemma.

As a result from Lemma \ref{lem:id-error-bound-watanabe} we obtain the following corollary for the maximum code size $N^*$ of an identification code. Note that we use the spectral divergence \eqref{eq:epsilon_spectral_inf_divergence} to express he resulting upper bound.
\begin{corollary}[{\cite[Corollary 1]{watanabe}}]
    \label{cor:loglogN_leq_inf_sup}
    For $0\leq\varepsilon, \delta<1$ with $\varepsilon + \delta < 1$ and an arbitrary $0<\eta<1-\varepsilon-\delta$, we have
    \begin{equation}
        \log\log N^* \leq \inf_{Q} \sup_{P} D_{\mathrm{s}}^{\varepsilon+\delta+\eta} (P\times W\|P\times Q) + \log\log|\mathcal{X}| + 2\log\left(\frac{1}{\eta}\right) + 2.
    \end{equation}
\end{corollary}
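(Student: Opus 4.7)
The plan is to apply Lemma~\ref{lem:id-error-bound-watanabe} to the maximum-size ID-code with a carefully chosen $(M,\gamma)$, then translate the resulting bound into the language of the $\varepsilon$-spectral divergence.

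First, fix an arbitrary output distribution $Q$, and let $M$ be the largest positive integer satisfying $|\mathcal{X}|^M < N^*$. By this choice, $|\mathcal{X}|^{M+1} \ge N^*$, hence
\begin{equation}
M+1 \;\geq\; \frac{\log N^*}{\log|\mathcal{X}|},
\qquad\text{so}\qquad
\log M \;\geq\; \log\log N^* - \log\log|\mathcal{X}| - 1.
\end{equation}
Next, tune $\gamma$ so that the slack term in Lemma~\ref{lem:id-error-bound-watanabe} equals exactly~$\eta$, namely
\begin{equation}
\gamma \;:=\; \log(\eta^2 M) \;=\; \log M - 2\log\!\left(\tfrac{1}{\eta}\right),
\end{equation}
which gives $\sqrt{e^\gamma/M}=\eta$. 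Since $N^* > |\mathcal{X}|^M$, an $(N^*,\varepsilon,\delta)$-ID-code exists that fulfils the hypothesis of Lemma~\ref{lem:id-error-bound-watanabe}, and the lemma yields $\inf_P P\times W(\mathcal{S}_Q(\gamma)) \leq \varepsilon + \delta + \eta$.

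Because $P\mapsto P\times W(\mathcal{S}_Q(\gamma))$ is linear on the compact simplex $\mathcal{P}(\mathcal{X})$, the infimum is attained, so there exists $P^\ast$ (depending on $Q$, $\gamma$, $M$) with $P^\ast\times W(\mathcal{S}_Q(\gamma)) \leq \varepsilon+\delta+\eta$. Rewriting via the identity
\begin{equation}
\log\frac{W(y|x)}{Q(y)} \;=\; \log\frac{(P^\ast\times W)(x,y)}{(P^\ast\times Q)(x,y)},
\end{equation}
this is exactly the statement
\begin{equation}
\mathbb{P}_{(X,Y)\sim P^\ast\times W}\!\left[\log\frac{(P^\ast\times W)(X,Y)}{(P^\ast\times Q)(X,Y)} \leq \gamma\right] \;\leq\; \varepsilon+\delta+\eta.
\end{equation}
By the definition of the $\varepsilon$-spectral-inf divergence in \eqref{eq:epsilon_spectral_inf_divergence}, this forces
\begin{equation}
\gamma \;\leq\; D_\mathrm{s}^{\varepsilon+\delta+\eta}(P^\ast\times W \,\|\, P^\ast\times Q) \;\leq\; \sup_{P} D_\mathrm{s}^{\varepsilon+\delta+\eta}(P\times W \,\|\, P\times Q).
\end{equation}

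Finally, since the bound holds for every $Q$, the right-hand side may be replaced by its infimum over $Q$. Substituting $\gamma = \log M - 2\log(1/\eta)$ and using $\log M \geq \log\log N^* - \log\log|\mathcal{X}| - 1$ from the first step gives, after rearrangement,
\begin{equation}
\log\log N^* \;\leq\; \inf_{Q}\sup_{P} D_\mathrm{s}^{\varepsilon+\delta+\eta}(P\times W \,\|\, P\times Q) + \log\log|\mathcal{X}| + 2\log\!\left(\tfrac{1}{\eta}\right) + c,
\end{equation}
for a small additive constant $c$; the stated $c=2$ absorbs the $\lfloor \cdot \rfloor$ loss between $M$ and $\log N^*/\log|\mathcal{X}|$ with room to spare. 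The whole argument is structurally routine once Lemma~\ref{lem:id-error-bound-watanabe} is in hand; the only mildly fiddly part is the bookkeeping between $\log M$ and $\log\log N^*$ that produces the additive constant, so that is the only step requiring any care.
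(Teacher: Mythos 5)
Your proposal is correct and follows essentially the same route as the paper: pick $M$ maximal with $|\mathcal{X}|^M < N^*$, tune $\gamma$ so the slack $\sqrt{e^\gamma/M}$ equals $\eta$, invoke Lemma~\ref{lem:id-error-bound-watanabe}, translate the resulting bound on $\inf_P P\times W(\mathcal{S}_Q(\gamma))$ into a lower bound on $\sup_P D_\mathrm{s}^{\varepsilon+\delta+\eta}(P\times W\|P\times Q)$, and optimize over $Q$. Your explicit attainment of the infimum at some $P^\ast$ (by linearity and compactness) is in fact a slightly more careful justification of the step the paper passes over quickly, and the residual floor-function bookkeeping you flag is the same minor slack present in the paper's own estimate $M \geq \log N/(e^2\log|\mathcal{X}|)$.
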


\begin{proof}
    For an arbitrary $(N,\varepsilon,\delta)$-ID-code, we choose $M$ such that $N = |\mathcal{X}|^M + 1$ in order for $N>|\mathcal{X}|^M$ to be fulfilled. This means
    \begin{equation}
        M := \Bigl\lfloor \frac{\log(N-1)}{\log|\mathcal{X}|} \Bigr\rfloor \geq \frac{\log(N-1)}{e\log|\mathcal{X}|} \geq \frac{\log N}{e^2\log|\mathcal{X}|} .
    \end{equation}
    Now set
    \begin{equation}
        \label{eq:gamma}
        \gamma := 2\log\eta + \log\log N - \log\log |\mathcal{X}| - 2
    \end{equation}
    such that
    \begin{align}
        \sqrt{\frac{e^\gamma}{M}} &\leq \sqrt{\frac{e^{2\log\eta + \log\log N - \log\log |\mathcal{X}| - 2}}{\frac{\log N}{e^2\log|\mathcal{X}|}}}\notag \\ 
        &= \sqrt{\frac{e^{\log\eta^2}e^{\log\log N}\frac{1}{e^{\log\log|\mathcal{X}|}}\frac{1}{e^2}}{\frac{\log N}{e^2\log|\mathcal{X}|}}} \notag \\
        &= \sqrt{\frac{\eta^2 \cancel{\log N} \frac{1}{\cancel{\log|\mathcal{X}|}} \frac{1}{\cancel{e^2}}}{\frac{\cancel{\log N}}{\cancel{e^2 \log|\mathcal{X}|}}}}\notag \\
        &=\eta. 
    \end{align}
    Since $N>|\mathcal{X}|^M$, Lemma \ref{lem:id-error-bound-watanabe} can be applied and we obtain
    \begin{align}
        \inf_{P} P\times W(\watanabeset) &\leq \varepsilon + \delta + \sqrt{\frac{e^\gamma}{M}} \leq \varepsilon + \delta + \eta \label{eq:proof-loglogn-bound1-infpw}
    \end{align}
    for an arbitrary fixed $Q \in \mathcal{P}(\mathcal{Y})$. On the other hand, by the definition of $D_{\mathrm{s}}^{\varepsilon + \delta + \eta}(P\times W\|P\times Q)$ in \eqref{eq:epsilon_spectral_inf_divergence}, we have
    \begin{align}
        D_{\mathrm{s}}^{\varepsilon + \delta + \eta}(P\times W\|P\times Q) &= \sup\Big\{\Tilde{\gamma} \in \mathbb{R}:\mathbb{P}\Big(\log\frac{W(y|x)}{Q(y)}\leq \Tilde{\gamma} \Big) \leq \varepsilon + \delta + \eta \Big\} \notag\\
        &= \sup\left\{\Tilde{\gamma} \in \mathbb{R}: P\times W\left(\mathcal{S}_Q(\tilde{\gamma})\right) \leq \varepsilon + \delta + \eta \right\}. \label{eq:proof-loglogn-bound1-ds-bound}
    \end{align}
    By \eqref{eq:proof-loglogn-bound1-infpw}, the choice of $\gamma$ in \eqref{eq:gamma} satisfies the condition to be in the set on the right-hand side of \eqref{eq:proof-loglogn-bound1-ds-bound}. Therefore, $\gamma$ is a lower bound for the supremum of that set:
    \begin{align}
        \sup\left\{\Tilde{\gamma} \in \mathbb{R}: P\times W\left(\mathcal{S}_Q(\tilde{\gamma})\right) \leq  \varepsilon + \delta + \eta \right\} &=  D_{\mathrm{s}}^{\varepsilon + \delta + \eta}(P\times W\|P\times Q)\notag\\
        &\geq \gamma \notag\\
        &= 2\log\eta + \log\log N - \log\log |\mathcal{X}| - 2.
    \end{align}
    Solving for $\log \log N$, we find
    \begin{align}
        \log\log N &\leq D_{\mathrm{s}}^{\varepsilon + \delta + \eta}(P\times W\|P\times Q) + 2\log\left(\frac{1}{\eta}\right) + \log\log |\mathcal{X}| +2.
    \end{align}
    The corollary follows by noticing that the bound holds for every $(N, \epsilon, \delta)$-ID code, and therefore also for the maximal code size $N^*$, and for arbitrary $Q\in \mathcal{P}(\mathcal{Y})$.
\end{proof}

By now combining Lemma \ref{lem:divergence_leq_logbeta}, Lemma \ref{lem:saddle_point_property} and Corollary \ref{cor:loglogN_leq_inf_sup} we obtain the minimax bound.

\begin{corollary}[Minimax Bound, {\cite[Corollary 2]{watanabe}}]
    \label{cor:minimax_bound}
    For $0\leq\varepsilon$, $\delta<1$, $\varepsilon+\delta <1$ and an arbitrary $0<\eta<1-\varepsilon-\delta$, we have
    \begin{align}
        \log\log N^* &\leq \min_{Q} \max_{P} -\log\beta_{\varepsilon+\delta+\eta}(P\times W,P \times Q) + \log\log|\mathcal{X}| + 2\log\left(\frac{1}{\eta}\right) + 2 \notag \\
        &= \max_{P} \min_{Q} -\log\beta_{\varepsilon+\delta+\eta}(P \times W,P \times Q) + \log\log|\mathcal{X}| + 2\log\left(\frac{1}{\eta}\right) + 2. 
    \end{align}
\end{corollary}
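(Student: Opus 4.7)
The plan is a direct chaining of the three earlier results, with essentially no new ideas required. First, I would invoke Corollary \ref{cor:loglogN_leq_inf_sup} to bound $\log\log N^*$ by $\inf_Q \sup_P D_{\mathrm{s}}^{\varepsilon+\delta+\eta}(P\times W\|P\times Q) + \log\log|\mathcal{X}| + 2\log(1/\eta) + 2$. Then I would apply Lemma \ref{lem:divergence_leq_logbeta} pointwise with the substitution $P_Z = P\times W$, $Q_Z = P\times Q$, and parameter $\varepsilon+\delta+\eta$, which lies in $[0,1)$ by the standing assumption $0 < \eta < 1-\varepsilon-\delta$. This gives $D_{\mathrm{s}}^{\varepsilon+\delta+\eta}(P\times W\|P\times Q) \leq -\log\beta_{\varepsilon+\delta+\eta}(P\times W, P\times Q)$, and substituting this under the $\inf_Q \sup_P$ preserves the inequality.

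Next, since $\mathcal{X}$ and $\mathcal{Y}$ are finite, $\mathcal{P}(\mathcal{X})$ and $\mathcal{P}(\mathcal{Y})$ are compact, and Lemma \ref{lem:saddle_point_property} in particular guarantees that the corresponding optimizers of $\inf_P \sup_Q \beta$ are attained. Thus the $\inf_Q \sup_P$ in the displayed bound can be replaced by $\min_Q \max_P$, which yields the first line of the corollary. For the equality with the reversed order, I would invoke Lemma \ref{lem:saddle_point_property} directly on the map $(P,Q) \mapsto \beta_{\varepsilon+\delta+\eta}(P\times W, P\times Q)$ to get $\min_P \max_Q \beta = \max_Q \min_P \beta$. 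Because $-\log$ is monotone decreasing, pulling it inside converts $\min_P$ into $\max_P$ and $\max_Q$ into $\min_Q$, so the $\beta$-saddle-point identity is equivalent to $\max_P \min_Q(-\log\beta) = \min_Q \max_P (-\log\beta)$, which is exactly the second equality in the claim.

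The main obstacle is purely bookkeeping: tracking how $-\log$ flips minimizations and maximizations when commuted, and verifying that the tolerance parameter passed to $\beta$ (namely $\varepsilon+\delta+\eta$) and the pair of distributions $(P\times W, P\times Q)$ fed to it match exactly the hypotheses of Lemmas \ref{lem:divergence_leq_logbeta} and \ref{lem:saddle_point_property}. Once these correspondences are nailed down, the proof is a two-line argument and no additional technical machinery beyond the three cited results is needed.
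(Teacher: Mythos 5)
Your proposal is correct and follows exactly the route the paper intends: the paper gives no detailed proof of Corollary \ref{cor:minimax_bound}, stating only that it follows by combining Lemma \ref{lem:divergence_leq_logbeta}, Lemma \ref{lem:saddle_point_property} and Corollary \ref{cor:loglogN_leq_inf_sup}, which is precisely the chaining you carry out. Your bookkeeping of how $-\log$ flips the min/max order and of the attainability of the optimizers is a faithful filling-in of the details the paper leaves implicit.
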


Now, we are ready to prove the converse coding theorem for identification via channels. 
Consider a sequence $\textbf{X} = (X^n)_{n\in\mathbb{N}}$ and denote by $\textbf{Y} = (Y^n)_{n\in\mathbb{N}}$ the corresponding output sequences and suppose that our ID-code is $(\varepsilon,\delta)$-achievable. \\

\begin{theorem}[{\cite[Theorem 2]{watanabe}}]
    \label{prop:id_capacity_converse}
    For $0\leq\varepsilon$, $\delta<1$, $\varepsilon+\delta<1$ and a sequence $\mathrm{\mathbf{W}} = (W^n)_{n\in\mathbb{N}}$ of general channels, we have
    \begin{equation}
        C_\mathrm{ID}(\varepsilon,\delta|\mathrm{\mathbf{W}}) \leq \sup_{\mathrm{\mathbf{X}}} \underline{I}^{\varepsilon+\delta}(\mathrm{\mathbf{X}};\mathrm{\mathbf{Y}}).
    \end{equation}
\end{theorem}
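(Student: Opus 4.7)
The plan is to apply the finite blocklength bound of Corollary \ref{cor:loglogN_leq_inf_sup} (or equivalently the minimax bound of Corollary \ref{cor:minimax_bound}) to the $n$-th channel $W^n$, normalise by $n$, and then identify the resulting asymptotic expression with $\sup_{\mathbf{X}}\underline{I}^{\varepsilon+\delta}(\mathbf{X};\mathbf{Y})$. Applied to $W^n$ and divided by $n$, Corollary \ref{cor:loglogN_leq_inf_sup} reads
\begin{equation*}
\tfrac{1}{n}\log\log N^*(\varepsilon,\delta|W^n) \leq \tfrac{1}{n}\inf_{Q^n}\sup_{P^n}D_\mathrm{s}^{\varepsilon+\delta+\eta}(P^n\times W^n\|P^n\times Q^n) + o(1),
\end{equation*}
for any $\eta\in(0,1-\varepsilon-\delta)$, since $\tfrac{1}{n}\log\log|\mathcal{X}|^n = \tfrac{\log n + \log\log|\mathcal{X}|}{n}\to 0$ and $\tfrac{2\log(1/\eta)+2}{n}\to 0$.

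Next, I would use Corollary \ref{cor:minimax_bound} (the saddle-point property) to exchange the order of $\inf_{Q^n}$ and $\sup_{P^n}$ on the $-\log\beta$ side, then apply Lemma \ref{lem:divergence_leq_logbeta} to return to the spectral divergence at the cost of an additive $\log(1/\zeta)$ and a relaxation of the confidence level to $\varepsilon+\delta+\eta+\zeta$ for any $\zeta\in(0,1-\varepsilon-\delta-\eta)$. Specialising the inner minimisation to the induced output distribution $Q^n = P^n W^n$ yields
\begin{equation*}
\tfrac{1}{n}\log\log N^*(\varepsilon,\delta|W^n) \leq \sup_{P^n}\tfrac{1}{n} D_\mathrm{s}^{\varepsilon+\delta+\eta+\zeta}(P^n\times W^n\|P^n\times P^n W^n) + o(1).
\end{equation*}
For each $n$ I would then pick a near-maximizer $P^{*,n}$ and denote by $\mathbf{X}^* = (X^{*,n})$ and $\mathbf{Y}^* = (Y^{*,n})$ the corresponding input and output sequences. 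Writing $\varepsilon':=\varepsilon+\delta+\eta+\zeta$ and unwinding the definition of $D_\mathrm{s}^{\varepsilon'}$, any $a$ strictly below $\liminf_n \tfrac{1}{n}D_\mathrm{s}^{\varepsilon'}(P^{*,n}\times W^n\|P^{*,n}\times P^{*,n} W^n)$ eventually satisfies $\mathbb{P}(\tfrac{1}{n}\log\tfrac{W^n(Y^{*,n}|X^{*,n})}{P_{Y^{*,n}}(Y^{*,n})}\leq a)\leq \varepsilon'$, whence $\limsup_n\mathbb{P}(\cdots\leq a)\leq \varepsilon'$ and thus $a\leq\underline{I}^{\varepsilon'}(\mathbf{X}^*;\mathbf{Y}^*)\leq\sup_{\mathbf{X}}\underline{I}^{\varepsilon'}(\mathbf{X};\mathbf{Y})$. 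Putting everything together yields $C_\mathrm{ID}(\varepsilon,\delta|\mathbf{W}) \leq \sup_{\mathbf{X}}\underline{I}^{\varepsilon+\delta+\eta+\zeta}(\mathbf{X};\mathbf{Y})$ for every admissible $\eta,\zeta > 0$; sending $\eta,\zeta \downarrow 0$ and invoking the right-continuity of $\underline{I}^{\,\cdot}(\mathbf{X};\mathbf{Y})$ in its parameter would then close the argument.

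The main technical obstacle is precisely this final passage $\eta,\zeta \downarrow 0$: although $\underline{I}^{\varepsilon}(\mathbf{X};\mathbf{Y})$ is right-continuous in $\varepsilon$ for each fixed $\mathbf{X}$ (being the generalised inverse of the non-decreasing function $a\mapsto \limsup_n\mathbb{P}(\tfrac{1}{n}\log\tfrac{W^n(Y^n|X^n)}{P_{Y^n}(Y^n)}\leq a)$), this right-continuity need not be preserved under $\sup_{\mathbf{X}}$. Closing the gap between $\underline{I}^{\varepsilon+\delta+\eta+\zeta}$ and $\underline{I}^{\varepsilon+\delta}$ typically calls for a careful diagonal selection of the near-optimal input sequence $\mathbf{X}^*_{\eta,\zeta}$ as $\eta,\zeta$ decrease. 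A secondary point is that the minimax exchange used above has to be performed on the $-\log\beta$ formulation (Lemma \ref{lem:saddle_point_property}), since the spectral divergence $D_\mathrm{s}^{\varepsilon}$ is not known to satisfy a minimax identity directly.
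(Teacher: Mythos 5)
Your overall skeleton (minimax bound $\to$ Lemma \ref{lem:divergence_leq_logbeta} $\to$ identification with the spectral quantity) is the same as the paper's, but the way you handle the slack parameters leaves a genuine gap that you yourself flag and do not close. By keeping $\eta$ and $\zeta$ fixed while $n\to\infty$, you arrive at $C_\mathrm{ID}(\varepsilon,\delta|\mathbf{W}) \leq \sup_{\mathbf{X}}\underline{I}^{\varepsilon+\delta+\eta+\zeta}(\mathbf{X};\mathbf{Y})$ and then need right-continuity of $\varepsilon'\mapsto\sup_{\mathbf{X}}\underline{I}^{\varepsilon'}(\mathbf{X};\mathbf{Y})$ at $\varepsilon'=\varepsilon+\delta$. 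While each individual $\underline{I}^{\varepsilon'}(\mathbf{X};\mathbf{Y})$ is indeed right-continuous in $\varepsilon'$ (as the generalised inverse of a non-decreasing function), a supremum of right-continuous non-decreasing functions need not be right-continuous (consider $g_k(\varepsilon')=\mathds{1}[\varepsilon'\geq 1/k]$, whose supremum jumps at $0$), so the final passage $\eta,\zeta\downarrow 0$ is not justified and the "diagonal selection" you gesture at is precisely the missing argument.

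The paper sidesteps this entirely by making the slack vanish \emph{with the blocklength}: it sets $\eta_n=\tfrac{1}{n}$, so that after Corollary \ref{cor:minimax_bound} and Lemma \ref{lem:divergence_leq_logbeta} the divergence parameter is $\varepsilon_n+\delta_n+2\eta_n$, where $\varepsilon_n,\delta_n$ are the per-blocklength error probabilities of the code. Because $(\varepsilon,\delta)$-achievability only requires $\limsup_n\varepsilon_n\leq\varepsilon$ and $\limsup_n\delta_n\leq\delta$, one gets $\limsup_n(\varepsilon_n+\delta_n+2\eta_n)\leq\varepsilon+\delta$. The comparison with the spectral quantity is then done directly at level $\varepsilon+\delta$: for $\xi=\underline{I}^{\varepsilon+\delta}(\hat{\mathbf{X}};\hat{\mathbf{Y}})+\tau$ with arbitrary $\tau>0$, the definition of $\underline{I}^{\varepsilon+\delta}$ forces $\mathbb{P}\bigl(\tfrac{1}{n}\log\tfrac{W^n(\hat{Y}^n|\hat{X}^n)}{P_{\hat{Y}^n}(\hat{Y}^n)}\leq\xi\bigr)\geq\varepsilon+\delta+\nu$ for some $\nu>0$ and infinitely many $n$, hence $\tfrac{1}{n}D_\mathrm{s}^{\varepsilon_n+\delta_n+2\eta_n}\leq\xi$ along that subsequence, which suffices because the rate is defined via a $\liminf$. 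Letting $\tau\downarrow 0$ at the very end is harmless since $\tau$ enters additively outside all suprema. If you rework your argument with $\eta_n,\zeta_n=\tfrac{1}{n}$ and the subsequence argument at level $\varepsilon+\delta+\tau$ in place of the fixed-$\eta,\zeta$ limit, it becomes the paper's proof.
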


\begin{proof} 
    By Corollary \ref{cor:minimax_bound} we have
    \begin{equation}
    \label{eq:capcity_proof_1}
        \frac{1}{n} \log\log N_n \leq \max_{P_{X^n}} \min_{Q_{Y^n}} -\frac{1}{n}\log\beta_{\varepsilon_n+\delta_n+\eta_n}(P_{X^n}\times W^n,P_{X^n}\times Q_{Y^n}) + \Delta_n.
    \end{equation}
    with
    \begin{align}
        \Delta_n &:= \frac{1}{n}\left( \log\log|\mathcal{X}^n| + 2\log\left(\frac{1}{\eta_n}\right) + 2 \right) \notag\\
        &= \frac{1}{n}\left(\log n + \log\log|\mathcal{X}| + 2\log\left(\frac{1}{\eta_n}\right) + 2 \right).\label{eq:converse-proof-delta-def}
    \end{align}
    By choosing $\eta = \frac{1}{n}$, we can guarantee that the requirement $\eta < 1-\epsilon-\delta$ is fulfilled for sufficiently large $n$. Now let $\hat{\textbf{X}} = (\hat{X}^n)_{n\in\mathbb{N}}$ and $\hat{\textbf{Y}} = (\hat{Y}^n)_{n\in\mathbb{N}}$ be the input and output sequences that attain the maximum in (\ref{eq:capcity_proof_1}) for every $n$. Therefore, we have
    \begin{align}
    \label{eq:1_n_loglogN_leq_1_n_Ds}
        \frac{1}{n} \log\log N_n &\leq -\frac{1}{n}\log\beta_{\varepsilon_n+\delta_n+\eta_n}(P_{\hat{X}^n}\times W^n,P_{\hat{X}^n}\times Q_{\hat{Y}^n}) + \Delta_n \notag \\
        &\stackrel{\text{Lemma } \ref{lem:divergence_leq_logbeta}}{\leq} \frac{1}{n} D_S^{\varepsilon_n+\delta_n+2\eta_n}(P_{\hat{X}^n}\times W^n\|P_{\hat{X}^n} \times Q_{\hat{Y}^n}) + \frac{1}{n}\log\left(\frac{1}{\eta_n}\right) + \Delta_n, 
    \end{align}
    where the last inequality results from Lemma \ref{lem:divergence_leq_logbeta}.
    Let $\xi = \underline{I}^{\varepsilon+\delta}(\hat{\textbf{X}};\hat{\textbf{Y}}) +  \tau $ for an arbitrary $\tau>0$. We observe that
    \begin{align}
        \mathbb{P}&\left(\frac{1}{n} \log\frac{W^n(\hat{Y}^n|\hat{X}^n)}{P_{\hat{Y}^n}(\hat{Y}^n)} \leq \xi \right) = \mathbb{P}\left(\frac{1}{n} \log\frac{W^n(\hat{Y}^n|\hat{X}^n)}{P_{\hat{Y}^n}(\hat{Y}^n)} \leq \underline{I}^{\varepsilon+\delta}(\hat{\textbf{X}};\hat{\textbf{Y}}) +  \tau \right) \\
         &= \mathbb{P}\Bigg(\color{tu3}\frac{1}{n} \log\frac{W^n(\hat{Y}^n|\hat{X}^n)}{P_{\hat{Y}^n}(\hat{Y}^n)} \color{black} \leq \sup_{a} \left\{\limsup_{n\to\infty} \mathbb{P}\left(\color{tu3}\frac{1}{n} \log\frac{W^n(\hat{Y}^n|\hat{X}^n)}{P_{\hat{Y}^n}(\hat{Y}^n)} \leq a \color{black} \right) \leq \color{tu5}\varepsilon + \delta \color{black} \right\} + \tau \Bigg),\color{black} \label{eq:converse-proof-prob-id-smaller-xi}
    \end{align}
    by the definition of the $\varepsilon$-spectral inf mutual information in \eqref{eq:e-spectral-inf-mutual-information}. Note that on the right-hand side of the inequality in \eqref{eq:converse-proof-prob-id-smaller-xi}, the supremum over $a$ is taken such that the probability $\mathbb{P}\left(\frac{1}{n} \log\frac{W^n(\hat{Y}^n|\hat{X}^n)}{P_{\hat{Y}^n}(\hat{Y}^n)} \leq a  \right)$ is no larger than $\epsilon + \delta$ for large $n$. Denote this supremum by $a^*$. Conversely, that means that for any threshold $a > a^*$, we have $\mathbb{P}\left(\frac{1}{n} \log\frac{W^n(\hat{Y}^n|\hat{X}^n)}{P_{\hat{Y}^n}(\hat{Y}^n)} \leq a  \right) > \varepsilon + \delta$. Since in \eqref{eq:converse-proof-prob-id-smaller-xi}, the threshold is $a^* + \tau$ with $\tau > 0$, we conclude that there exists a $\nu >0$ such that
    \begin{equation}
    \label{eq:pd_xi_e_d_n}
        \color{tu5}\mathbb{P}\left(\frac{1}{n} \log\frac{W^n(\hat{y}^n|\hat{x}^n)}{P_{\hat{Y}^n}(\hat{y}^n)} \leq \xi \right) \geq \varepsilon + \delta + \nu, \color{black}
    \end{equation}
    for infinitely many n. 
   By the assumption that the ID-code is $(\varepsilon,\delta)$-achievable and $\eta_n = \frac{1}{n}$, we know for those values of $n$,
    \begin{align}
        \limsup_{n\to\infty}\varepsilon_n + \delta_n +2\eta_n \leq \varepsilon + \delta.
    \end{align}
    With the definition for the $\varepsilon$-spectral inf-divergence, given in \eqref{eq:epsilon_spectral_inf_divergence}, we obtain
    \begin{align}
    \label{eq:divergence_leq_xi}
        \frac{1}{n} D_S^{\varepsilon_n+\delta_n+2\eta_n}(P_{\hat{X}^n}W^n||P_{\hat{X}^n}Q_{\hat{Y}^n}) &\stackrel{\eqref{eq:epsilon_spectral_inf_divergence}}{=} \sup_\gamma \left\{ \color{tu2}\mathbb{P} \left( \frac{1}{n}\log\frac{W^n(\hat{y}^n|\hat{x}^n)}{P_{\hat{Y}^n}(\hat{y}^n)} \leq \gamma \right) \leq \varepsilon_n+\delta_n+2\eta_n \color{black}\right\}\notag \\ 
        & \leq \color{tu5}\xi\color{black}.
    \end{align}
    Figure \ref{fig:gamma_leq_xi} and the coloring in \eqref{eq:pd_xi_e_d_n} and \eqref{eq:divergence_leq_xi} help to understand the relation in \eqref{eq:divergence_leq_xi}. The blue curve represents the probability density function of $\frac{1}{n}\log\frac{W^n(\hat{Y}^n|\hat{X}^n)}{P_{\hat{Y}^n}(\hat{Y}^n)}$. The value $\xi$ stands for a threshold such that $\mathbb{P}\left(\frac{1}{n} \log\frac{W^n(\hat{Y}^n|\hat{X}^n)}{P_{\hat{Y}^n}(\hat{Y}^n)} \leq \xi  \right) \geq \epsilon + \delta + \nu$, as in \eqref{eq:pd_xi_e_d_n}. 
    If, on the other hand, the probability should be smaller than $\varepsilon+\delta$ (marked in orange), which is obviously less than $\varepsilon+\delta+\nu$ for any $\nu>0$, the largest possible threshold (denoted by $\gamma$ in Fig. \ref{fig:gamma_leq_xi}) is smaller than $\xi$.
    Therefore \eqref{eq:divergence_leq_xi} is fulfilled.
    \begin{figure}[ht]
        \centering
        \includegraphics{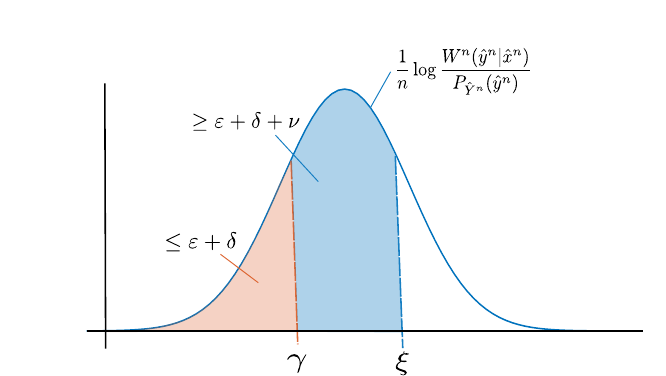}
        \caption{Relations between $\frac{1}{n}\log\frac{W^n(\hat{y}^n|\hat{x}^n)}{P_{\hat{Y}^n}(\hat{y}^n)}$ and its bounds.}
        \label{fig:gamma_leq_xi}
    \end{figure}
    For $\eta_n = \frac{1}{n}$ we have
    \begin{align}
         \liminf_{n\to\infty} \frac{1}{n} \log\log N_n &\leq  \liminf_{n\to\infty} \frac{1}{n} D_S^{\varepsilon_n+\delta_n+2\eta_n}(P_{\hat{X}^n}\times W^n||P_{\hat{X}^n}\times Q_{\hat{Y}^n}) + \frac{1}{n}\log\left(\frac{1}{\eta_n}\right) + \Delta_n \notag \\
        &\stackrel{\eqref{eq:divergence_leq_xi}, \eqref{eq:converse-proof-delta-def}}{\leq}  \liminf_{n\to\infty} \xi + \frac{1}{n} \log n + \frac{1}{n} (\log\log|\mathcal{X}| + 3\log n + 2) \notag \\
        &= \underline{I}^{\varepsilon+\delta}(\hat{\textbf{X}};\hat{\textbf{Y}}) +  \tau \notag\\
        &\leq \sup_{\textbf{X}} \underline{I}^{\varepsilon+\delta}(\textbf{X};\textbf{Y}) +  \tau. \label{eq:converse-proof-finish}
    \end{align}
    Since \eqref{eq:converse-proof-finish} holds for every $\tau > 0$, the claim of the theorem is established.

    For the case $\varepsilon = \delta = 0$, the converse coding theorem follows immediately:
    \begin{corollary}\label{cor:converse-general-watanabe}
         For a sequence $\mathrm{\mathbf{W}} = (W^n)_{n\in\mathbb{N}}$ of general channels, we have
    \begin{equation}
        C_\mathrm{ID}(\mathrm{\mathbf{W}}) = \sup_{\mathrm{\mathbf{X}}} \underline{I}(\mathrm{\mathbf{X}};\mathrm{\mathbf{Y}}).
    \end{equation}
    \end{corollary}
    \mmchange{Comparing Corollary \ref{cor:converse-general-watanabe} to the achievability bound in Theorem \ref{thm:id-capacity-achievable-general}, we find that for $\epsilon = \delta = 0$, the two bounds match, i.e., the identification capacity is established for general (discrete) channels without assumption of the strong converse property.}
\end{proof}
\cdchange{\begin{remark}
    It should be noted that \cite{watanabe} defines general channels with finite input and output alphabets, while \cite{han} defines general channels including continuous input and output alphabets. The corollary~\ref{cor:converse-general-watanabe} therefore does not apply to continuous input or output alphabets. The strong converse property must be shown for such channels. For example, this applies to the deterministic-time Poisson channel with power constraint \cite{labidi24}.
\end{remark}}

\section{Converse for Identification in the Presence of Feedback}
\label{idf}
In this section, we analyze the converse proof of the ID capacity in the presence noiseless feedback \mmchange{by Ahlswede and Dueck \cite{ahlswede_feedback}. Since that proof is inspired by the strong converse proof for transmission with feedback and bears many similarities with it, in Section \ref{sec:former_results_on_transmission_with_feedback}, we first introduce the converse for transmission with feedback as it is presented by Wolfowitz \cite[Theorem 4.8.2]{wolfowitz}. He attributes that proof to an oral communication with H. Kesten and J. H. Kemperman. Subsequently, in Section \ref{sec:converses_idf}, we present the converse proof for identification with feedback \cite{ahlswede_feedback}. Finally, in Section \ref{sec:comparison-tf-idf}, we highlight} the similarities and differences between these two approaches.

\subsection{Converse for Transmission with Feedback}
\label{sec:former_results_on_transmission_with_feedback}
In this section, we introduce the converse proof of transmission with feedback through a discrete memoryless channel (DMC).\mmchange{
\begin{theorem}[Converse for transmission with feedback {\cite[Theorem 4.8.2]{wolfowitz}}]
    Any $(n, M, \varepsilon)$-transmission feedback code for a DMC $W$ with feedback satisfies
    \begin{align}
        \log M < nC(W) + K\sqrt{n},
    \end{align}
    where $C(W)$ denotes the Shannon capacity from Theorem \ref{thm:shannon} and $K$ depends on $\epsilon$ but not on $n$.
\end{theorem}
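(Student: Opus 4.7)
The plan is a change-of-measure argument against the capacity-achieving output distribution, lifting Wolfowitz's strong converse for the DMC to the feedback setting by exploiting that, conditioned on the feedback history $Y^{t-1}$, each $Y_{t}$ is still generated by a single use of $W$. Let $P^{*}\in\mathcal{P}(\mathcal{X})$ achieve capacity and set $Q^{*}:=P^{*}W$. By the Kuhn--Tucker characterization of capacity-achieving input distributions, $D(W(\cdot\mid x)\,\|\,Q^{*})\leq C(W)$ for every $x\in\mathcal{X}$, and $\mathrm{supp}(W(\cdot\mid x))\subseteq\mathrm{supp}(Q^{*})$ so that the log-likelihood ratio below is a.s.\ finite. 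For a fixed message $i$ with feedback strategy $\boldsymbol{f}_{i}$, write $X_{t}:=f_{i}^{t}(Y^{t-1})$ and
\begin{equation*}
L_{i}(Y^{n}) \;:=\; \log\frac{W^{n}(Y^{n}\mid\boldsymbol{f}_{i})}{(Q^{*})^{n}(Y^{n})} \;=\; \sum_{t=1}^{n} Z_{t}, \qquad Z_{t} := \log\frac{W(Y_{t}\mid X_{t})}{Q^{*}(Y_{t})}.
\end{equation*}

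Under the law induced by sending message $i$, $\mathbb{E}[Z_{t}\mid Y^{t-1}] = D(W(\cdot\mid X_{t})\,\|\,Q^{*}) \leq C(W)$, so $L_{i}(Y^{n}) - nC(W)$ is the terminal value of a supermartingale, and each $Z_{t}$ is a.s.\ bounded with conditional variance bounded uniformly by a constant $\sigma^{2}$ depending only on $W$ (finiteness of the alphabets together with $Q^{*}(y)>0$ on the relevant support makes $|Z_{t}|$ bounded). A Chebyshev-type bound for the associated martingale-difference sequence $Z_{t}-\mathbb{E}[Z_{t}\mid Y^{t-1}]$ then yields
\begin{equation*}
\mathbb{P}\!\left[L_{i}(Y^{n}) > nC(W) + K\sqrt{n} \,\Big|\, \text{msg } i\right] \;\leq\; \frac{\sigma^{2}}{K^{2}},
\end{equation*}
so $K$ can be chosen depending only on $\varepsilon$ and $W$ to make the right-hand side at most $(1-\varepsilon)/2$. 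Define the concentrated part of the decoding region $\mathcal{A}_{i} := \mathcal{D}_{i}\cap\{y^{n}\colon L_{i}(y^{n})\leq nC(W)+K\sqrt{n}\}$; the error bound \eqref{eq:error} and the inequality above give $W^{n}(\mathcal{A}_{i}\mid\boldsymbol{f}_{i}) \geq (1-\varepsilon)/2$.

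On $\mathcal{A}_{i}$ the definition of $L_{i}$ yields $W^{n}(y^{n}\mid\boldsymbol{f}_{i}) \leq 2^{\,nC(W)+K\sqrt{n}}\,(Q^{*})^{n}(y^{n})$, whence $(Q^{*})^{n}(\mathcal{A}_{i}) \geq \frac{1-\varepsilon}{2}\cdot 2^{-nC(W)-K\sqrt{n}}$. Since $\mathcal{D}_{1},\ldots,\mathcal{D}_{M}$ are pairwise disjoint, so are the $\mathcal{A}_{i}$, and $\sum_{i=1}^{M}(Q^{*})^{n}(\mathcal{A}_{i})\leq 1$ rearranges to $\log M \leq nC(W) + K\sqrt{n} + \log\frac{2}{1-\varepsilon}$, which is the claim after absorbing the last constant into $K$. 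The main obstacle is the concentration step: the coordinates $Y_{1},\ldots,Y_{n}$ are not independent under message $i$ (since $X_{t}$ depends on $Y^{t-1}$ through the feedback strategy), so the classical Chebyshev bound for sums of i.i.d.\ variables does not apply directly; the martingale-difference viewpoint is what replaces independence, and the key technical input is a uniform bound $\mathrm{Var}(Z_{t}\mid Y^{t-1}) \leq \max_{x}\mathrm{Var}_{W(\cdot\mid x)}\!\left[\log\frac{W(Y\mid x)}{Q^{*}(Y)}\right]$, which depends on $W$ and $Q^{*}$ but not on $\boldsymbol{f}_{i}$.
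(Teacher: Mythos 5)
Your proposal is correct and follows essentially the same route as the paper's proof: the same change of measure against the capacity-achieving output distribution $Q^{*}=P^{*}W$ with the Kuhn--Tucker bound $D(W(\cdot\mid x)\,\|\,Q^{*})\leq C(W)$ (the paper's ``Shannon's lemma''), the same martingale-difference Chebyshev concentration replacing independence (the paper's uncorrelated $U_{t}$), and the same final counting over the disjoint sets $\mathcal{D}_{i}\cap\{L_{i}\leq nC+K\sqrt{n}\}$ (the paper's Lemma~\ref{lem:wolfowitz}). The only cosmetic difference is your fixed error split $(1-\varepsilon)/2$ in place of the paper's free parameter $\nu$, which is absorbed into the constant $K$ either way.
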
}
\begin{proof}
The proof by Wolfowitz \mmchange{is given in \cite[p. 95]{wolfowitz} and} can be structured in five steps. 
\begin{enumerate}
\item Let $t\in \{2,\ldots,n\}$ and the auxiliary RV $Y_t^*$ be defined as
\begin{equation}
\label{eq:wolfowitz_def_yt_star}
    Y_t^* = \log\frac{W(Y_t|f_{i}^t(Y^{t-1}))}{P^*_Y(Y_t)},
\end{equation} 
where $Y^{t-1}=(Y_1,\cdots,Y_{t-1})$ comprises all channel outputs until time $t-1$ and $P^*_Y$ represents the output distribution of the DMC $W$ that corresponds to the optimal input distribution $P^*:=\arg\max_{P}I(P;W)$, i.e.,  
\begin{equation*}
    P^*_Y(y)=\sum_{x\in\mathcal{X}}P^*(x)W(y|x), \quad \forall y\in\mathcal{Y}.
\end{equation*}

For all $y^{t-1}\in\mathcal{Y}^{t-1}$, we have
\begin{align}
\label{eq:wolfowitz_expectation_yt}
    \mathbb{E}[Y_t^*|y^{t-1}] &= \sum\limits_{y_t \in\mathcal{Y}} Y_t^*(y_t) \mathbb{P}(y_t|y^{t-1}) \notag \\
    &= \sum\limits_{y_t \in\mathcal{Y}} \log \Big(\frac{W(y_t|f_{t}^t(y^{t-1}))}{P^*_Y(y_t)}\Big) \cdot W(y_t|f_{i}^t(y^{t-1})) \notag \\
    &= D(W(\cdot|f_i^t(y^{t-1}))||P_{Y^*_t})\notag \\
    &\stackrel{(a)}{\leq} C,
\end{align}
where $(a)$ follows from \textit{Shannon's lemma}
\renewcommand*{\marginnotevadjust}{3cm} (see \cite[Theorem 65]{ahlswede_book_2}, \cite[Lemma 4.25]{alajaji} and \cite[p. 90]{wolfowitz}). We denote the argument that \eqref{eq:wolfowitz_expectation_yt} holds true for all $y^{t-1}\in\mathcal{Y}^{t-1}$ by $\mathbb{E}[Y^*_t|Y^{t-1}]\le C$. 
\item \label{step2_Wolofitz}
Let the auxiliary RV $U_t$ be defined as
\begin{equation}
\label{eq:wolfowitz_def_ut}
    U_t = Y_t^* - \mathbb{E}[Y_t^*|Y^{t-1}]. 
\end{equation}
The RV $U_t$ fulfills the following properties:
\begin{itemize}
    \item[(a)] $\mathbb{E}[U_t|Y^{t-1}] = 0$,
    \item[(b)] For all $s<t$, $\mathbb{E}[U_t|U_s] = 0$,
    \item[(c)] For all $s\ne t$, $U_t$ and $U_s$ are uncorrelated.
\end{itemize}
\textit{Proof:}
\begin{itemize}
    \item[(a)]
    \begin{align}
    \label{eq:1}
        \mathbb{E}[U_t|Y^{t-1}]
        &=\mathbb{E}[Y_t^*-\mathbb{E}[Y_t^*|Y^{t-1}]|Y_1,...,Y_{t-1}] \nonumber\\
        &=\mathbb{E}[Y_t^*|Y^{t-1}]-\mathbb{E}[\mathbb{E}[Y_t^*|Y^{t-1}]|Y^{t-1}]\nonumber\\
        &= 0.
    \end{align}
    \item[(b)]
    \begin{align}
        \mathbb{E}[U_t|U_s] = \mathbb{E}[U_t|Y_s] \stackrel{\text{(a)}}{=} 0.
    \end{align}
    \item[(c)]
    \begin{align}
        \frac{\mathbb{E}[U_t U_s]}{P(U_s)} = \mathbb{E}[U_t|U_s] \stackrel{\text{(b)}}{=} 0.
    \end{align}
\end{itemize}
For some $\alpha>0$, consider the event 
\begin{equation}
    \sum\limits_{t=1}^{n} U_t >\alpha\sqrt{n}. \label{eq:RVU_t}
\end{equation}
 By the definition of $U_t$ and the bound for $\mathbb{E}[Y_t^*|Y^{t-1}]$ in \eqref{eq:wolfowitz_expectation_yt}, the event in \eqref{eq:RVU_t} is equivalent to the following event
\begin{align}
    \sum\limits_{t=1}^{n} Y_t^*\nonumber
    &>\alpha\sqrt{n} + \sum\limits_{t=1}^{n}\mathbb{E}[Y_t^*|Y^{t-1}].
\end{align}
By \eqref{eq:wolfowitz_expectation_yt} we have $\alpha\sqrt{n} + \sum\limits_{t=1}^{n}\mathbb{E}[Y_t^*|Y^{t-1}]\le \alpha\sqrt{n} +nC$. Therefore,
\begin{align}
\label{eq:wolfowitz_probs_coloured}
    \mathbb{P}\Big(\sum\limits_{t=1}^{n} U_t > \alpha\sqrt{n}\Big)
    &=\color{black}\mathbb{P}\Big(\sum\limits_{t=1}^{n} Y_t^* >\alpha\sqrt{n} + \sum\limits_{t=1}^{n}\mathbb{E}[Y_t^*|Y^{t-1}]\Big)\nonumber\\
   &\ge \color{black}\mathbb{P}\Big(\sum\limits_{t=1}^{n} Y_t^* >\alpha\sqrt{n} + nC\Big). 
\end{align}
 The analyze the probability $\mathbb{P}\left(\sum_{t=1}^{n} U_t\ge\alpha\sqrt{n}\right)$ further using the Chebyshev inequality
\begin{equation}
    \label{eq:chebyshev}
    \mathbb{P}(|X-\mu|\geq k\sigma) \leq \frac{1}{k^2},
\end{equation}
with the following parameters:
\begin{itemize}
    \item[(a)] RV:
    \begin{align}
        X=\sum_{t=1}^{n}U_t.
    \end{align}
    \item[(b)] Mean:
    \begin{align}
        \mu=\mathbb{E}[\sum_{t=1}^{n}U_t]=\sum_{t=1}^{n}\mathbb{E}[\mathbb{E}[U_t|Y^{t-1}]]\stackrel{\eqref{eq:1}}{=}0.
    \end{align}
    \item[(c)] Variance:
    \begin{align}
        \sigma^2=\text{Var}[\sum_{t=1}^n U_t] 
        &= \sum_{t=1}^n\mathbb{E}[U_t^2]+\sum_{1\le s<t\le n} \text{Cov}[U_s,U_t]
        &\stackrel{\eqref{eq:wolfowitz_def_ut}\text{(c)}}{=} \sum_{t=1}^n\mathbb{E}[U_t^2].
    \end{align}
    Define $\beta$ as an upper-bound of $\mathbb{E}[U_t^2]$ for all $t=1,\cdots,n$, i.e., $\sigma^2=\sum_{t=1}^n\mathbb{E}[U_t^2]\le n\beta$.
    \item[(d)] Parameter k:
    \begin{align}
        k := \frac{\alpha}{\sqrt{\beta}}=\frac{1}{\sqrt{\nu}},
    \end{align}
    where $\nu>0$ and$\alpha$ is defined in step 2.2.
\end{itemize} 
Applying Chebyshev's inequality, we have
\begin{equation}
\label{eq:wolfowitz_prob_bounded_by_chebyshev}
    \mathbb{P}(\sum\limits_{t=1}^{n} U_t \geq \alpha\sqrt{n}) \leq \nu.
\end{equation}
Note that $\beta$ can be any upper bound of $\mathbb{E}[U^2_t]$. We will see later that Ahlswede provides a concrete value for $\beta$.
By combining \eqref{eq:wolfowitz_probs_coloured} and \eqref{eq:wolfowitz_prob_bounded_by_chebyshev}, we obtain
\begin{align}
\label{eq:wolfowitz_prob_leq_nu}
    \mathbb{P}\Big(\sum\limits_{t=1}^{n} Y_t^* >\alpha\sqrt{n} + nC\Big) & \leq \mathbb{P}\Big(\sum\limits_{t=1}^{n} U_t > \alpha\sqrt{n}\Big) \nonumber \\
    & \leq \nu.
\end{align}

\item Next, for all $i\in\mathcal{M}$, we define the set $\mathcal{C}(P_Y) \subset \mathcal{Y}^n $ as follows.
\begin{equation}
\label{eq:c_p_y_subset}
    \mathcal{C}(P_Y) =\Big\{y^n\Big|\log\frac{W^n(y^n|\boldsymbol{f}_i)}{P_Y(y^n)} \geq nC + \alpha\sqrt{n} =: \theta \Big\} .
\end{equation}
Then, for all $i\in\mathcal{M}$, we have
\begin{align}
    W^n(\mathcal{C}(P^*_Y)|\boldsymbol{f}_i) &= \mathbb{P}\Big(\log\prod\limits_{t=1}^{n} \frac{W(Y_t|f^t_{i}(Y^{t-1}))}{P^*_Y(Y_t)} >\alpha\sqrt{n} + nC\Big) \notag \\
    &=\mathbb{P}\Big(\sum\limits_{t=1}^{n} Y_t^* >\alpha\sqrt{n} + nC\Big)\nonumber\\
    &\stackrel{\eqref{eq:wolfowitz_prob_leq_nu}}{\leq} \nu.
\end{align}

\begin{lemma}[{\cite[Theorem 7.8.1]{wolfowitz}}]
\label{lem:wolfowitz}
For $\nu>0$, suppose 
\begin{equation}
    \min\limits_{P_Y} \max\limits_{f}  W^n(\mathcal{C}(P_Y)|f) < \nu,
\end{equation} 
Then
\begin{equation}
    M < \frac{2^\theta}{1-\nu-\varepsilon},
\end{equation}
where $\theta$ is defined in \eqref{eq:c_p_y_subset} and $\varepsilon$ is the upper bound of error probability defined in \eqref{eq:error}.
\end{lemma}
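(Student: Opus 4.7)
The plan is to carry out a standard Wolfowitz-type large-deviations converse, where the hypothesis is used to show that the high-likelihood outputs contribute almost all of the decoding probability, and then disjointness of the decoding sets forces $M$ to be small. Let $P_Y^*$ be a distribution attaining the minimum in the hypothesis and, for each $i \in \{1,\ldots,M\}$, write $\mathcal{C}_i := \mathcal{C}(P_Y^*)$ when evaluated with $\boldsymbol{f}_i$, i.e., $\mathcal{C}_i = \{y^n : W^n(y^n|\boldsymbol{f}_i) \geq 2^\theta P_Y^*(y^n)\}$. The hypothesis then gives $W^n(\mathcal{C}_i \mid \boldsymbol{f}_i) < \nu$ for every $i$, while the code assumption gives $W^n(\mathcal{D}_i \mid \boldsymbol{f}_i) \geq 1-\varepsilon$.

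The first step is to combine these two bounds via a union-bound argument on the complement:
\begin{equation*}
W^n(\mathcal{D}_i \cap \mathcal{C}_i^c \mid \boldsymbol{f}_i) \geq 1 - \varepsilon - \nu .
\end{equation*}
The second step is to exploit the defining inequality of $\mathcal{C}_i^c$, namely $W^n(y^n \mid \boldsymbol{f}_i) < 2^\theta P_Y^*(y^n)$ for every $y^n \in \mathcal{C}_i^c$. Summing over $y^n \in \mathcal{D}_i \cap \mathcal{C}_i^c$ yields
\begin{equation*}
1 - \varepsilon - \nu < W^n(\mathcal{D}_i \cap \mathcal{C}_i^c \mid \boldsymbol{f}_i) < 2^\theta \, P_Y^*(\mathcal{D}_i \cap \mathcal{C}_i^c) \leq 2^\theta \, P_Y^*(\mathcal{D}_i),
\end{equation*}
so each $\mathcal{D}_i$ carries at least mass $(1-\varepsilon-\nu)/2^\theta$ under $P_Y^*$.

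The third and final step is to use the disjointness of the decoding sets $\mathcal{D}_1,\ldots,\mathcal{D}_M$ (which holds because transmission codes — unlike identification codes — require pairwise disjoint decoding regions) together with the fact that $P_Y^*$ is a probability measure on $\mathcal{Y}^n$:
\begin{equation*}
1 \;\geq\; \sum_{i=1}^{M} P_Y^*(\mathcal{D}_i) \;>\; M \cdot \frac{1-\varepsilon-\nu}{2^\theta},
\end{equation*}
which rearranges to the desired bound $M < 2^\theta/(1-\nu-\varepsilon)$.

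The routine steps here are essentially the union bound and a sum-exchange, so no calculation is really hard. The main conceptual subtlety is keeping track of which $P_Y$ is used: the hypothesis gives only a single ``good'' $P_Y^*$ that simultaneously defeats every possible feedback strategy $f$, but once such a $P_Y^*$ is chosen, the same $P_Y^*$ can be used in the likelihood-ratio definition of each $\mathcal{C}_i$, and crucially this allows the concluding inequality $\sum_i P_Y^*(\mathcal{D}_i) \leq 1$ to apply. This matches the pattern later used in the identification-with-feedback converse, where the analogous step will have to cope with non-disjoint $\mathcal{D}_i$, which is the part where the two converses diverge.
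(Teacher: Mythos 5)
Your proof is correct and follows essentially the same route as the paper's: lower-bound $W^n(\mathcal{D}_i\cap\mathcal{C}_i^c\mid\boldsymbol{f}_i)$ by $1-\varepsilon-\nu$, convert to $P_Y^*$-measure via the defining likelihood-ratio inequality on $\mathcal{C}_i^c$, and sum over the disjoint decoding sets. Your explicit remark that the minimizing $P_Y^*$ defeats every feedback strategy simultaneously (so the same reference measure can be used for all $i$ in the final sum) is a point the paper leaves implicit, and you also avoid a small typo in the paper's displayed sum where the complement on $\mathcal{C}(P_Y^*)$ is dropped.
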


We can upper-bound $W^n(\mathcal{D}_i\cap\mathcal{C}(P_Y^*)^c|f_i)$ as follows.
\begin{align}
\label{eq:2}
    W^n(\mathcal{D}_i\cap\mathcal{C}(P_{Y^*})^c|f_i) &= W^n(\mathcal{D}_i|f_i)-W^n(\mathcal{D}_i\cap\mathcal{C}(P_Y^*)|f_i) \notag \\
    &\geq\underbrace{W^n(\mathcal{D}_i|f_i)}_{>1-\varepsilon} - \underbrace{W^n(\mathcal{C}(P^*_{Y})|f_i)}_{<\nu } \notag \\
    &> 1-\varepsilon-\nu.
\end{align}
\begin{proof}
    By the definition of the subset $\mathcal{C}(P_Y)$ in \eqref{eq:c_p_y_subset} we obtain
    \begin{equation}
    \label{eq:wolfowitz_proof_1}
        2^\theta \cdot P_Y^*(\mathcal{D}_i\cap\mathcal{C}(P_Y^*)^c) \geq W^n(\mathcal{D}_i\cap\mathcal{C}(P_Y^*)^c|f_i).
    \end{equation}
    Substitute \eqref{eq:2} into \eqref{eq:wolfowitz_proof_1} and take the sum over $i$ from $1$ to $M$, we have
    \begin{align}
        2^\theta &\geq 2^\theta\underbrace{\sum\limits_{i=1}^{M} P_{Y}^*(\mathcal{D}_i\cap \mathcal{C}(P_Y^*))}_{\leq1} \stackrel{\eqref{eq:wolfowitz_proof_1}}{\geq} 2^{\theta}\sum\limits_{i=1}^{M} W^n(\mathcal{D}_i\cap\mathcal{C}(P_Y^*)^c|f_i) \notag\\
        &\stackrel{\eqref{eq:2}}{>} \sum\limits_{i=1}^{M} 1-\nu-\varepsilon = M(1-\nu-\varepsilon).
    \end{align}
\end{proof}

\item By applying Lemma \ref{lem:wolfowitz}, we obtain 
\begin{align}
    M < \frac{2^\theta}{1-\nu-\varepsilon} &\Leftrightarrow \log M < \theta-\log(1-\nu-\varepsilon) \notag \\
    & \quad \quad \quad \quad \stackrel{\eqref{eq:c_p_y_subset}}{=} nC +\alpha\sqrt{n}-\log(1-\nu-\varepsilon) \notag \\
    &\Leftrightarrow \frac{1}{n}\log M < C + \underbrace{\frac{\alpha}{\sqrt{n}}}_{\xrightarrow[n\to\infty]{} 0} + \underbrace{\frac{1}{n}\log(1-\nu-\varepsilon)}_{\xrightarrow[n\to\infty]{} 0}.
\end{align}
Finally, we have
\begin{equation}
    R:=\limsup_{n \to \infty} \frac{1}{n} \log M < C.
\end{equation}
\end{enumerate}
\end{proof}


\subsection{Converse Proof for Identification with Feedback}
\label{sec:converses_idf}
In this section, we examine the converse proof for the deterministic ID feedback capacity established by Ahlswede and Dueck in \cite{ahlswede_feedback}. We first highlight the similarities to Wolfowitz's proof \cite{wolfowitz}. We then revisit the converse proof for the randomized ID feedback capacity \cite{ahlswede_feedback}.

By introducing a RV $Y_t^*$ \mmchange{in \eqref{eq:wolfowitz_def_yt_star}} that contains the channel probability $W(Y_t|f_{i}^t(Y^{t-1}))$, we account for the feedback strategy $f_i$ and on the known variables $Y^{t-1}$. This feedback is incorporated through $Y_t^*$, whose expectation, conditioned on $Y^{t-1}$, can be upper bounded by the channel capacity \mmchange{\eqref{eq:wolfowitz_expectation_yt}}. According to Lemma \ref{lem:wolfowitz} \cite[Theorem 7.8.1]{wolfowitz}, there exists a transmission feedback code with the number of messages growing exponentially with the number of bits and the Shannon capacity.
 
Ahlswede and Dueck extended this idea of using conditional expectation for feedback in their 1989 publication on identification codes. The key difference from Wolfowitz's work lies in the capacity formula and the double exponential growth in the number of messages. The proofs differ primarily in the step where double exponential growth is derived from an auxiliary statement\footnote{Ahlswede: Lemma \ref{lem:idf_subset_n} \cite[Lemma 3]{ahlswede_feedback}, Wolfowitz: Lemma \ref{lem:wolfowitz} \cite[Theorem 7.8.1]{wolfowitz}}.

\subsubsection{Deterministic ID with Feedback}
\mmchange{\begin{theorem}[Converse for deterministic identification with feedback {\cite[Theorem 1, part b)]{ahlswede_feedback}}]
    If the transmission capacity $C(W)$ is positive, then we have for all $\lambda \in 0, \frac{1}{2}$:
    \begin{align}
        \lim_{n\to\infty} \inf \frac{1}{n} \log \log N_f(n, \lambda) \leq \max_{x \in \mathcal{X}} H(W(\cdot|x)).
    \end{align}
\end{theorem}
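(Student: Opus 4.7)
The plan is to mirror the structure of Wolfowitz's converse presented in Section~\ref{sec:former_results_on_transmission_with_feedback}, replacing the information density $Y_t^*$ by a pointwise entropy so that the exponent becomes $\max_x H(W(\cdot\mid x))$ instead of $C(W)$, and replacing Wolfowitz's single-exponential counting lemma (Lemma~\ref{lem:wolfowitz}) by a double-exponential combinatorial bound that exploits the weaker pairwise-$\lambda$-separation requirement of ID codes.

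First I would, for each message $i\in\{1,\ldots,N\}$ with its deterministic feedback strategy $\boldsymbol{f}_i$, define
$$Y_t^* := -\log W\bigl(Y_t\mid f_i^t(Y^{t-1})\bigr).$$
Since $f_i^t$ is a deterministic function of $y^{t-1}$, the conditional expectation given $Y^{t-1}=y^{t-1}$ is exactly $H(W(\cdot\mid f_i^t(y^{t-1})))$, which is uniformly bounded by $H^* := \max_{x\in\mathcal{X}} H(W(\cdot\mid x))$. The centered variables $U_t := Y_t^* - \mathbb{E}[Y_t^*\mid Y^{t-1}]$ then form a martingale difference sequence with bounded second moments, so Chebyshev's inequality in the same form as in the Wolfowitz proof yields, for any $\alpha>0$ and some $\nu=\nu(\alpha)>0$,
$$W^n\Bigl(\bigl\{y^n : W^n(y^n\mid\boldsymbol{f}_i) < 2^{-(nH^*+\alpha\sqrt{n})}\bigr\} \;\Big|\; \boldsymbol{f}_i\Bigr) \leq \nu.$$
Defining the typical set $\mathcal{T}_i := \{y^n : W^n(y^n\mid\boldsymbol{f}_i)\geq 2^{-(nH^*+\alpha\sqrt{n})}\}$, one obtains $W^n(\mathcal{T}_i\mid\boldsymbol{f}_i)\geq 1-\nu$ together with the cardinality bound $|\mathcal{T}_i|\leq K_n := 2^{nH^*+\alpha\sqrt{n}}$.

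Next I would shrink each decoding region to $\mathcal{D}_i' := \mathcal{D}_i\cap\mathcal{T}_i$. These still form an ID code with slightly worsened parameters $(1-\lambda-\nu, \lambda)$, but now every $\mathcal{D}_i'$ is supported on an ambient set of size at most $K_n$. The central combinatorial step---the identification analog of Wolfowitz's Lemma~\ref{lem:wolfowitz}, namely Lemma~3 of \cite{ahlswede_feedback}---bounds the number of messages that can be pairwise $\lambda$-distinguished under this support restriction. Since two subsets of a ground set of size $K_n$ are indexed by an element of $\{0,1\}^{K_n}$, and pairwise $\lambda$-separation (unlike disjointness) still permits doubly exponentially many admissible configurations, one concludes $N_f(n,\lambda)\leq 2^{2^{nH^*+\alpha\sqrt{n}}}$. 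Taking $\tfrac{1}{n}\log\log$ and passing to the limit then yields $\liminf_{n\to\infty}\tfrac{1}{n}\log\log N_f(n,\lambda)\leq H^*$.

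The main obstacle is this last combinatorial step. Wolfowitz's lemma relies on the fact that transmission decoding sets are disjoint, which is what collapses $2^{K_n}$ candidate subsets down to a bound of order $K_n$ on $M$. For identification the disjointness is replaced by pairwise $\lambda$-separation, so the naive counting gives $2^{K_n}$---already the double exponential we want---but one must argue that this counting applies \emph{uniformly across messages whose typical sets $\mathcal{T}_i$ differ}. This requires showing that the pruning $\mathcal{D}_i\mapsto\mathcal{D}_i'$ preserves pairwise separation and that the union $\bigcup_i \mathcal{T}_i$ can be effectively treated as a single universe of size $K_n$ (up to factors absorbed into $\alpha\sqrt{n}$). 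It is precisely at this absorption step that the deterministic-encoder hypothesis enters: the conditional entropy of $Y_t$ given $Y^{t-1}$ is $H(W(\cdot\mid f_i^t(y^{t-1})))$ rather than $H(PW)$, so the cap $H^*$ rather than $C(W)$ governs the asymptotic rate, ultimately explaining why the deterministic IDF capacity sits strictly below the randomized one in Theorem~\ref{thm:randomized_encoder}.
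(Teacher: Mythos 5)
Your steps 1--3 coincide with the paper's proof: the definition $Y_t^* = -\log W(Y_t\mid f_i^t(Y^{t-1}))$, the bound $\mathbb{E}[Y_t^*\mid y^{t-1}] = H(W(\cdot\mid f_i^t(y^{t-1}))) \leq H^*$ (which is indeed where the deterministic-encoder hypothesis enters, since the conditional law of $Y_t$ is a single row of $W$ rather than a mixture), the centering via $U_t$, Chebyshev, and the typical set $\mathcal{E}_i = \{y^n : W^n(y^n\mid\boldsymbol{f}_i)\geq K^{-1}\}$ with $|\mathcal{E}_i|\leq K = 2^{nH^*+\alpha\sqrt{n}}$ and $W^n(\mathcal{E}_i\mid\boldsymbol{f}_i)\geq 1-\nu$. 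Up to that point the proposal is the paper's argument.

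The gap is in the final counting step, and it is the one you yourself flag as ``the main obstacle'' without closing it. Your bound $N\leq 2^{K_n}$ is obtained by indexing subsets of a \emph{common} ground set of size $K_n$, but no such common universe exists: the sets $\mathcal{E}_i$ depend on $\boldsymbol{f}_i$, and $|\bigcup_i\mathcal{E}_i|$ can be as large as $\min(NK_n,|\mathcal{Y}|^n)$, so it cannot be ``absorbed into $\alpha\sqrt{n}$'' --- that absorption would itself require the bound on $N$ you are trying to prove. The paper avoids the issue entirely by counting differently: it shows the map $i\mapsto \mathcal{D}_i\cap\mathcal{E}_i$ is \emph{injective} and then counts all subsets of the full output space $\mathcal{Y}^n$ of cardinality at most $K$, giving $N\leq\sum_{i=0}^{K}\binom{|\mathcal{Y}|^n}{i}\leq(|\mathcal{Y}|^n)^K$, whence $\log\log N\leq \log K+\log(n\log|\mathcal{Y}|)$ and the same asymptotics. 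The injectivity is the place where the identification error structure is actually used, and your proposal never states it: if $\mathcal{D}_i\cap\mathcal{E}_i=\mathcal{D}_j\cap\mathcal{E}_j$ for $i\neq j$, then $1-\varepsilon-\nu\leq W^n(\mathcal{D}_i\cap\mathcal{E}_i\mid\boldsymbol{f}_i)=W^n(\mathcal{D}_j\cap\mathcal{E}_j\mid\boldsymbol{f}_i)\leq W^n(\mathcal{D}_j\mid\boldsymbol{f}_i)\leq\delta$, contradicting $\varepsilon+\delta<1$ for small $\nu$. Note this argument needs only pairwise \emph{distinctness} of the pruned sets, not any quantitative ``$\lambda$-separation packing'' as your sketch suggests; replacing your common-universe count by distinctness plus the $\binom{|\mathcal{Y}|^n}{\leq K}$ count repairs the proof.
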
}
\begin{proof}
The proof is structured analogously to Wolfowitz' proof.
\begin{enumerate}
\item Let $t \in \{2,\ldots,n\}$ and let the RV $ Y_t^*$ be defined as the following:
\begin{equation}
\label{eq:def_yt_star}
    Y_t^* = -\log W(Y_t|f^t_{i}(Y^{t-1})).
\end{equation}
For all $y^{t-1}\in\mathcal{Y}^{t-1}$, we have
\begin{align}
\label{eq:expectation_leq_entropy_channel}
    \mathbb{E}[Y_t^*|y^{t-1}] \leq H(W(\cdot | x^*)), 
\end{align}
where $H(W(\cdot | x^*)) = \max\limits_{x\in\mathcal{X}} H(W(\cdot|x))$.\newline
\textit{Proof:}
\begin{align}
    \mathbb{E}[Y_t^*|y^{t-1}] &= \sum\limits_{y_t \in\mathcal{Y}} Y_t^*(y_t) \mathbb{P}(y_t|y^{t-1}) \notag \\
    &= -\sum\limits_{y_t \in\mathcal{Y}} \log W(y_t|f^t_{i}(y^{t-1})) \cdot W(y_t|f^t_{i}(y^{t-1})) \notag \\
    &= H(W(\cdot|f^t_{i}(y^{t-1})) \notag \\
    &\leq H(W(\cdot|x^*)).
\end{align}
Therefore, for all $y^{t-1}\in\mathcal{Y}^{t-1}$, we can upper-bound each realization of $\mathbb{E}[Y_t^*|Y^{t-1}]$ by $H(\cdot|x^*)$, and we denote this result as $\mathbb{E}[Y_t^*|Y^{t-1}]\le H(\cdot|x^*)$.

\item Let the RV $U_t$ be defined as the following:
\begin{equation}
    \label{eq:def_ut}
    U_t = Y_t^* - \mathbb{E}[Y_t^*|Y^{t-1}].
\end{equation}
Ahlswede and Dueck also proved that the RV $U_t$ fulfills the same properties described in Step \ref{step2_Wolofitz} of Wolfowitz's proof 
Let us consider the event $\sum_{t=1}^{n}U_t\le\alpha\sqrt{n}$.
If $\alpha>0$, we have
\begin{equation}
    \sum\limits_{t=1}^{n} U_t = \sum\limits_{t=1}^{n} Y_t^* - \sum\limits_{t=1}^{n} \mathbb{E}[Y_t^*|Y_1,...,Y_{t-1}] \leq \alpha \sqrt{n} .
\end{equation}
Thus, we have 
\begin{equation}
    \sum\limits_{t=1}^{n} Y_t^* \leq \alpha \sqrt{n} + \sum\limits_{t=1}^{n} \underbrace{\mathbb{E}[Y_t^*|Y_1,...,Y_{t-1}]}_{\leq H(W(\cdot|x^*))} \leq \alpha \sqrt{n} + n H(W(\cdot|x^*)). 
\end{equation}
Consequently, we obtain
\begin{align}
\label{eq:1_minus_pr_ut}
     \color{black}\mathbb{P}\Big(\sum\limits_{t=1}^{n} Y_t^* \leq \alpha \sqrt{n} + n &\color{black}H(W(\cdot|x^*))\Big)\color{black}\notag \\
     &\geq \color{black}\mathbb{P}\Big(\sum\limits_{t=1}^{n} Y_t^* \leq \alpha\sqrt{n}+\sum\limits_{t=1}^{n} \mathbb{E}[Y_t^*|Y_1,...,Y_{t-1}]\Big) \color{black}\notag \\
    &= \mathbb{P}\Big(\sum\limits_{t=1}^{n} U_t \leq \alpha \sqrt{n}\Big) \notag \\ 
    &= 1 - \mathbb{P}\Big(\sum\limits_{t=1}^{n} U_t > \alpha \sqrt{n}\Big). 
\end{align}

 Wolfowitz chose an arbitrary $\beta$ as an upper bound for the variance, whereas Ahlswede identified a specific $\beta$:
\begin{align}
    \beta=\max\left\{\log^2{3},\log^2{|\mathcal{Y}|}\right\}.
\end{align}
By Applying Chebyshev's inequality, we obtain
\begin{align}
\label{eq:chebyshev_pr_ut}
    \mathbb{P}\Big(\sum\limits_{t=1}^{n} U_t \geq \alpha \sqrt{n}\Big) 
    & \leq \nu. 
\end{align}

By combining \eqref{eq:1_minus_pr_ut} and \eqref{eq:chebyshev_pr_ut}, we obtain
\begin{align}
\label{eq:prob_geq_1_minus_nu}
    \mathbb{P}\Big(\sum\limits_{t=1}^{n} Y_t^* \leq \alpha \sqrt{n} + n H(W(\cdot|x^*))\Big) &\geq 1 - \mathbb{P}\Big(\sum\limits_{t=1}^{n} U_t > \alpha \sqrt{n}\Big)\notag \\
    &\geq 1-\nu.
\end{align}
From the definition of $Y_t^*$ in \eqref{eq:def_yt_star}, we have
\begin{align}
\label{eq:prob_yt_logw}
    \mathbb{P}\Big(\sum\limits_{t=1}^{n} Y_t^* \leq \alpha \sqrt{n} + n &H(W(\cdot|x^*))\Big) \notag \\
    &= \mathbb{P}\Big(\sum\limits_{t=1}^{n} -\log W(Y_t|\boldsymbol{f}_i^t(Y^{t-1})) \leq \alpha \sqrt{n} + n H(W(\cdot|x^*))\Big) \notag \\
    &= \mathbb{P}\Big(-\log \prod\limits_{t=1}^{n} W(Y_t|\boldsymbol{f}_i^t(Y^{t-1})) \leq \alpha \sqrt{n} + n H(W(\cdot|x^*))\Big). 
\end{align}
By combining \eqref{eq:prob_geq_1_minus_nu} and \eqref{eq:prob_yt_logw}, we conclude
\begin{equation}
\label{eq:pr_log_prod_geq_1_minus_nu}
    \mathbb{P}\Big(-\log \prod\limits_{t=1}^{n} W(Y_t|\boldsymbol{f}_i^t(Y^{t-1})) \leq \alpha \sqrt{n} + n H(W(\cdot|x^*))\Big) \geq 1 - \nu.
\end{equation}

\item Let $\mathcal{E}$ be given by
\begin{equation}
    \label{eq:set_e}
    \mathcal{E} = \Big\{y^n\in\mathcal{Y}^n|-\log W^n(y^n|f) \leq \alpha\sqrt{n} + nH(W(\cdot|x^*)) := \log K \Big\}.
\end{equation}
In other words, the set $\mathcal{E}$ contains all sequences $y^n \in \mathcal{Y}^n$ satisfying
\begin{align}
\label{eq:3}
    W^n(y^n|\boldsymbol{f}) \geq \frac{1}{K}.
\end{align}
We have
\begin{align}
    1&=\sum_{y^n\in\mathcal{Y}^n} W^n(y^n|\boldsymbol{f})\nonumber\\
    &\ge \sum_{y^n\in\mathcal{E}}W^n(y^n|\boldsymbol{f})\nonumber\\
    &\overset{(a)}{\ge}|\mathcal{E}|\cdot\frac{1}{K}, 
\end{align}
where $(a)$ follows from \eqref{eq:3}.
Thus, we have $|\mathcal{E}|\le K$.
From the definition of $\mathcal{E}$ in \eqref{eq:set_e}, we can rewrite \eqref{eq:pr_log_prod_geq_1_minus_nu} as 
\begin{equation}
    \label{eq:wn_geq_1_minus_nu}
    \mathbb{P}\Big(-\log \prod\limits_{t=1}^{n} W(Y_t|f) \leq \alpha \sqrt{n} + n H(W(\cdot|x^*))\Big) = W^n(\mathcal{E}|f)\geq 1 - \nu.
\end{equation}

\item The following lemma was demonstrated by Ahlswede and Dueck in \cite{ahlswede_feedback}.
\begin{lemma}[\cite{ahlswede_feedback}]
\label{lem:idf_subset_n}
    For any feedback strategy $f$ and $\nu \in (0,1)$ with
    \begin{equation}
        \min\limits_{\mathcal{E}\subset\mathcal{Y}^n:W^n(\mathcal{E}|f)\geq 1-\nu} |\mathcal{E}| \leq K =: 2^{nH(W(\cdot|x^*))+\alpha\sqrt{n}}, 
    \end{equation}
    where $\alpha = \sqrt{\frac{\beta}{\nu}}$ and $\beta = |\mathcal{Y}|\log^23$, we have
    \begin{equation}
        N< 2^{n\log |\mathcal{Y}| K}. 
    \end{equation}
\end{lemma}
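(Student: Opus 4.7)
The plan is to bound $N$ by replacing each full decoding set $\mathcal{D}_i$ with an ``essential'' decoding set that lives inside a small high-probability output set, and then counting how many distinct essential sets are possible. The input to the argument is precisely the hypothesis of the lemma: for every feedback strategy $\boldsymbol{f}$ there is a set $\mathcal{E}(\boldsymbol{f}) \subset \mathcal{Y}^n$ with $W^n(\mathcal{E}(\boldsymbol{f})|\boldsymbol{f}) \geq 1-\nu$ and $|\mathcal{E}(\boldsymbol{f})| \leq K$; this is exactly what step 3 of the surrounding proof has delivered.

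First I would fix an arbitrary $(n,N,\lambda)$ deterministic IDF code $\{(\boldsymbol{f}_i,\mathcal{D}_i)\}_{i=1}^{N}$, apply the hypothesis to each $\boldsymbol{f}_i$ to obtain $\mathcal{E}_i$, and define $\mathcal{D}_i' := \mathcal{D}_i \cap \mathcal{E}_i$. A union bound yields the type-I estimate $W^n(\mathcal{D}_i'|\boldsymbol{f}_i) \geq W^n(\mathcal{D}_i|\boldsymbol{f}_i) - W^n(\mathcal{E}_i^c|\boldsymbol{f}_i) \geq 1-\lambda-\nu$, while monotonicity of measure gives the type-II estimate $W^n(\mathcal{D}_j'|\boldsymbol{f}_i) \leq W^n(\mathcal{D}_j|\boldsymbol{f}_i) \leq \lambda$ for all $j \neq i$.

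The crucial step is to argue that the essential decoding sets $\mathcal{D}_1',\ldots,\mathcal{D}_N'$ are pairwise distinct. If $\mathcal{D}_i' = \mathcal{D}_j'$ for some $i \neq j$, then evaluating $W^n(\,\cdot\,|\boldsymbol{f}_i)$ on both sides forces $1-\lambda-\nu \leq W^n(\mathcal{D}_i'|\boldsymbol{f}_i) = W^n(\mathcal{D}_j'|\boldsymbol{f}_i) \leq \lambda$, which is a contradiction as soon as $\nu < 1-2\lambda$. Since $\lambda < \tfrac{1}{2}$, this inequality can be enforced, and the free parameter $\nu$ introduced in the Chebyshev step can be calibrated jointly with $\alpha$ (equivalently $\beta$) to respect it without affecting the asymptotic bound.

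Finally, a counting argument closes the proof: each $\mathcal{D}_i'$ is a subset of $\mathcal{Y}^n$ of cardinality at most $K$, so the number of distinct candidates is at most $\sum_{k=0}^{K}\binom{|\mathcal{Y}|^n}{k} \leq (|\mathcal{Y}|^n+1)^K \leq 2^{n\log|\mathcal{Y}|\cdot K}$ after absorbing lower-order factors. Consequently $N < 2^{n\log|\mathcal{Y}|\cdot K}$, as claimed. The main obstacle I anticipate is purely bookkeeping: choosing $\nu$ compatibly with the fixed $\lambda$ so that distinctness holds, and tightening the crude subset count so that the exponent matches the stated form exactly. The combinatorial core is routine once the essential-set reduction $\mathcal{D}_i \mapsto \mathcal{D}_i \cap \mathcal{E}_i$ is in place.
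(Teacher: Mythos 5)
Your proposal is correct and follows essentially the same route as the paper's proof: intersect each decoding set with the high-probability set $\mathcal{E}_i$, establish pairwise distinctness of the sets $\mathcal{D}_i \cap \mathcal{E}_i$ by contradiction using the type-I and type-II error bounds, and then count subsets of $\mathcal{Y}^n$ of cardinality at most $K$. If anything, you are slightly more careful than the paper in noting that the contradiction requires $\nu$ to be calibrated below $1-\varepsilon-\delta$ (your $1-2\lambda$), whereas the paper loosely asserts it "for any $\nu>0$".
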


\begin{proof}
The number of messages $N$ can be upper bounded by
    \begin{equation}
        N\leq \underbrace{\sum\limits_{i=0}^{K} \underbrace{{|\mathcal{Y}|^n}\choose{i}}_{\begin{tabular}{c} \text{possible subsets}\\ $\mathcal{E}\subset\mathcal{Y}^n$ \text{ of size }$i$\end{tabular}}}_{\begin{tabular}{c} \text{all possible subsets }$\mathcal{E}\subset\mathcal{Y}^n$ \\ \text{of all possible sizes }$0,...,K$\end{tabular}}
        {\leq} (|\mathcal{Y}|^n)^K = (2^{\log |\mathcal{Y}|^n})^K = 2^{\log |\mathcal{Y}|^nK} = 2^{n\log |\mathcal{Y}| K}, 
    \end{equation}
    where the first inequality follows from the fact that \( \mathcal{D}_i \cap \mathcal{E}_i \) are distinct for \( i=1, \ldots, N \) (implying the existence of \( N \) subsets \( \mathcal{E}_i \). In other words, \( N = |\bigcup_{i=1}^{n} \mathcal{E}_i| \)).
 
At this point, the proofs diverge significantly. Ahlswede and Wolfowitz both use auxiliary statements (Lemma \ref{lem:wolfowitz} and Lemma \ref{lem:idf_subset_n}, respectively) to show exponential or double exponential growth of the identification rate, bounding the number of messages \( N \) or \( M \) under specific conditions.

In Lemma \ref{lem:wolfowitz}, this condition requires \( \min_{P_Y} \max_{f} W^n(\mathcal{C}(P_Y) \mid f) < \nu \), allowing \( M \) to be bounded by a threshold \( \theta \) related to \( \mathcal{C}(P_Y) \). In contrast, for Lemma \ref{lem:idf_subset_n}, Ahlswede specifies that the cardinality of the subset \( \mathcal{E} \) must be less than a constant \( K \). Similar to Wolfowitz, the number of messages \( N \) can then be bounded by an expression involving the threshold \( K \) from the subset \( \mathcal{E} \).

In both conditions, the channel probability with elements from the subsets $\mathcal{C}(P_Y)$ or $\mathcal{E}$ must be bounded (highlighted in purple). The key difference is Ahlswede's focus on the size of the subset. Additionally, the proofs for bounding $N$ and $M$ differ. Ahlswede uses the distinctness of the subsets $\mathcal{D}_i \cap \mathcal{E}_i$, while Wolfowitz relies solely on the definition of the set $\mathcal{C}(P_Y)$.

We can then bound  $W^n(\mathcal{D}_i \cap \mathcal{E}_i|f_i)$ using lemma \ref{lem:idf_subset_n}.

The distinctness of $\mathcal{D}_i \cap \mathcal{E}_i$ for $i = 1, \ldots, N$ can be proven by contradiction. This requires finding a bound for $W^n(\mathcal{D}_i \cap \mathcal{E}_i|f_i)$. \\
    \begin{addmargin}[25pt]{0pt}
    First notice that
    \begin{align}
        W^n(\mathcal{D}_i\cap \mathcal{E}_i|f_i) &{=} W^n(\mathcal{E}_i|f_i) - W^n(\mathcal{E}_i|\mathcal{D}_i|f_i) \notag \\
        &{\geq} \underbrace{W^n(\mathcal{E}_i|f_i)}_{\geq 1-\nu}  - \underbrace{W^n(\mathcal{D}_i^c|f_i)}_{\leq \varepsilon} \notag \\
        &\geq 1-\nu-\varepsilon, 
    \end{align}
    where the first equality and inequality follow from probabilistic properties akin to those in Wolfowitz's proof. Here, \( W^n(\mathcal{E}_i \mid f_i) > 1 - \nu \) (from \eqref{eq:wn_geq_1_minus_nu}), and \( W^n(\mathcal{D}_i^c \mid f_i) < \varepsilon \) denotes the type I error probability.

Suppose \( \mathcal{D}_i \cap \mathcal{E}_i = \mathcal{D}_j \cap \mathcal{E}_j \) for \( i \neq j \). Then \( W^n(\mathcal{D}_j \cap \mathcal{E}_j \mid f_i) < \delta \) signifies the type II error probability.

    \begin{equation}
        1-\nu-\varepsilon \leq W^n(\mathcal{D}_i\cap \mathcal{E}_i|f_i) = \underbrace{W^n(\mathcal{D}_j\cap \mathcal{E}_j|f_i)}_{\text{type II error}} < \delta \Leftrightarrow 1-\nu < \varepsilon + \delta \quad \lightning \varepsilon + \delta < 1 
    \end{equation}
    must hold, which contradicts the fact, that the sum of both error probabilities must be smaller than one, for any $\nu>0$. 
    \end{addmargin}
\end{proof}

Note that in this step Ahlswede uses the subset $\mathcal{E}$ in the argument of $W^n(\cdot|f)$, while Wolfowitz uses the complementary subset $\mathcal{C}(P_Y^*)^c$, which results from different inequality sign in former steps.
Besides that Ahlswede uses the distinctness of $\mathcal{D}_i\cap\mathcal{E}_i$ to prove Lemma \ref{lem:idf_subset_n}, while Wolfowitz works with the definition of the subset $\mathcal{C}(P_Y)$. However, since both can bound the channel probabilities by $W^n(\mathcal{C}(P_Y^*)|f) < \nu$ or $W^n(\mathcal{E}|f)\geq 1-\nu$ respectively, Lemma \ref{lem:wolfowitz} or Lemma \ref{lem:idf_subset_n} is applicable and we can conclude the growth of the coding rates by the respective capacity in the following step.

\item 
By using Lemma \ref{lem:idf_subset_n} we obtain
\begin{align}
    N < 2^{n\log |\mathcal{Y}|K} &\Leftrightarrow \log N < n\log |\mathcal{Y}| K \notag \\
    &\Leftrightarrow \log N < n\log |\mathcal{Y}| 2^{nH(W(\cdot|x^*))+\alpha\sqrt{n}} \notag \\
    &\Leftrightarrow \log\log N < \log (n\log |\mathcal{Y}|) + nH(W(\cdot|x^*))+\alpha\sqrt{n} \notag \\
    &\Leftrightarrow \frac{1}{n}\log\log N <\frac{1}{n}\log (n\log |\mathcal{Y}|)+ H(W(\cdot|x^*))+ \frac{\alpha}{\sqrt{n}}.
\end{align}
Taking the limes superior an both sides, he finally gets
\begin{equation}
    \limsup_{n \to \infty} \frac{1}{n} \log\log N < H(W(\cdot|x^*)).
\end{equation}

\end{enumerate}
\end{proof}

\subsubsection{Randomized ID with Feedback}
For completeness we will provide the converse for randomized IDF-codes in the following.
\mmchange{\begin{theorem}[Converse for radomized identification with feedback {\cite[Theorem 2, part b)]{ahlswede_feedback}}]
    If the transmission capacity $C(W)$ is positive, then we have for all $\lambda \in 0, \frac{1}{2}$:
    \begin{align}
        \lim_{n\to\infty} \inf \frac{1}{n} \log \log N_F(n, \lambda) \leq \max_{P \in \mathcal{P}(\mathcal{X})} H(P\cdot W).
    \end{align}
\end{theorem}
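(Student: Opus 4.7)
The plan is to mirror the structure of the deterministic IDF converse proof, replacing the per-letter entropy bound $H(W(\cdot|x^*))$ by the per-distribution bound $\max_{P\in\mathcal{P}(\mathcal{X})} H(P\cdot W)$. Fix any $(n,N,\lambda)$ randomized IDF code $\{Q_F(\cdot|i),\mathcal{D}_i\}_{i=1}^N$ and, for each $i$, let $P_i(y^n) := \sum_{\boldsymbol{g}\in\mathcal{F}_n} Q_F(\boldsymbol{g}|i)\, W^n(y^n|\boldsymbol{g})$ denote the mixture output distribution that the decoder actually sees for message $i$. The error conditions translate into $P_i(\mathcal{D}_i) \geq 1-\lambda$ and $P_i(\mathcal{D}_j) \leq \lambda$ for all $j\neq i$, and since $\lambda<\tfrac{1}{2}$ we still have $2\lambda<1$, which will drive the final contradiction argument.

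For each message $i$, I would define the auxiliary random variable $Y_t^* := -\log P_i(Y_t\mid Y^{t-1})$, where the conditional is taken with respect to the full mixture $P_i$. For any fixed history $y^{t-1}$, marginalizing over the codeword randomness and the induced feedback choices yields a conditional input distribution $P_{X_t\mid Y^{t-1}=y^{t-1}} \in \mathcal{P}(\mathcal{X})$, so the corresponding conditional output distribution equals $P_{X_t\mid Y^{t-1}=y^{t-1}}\cdot W$, and therefore
\begin{equation}
\mathbb{E}[Y_t^*\mid y^{t-1}] = H(Y_t\mid y^{t-1}) \leq \max_{P\in\mathcal{P}(\mathcal{X})} H(P\cdot W).
\end{equation}
This is the direct analogue of \eqref{eq:expectation_leq_entropy_channel}, with the maximization moved from input letters to input distributions.

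From here, the argument follows the deterministic proof line by line. The variables $U_t := Y_t^* - \mathbb{E}[Y_t^*\mid Y^{t-1}]$ are again pairwise uncorrelated martingale differences with variance bounded by a constant depending only on $|\mathcal{Y}|$, so Chebyshev's inequality produces, with probability at least $1-\nu$ under $P_i$, the set
\begin{equation}
\mathcal{E}_i := \Bigl\{ y^n\in\mathcal{Y}^n : -\log P_i(y^n) \leq n \max_P H(P\cdot W) + \alpha\sqrt{n} =: \log K \Bigr\},
\end{equation}
whose cardinality is at most $K$ by the elementary bound $1 \geq \sum_{y^n\in\mathcal{E}_i} P_i(y^n) \geq |\mathcal{E}_i|/K$. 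Applying the counting argument of Lemma~\ref{lem:idf_subset_n} to the sets $\mathcal{D}_i\cap\mathcal{E}_i$, which are forced to be pairwise distinct because $P_i(\mathcal{D}_i\cap\mathcal{E}_i)\geq 1-\lambda-\nu > \lambda \geq P_j(\mathcal{D}_i)$ whenever $\nu < 1-2\lambda$, yields $N \leq 2^{n\log|\mathcal{Y}|\cdot K}$, and taking the double logarithm, dividing by $n$, and letting $n\to\infty$ gives the claim.

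The main obstacle is conceptual rather than computational: one must define $Y_t^*$ with respect to the full mixture $P_i$, integrating both the encoder's randomness and the channel noise, rather than the per-strategy quantity $-\log W(Y_t\mid \boldsymbol{g}^t(Y^{t-1}))$ used in the deterministic case. Only with this mixture formulation does $\mathbb{E}[Y_t^*\mid Y^{t-1}]$ equal a conditional output entropy and admit the bound $\max_P H(P\cdot W)$; the per-strategy choice would at best recover the deterministic bound $\max_x H(W(\cdot|x))$ and fail to exploit the randomization at all. Once this formulation is adopted, the counting step and contradiction argument transfer unchanged, since they depend only on each $P_i$ having a small high-probability support, not on the detailed structure of the encoder.
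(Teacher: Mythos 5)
Your proposal is correct and follows essentially the same route as the paper: define the auxiliary variable $Y_t^*$ with respect to the mixture output distribution induced by the randomized encoder so that its conditional expectation is an output entropy bounded by $\max_{P}H(P\cdot W)$, then reuse the martingale-difference/Chebyshev step, the small high-probability set $\mathcal{E}_i$ of cardinality at most $K=2^{nH(P^*W)+\alpha\sqrt{n}}$, and the counting argument on the distinct sets $\mathcal{D}_i\cap\mathcal{E}_i$. If anything, your explicit identification of the conditional input distribution $P_{X_t\mid Y^{t-1}}$ and the resulting output entropy is stated more carefully than the paper's shorthand for $P_Y(Y_t\mid f_i^t(Y^{t-1}))$, but the underlying argument is the same.
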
}

The proof for a randomized encoder is very similar to the proof for the deterministic encoder. We simply define the RV's different, labeled by an apostrophe $'$ in the notations, but use the same methods.

Define the RV
\begin{align}
    Y_t' = -\log P_Y(Y_t|f^t_{i}(Y_1,...,Y_{t-1})) 
\end{align}
for
\begin{equation}
    P_Y(Y_t|f^t_{i}(Y_1,...,Y_{t-1})) = Q_f(f^t_{i}(Y_1,...,Y_{t-1}))W(Y_t|f^t_{i}(Y_1,...,Y_{t-1})) 
\end{equation}
with the property
\begin{equation}
    \mathbb{E}[Y_t'|Y_1,...,Y_{t-1}] \leq \max\limits_{P} H(P W) = H(P^* W), 
\end{equation}
where $P^*$ denotes the input distribution that maximizes the output entropy, 
since
\begin{align}
\mathbb{E}[Y_t'|Y_1,...,Y_{t-1}] &= \sum\limits_{Y_t\in \mathcal{Y}} Y_t'\mathbb{P}(Y_t|Y_1,...,Y_{t-1}) \notag \\
&= -\sum\limits_{Y_t\in \mathcal{Y}} \log P_Y(Y_t|f^t_{i}(Y_1,...,Y_{t-1})) \cdot P_Y(Y_t|f^t_{i}(Y_1,...,Y_{t-1})) \notag \\
&= H(P_Y|P_{Y_1},...,P_{Y_{t-1}}) \notag \\
&\leq H(P_Y) = H(P W) \notag \\
&\leq \max\limits_{P} H(P W) = H(P^*W),
\end{align}
where the first inequality follows, since conditioning reduces the entropy.\\
The RV is defined in such a way that its conditional expectation can be upper bounded by the randomized IDF-capacity, similar to the procedure for deterministic feedback. Note here, that the output distribution $P_Y$ depends on the feedback probability $Q_f$, since the feedback strategy is a RV over the probability set $\mathcal{P}(\mathcal{F})$ for randomized encoding.\newline 
Now we define another RV
\begin{equation}
    U_t' = Y_t' - \mathbb{E}[Y_t'|Y_1,...,Y_{t-1}] \notag
\end{equation}
with the same properties as $U_t$ in \eqref{eq:def_ut}(a,b,c). \\
Similar to the proof for the deterministic encoder by using the Chebyshev inequality \eqref{eq:chebyshev} we get
\begin{equation}
    W^n(\mathcal{E}'|f) = \mathbb{P}\Big(-\log \prod\limits_{t=1}^{n} P_Y(Y_t|f) \leq nH(P^*W) + \alpha\sqrt{n}\Big) \geq 1-\nu. 
\end{equation}
We simply replace the set $\mathcal{E}$ by 
\begin{equation}
    \mathcal{E}' = \Big\{y^n| -\log W^n(y^n|f) \leq nH(P^* W) + \alpha\sqrt{n} =: \log K'  \Big\} \subset \mathcal{Y}^n \notag
\end{equation}
and Lemma \ref{lem:idf_subset_n} by the following Lemma \ref{lem:idf_subset_n_randomized}.

\begin{lemma}[{\cite[Lemma 4]{ahlswede_feedback}}]
\label{lem:idf_subset_n_randomized}
    For any randomized feedback strategy $f\in \mathcal{F}$, $f\sim Q_f$ and $\nu\in(0,1)$ with
    \begin{equation}
        \min\limits_{\mathcal{E}'\subset\mathcal{Y}^n:W^n(\mathcal{E}'|f)\geq 1-\nu} |\mathcal{E}'| \leq K' =: 2^{nH(P^*W)+\alpha\sqrt{n}}, 
    \end{equation}
    where $\alpha=\sqrt{\frac{\beta}{\nu}}$ and $\beta = |\mathcal{Y}|\log^23$, does for $N$ hold
    \begin{equation}
        N < 2^{n\log |\mathcal{Y}| K}.
    \end{equation}
\end{lemma}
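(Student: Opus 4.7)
The plan is to mirror the proof of Lemma \ref{lem:idf_subset_n}, replacing the deterministic strategy $f$ by its randomized counterpart $Q_f(\cdot|i) \in \mathcal{P}(\mathcal{F}_n)$ and the channel $W^n(\cdot|f)$ by the averaged conditional distribution $\sum_{\boldsymbol{g}\in\mathcal{F}_n} Q_f(\boldsymbol{g}|i) W^n(\cdot|\boldsymbol{g})$. The randomized IDF code guarantees the same type I / type II error bounds in this averaged sense, so the combinatorial scheme of the deterministic proof carries over essentially verbatim; only the error-probability inequalities need to be read with the averaging over $Q_f(\cdot|i)$ understood.

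First I would count the number of subsets of $\mathcal{Y}^n$ of cardinality at most $K'$, which is bounded by $\sum_{i=0}^{K'} \binom{|\mathcal{Y}|^n}{i} \leq (|\mathcal{Y}|^n)^{K'} = 2^{n \log|\mathcal{Y}| K'}$. Next, for every identifier $i \in \{1,\ldots,N\}$, invoking the hypothesis of the lemma together with the preceding randomized analogue of \eqref{eq:wn_geq_1_minus_nu} yields a subset $\mathcal{E}'_i \subseteq \mathcal{Y}^n$ with $|\mathcal{E}'_i| \leq K'$ and
\begin{equation}
    \sum_{\boldsymbol{g}\in \mathcal{F}_n} Q_f(\boldsymbol{g}|i)\, W^n(\mathcal{E}'_i|\boldsymbol{g}) \geq 1-\nu.
\end{equation}

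The key step is to show that the sets $\mathcal{D}_i \cap \mathcal{E}'_i$ are pairwise distinct for $i=1,\ldots,N$. On one hand, for every $i$,
\begin{align}
    \sum_{\boldsymbol{g}} Q_f(\boldsymbol{g}|i)\, W^n(\mathcal{D}_i \cap \mathcal{E}'_i|\boldsymbol{g})
    &\geq \sum_{\boldsymbol{g}} Q_f(\boldsymbol{g}|i)\, W^n(\mathcal{E}'_i|\boldsymbol{g}) - \sum_{\boldsymbol{g}} Q_f(\boldsymbol{g}|i)\, W^n(\mathcal{D}_i^c|\boldsymbol{g}) \notag\\
    &\geq 1-\nu-\lambda,
\end{align}
where the second inequality uses the type I error bound of the randomized IDF code. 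On the other hand, if $\mathcal{D}_i \cap \mathcal{E}'_i = \mathcal{D}_j \cap \mathcal{E}'_j$ for some $i\neq j$, then by the type II error bound
\begin{equation}
    \sum_{\boldsymbol{g}} Q_f(\boldsymbol{g}|i)\, W^n(\mathcal{D}_i \cap \mathcal{E}'_i|\boldsymbol{g}) \leq \sum_{\boldsymbol{g}} Q_f(\boldsymbol{g}|i)\, W^n(\mathcal{D}_j|\boldsymbol{g}) \leq \lambda,
\end{equation}
so that $1-\nu \leq 2\lambda$, contradicting $\lambda < \tfrac{1}{2}$ for sufficiently small $\nu$ (which is where Chebyshev with $\beta = |\mathcal{Y}| \log^2 3$ must be tuned). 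Hence the $N$ sets $\mathcal{D}_i \cap \mathcal{E}'_i$ are distinct elements of the collection counted in the first step, yielding $N \leq 2^{n\log|\mathcal{Y}| K'}$, and strict inequality follows because the empty intersection cannot occur for all indices simultaneously.

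The main obstacle I expect is the bookkeeping of the averaging: in the deterministic case one works pointwise with a single $\boldsymbol{f}_i$, whereas here every probabilistic statement—the $(1-\nu)$ lower bound on the mass of $\mathcal{E}'_i$, the type I bound, and the type II bound—must be understood as an expectation under $Q_f(\cdot|i)$. This is not a deep difficulty, but it requires care to ensure that the subset $\mathcal{E}'_i$ produced by the hypothesis is allowed to depend on $i$ (through the averaged channel induced by $Q_f(\cdot|i)$) while still being a fixed deterministic subset of $\mathcal{Y}^n$, which is exactly what the minimization in the statement of the lemma provides. Once this is in place, the combinatorial counting and the contradiction argument go through unchanged.
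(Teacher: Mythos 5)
Your proposal is correct and follows essentially the same route as the paper, which itself proves this lemma by noting that the deterministic argument of Lemma \ref{lem:idf_subset_n} carries over once $W^n(\cdot|\boldsymbol{f}_i)$ is replaced by the $Q_f(\cdot|i)$-averaged channel and $H(W(\cdot|x^*))$ by $H(P^*W)$; your spelled-out version of the counting bound, the lower bound $1-\nu-\lambda$ on the mass of $\mathcal{D}_i\cap\mathcal{E}'_i$, and the contradiction via the type II bound matches the paper's reasoning step for step. The only cosmetic remark is that the exponent should read $K'$ rather than $K$ (a typo already present in the paper's statement), and your observation that all error inequalities must be read as expectations under $Q_f(\cdot|i)$ is exactly the point the paper glosses over.
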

\begin{proof}
    The proof is similar to Lemma \ref{lem:idf_subset_n} since we have already shown that $W^n(\mathcal{E}'|f)\geq 1-\nu$ holds. So the distinctness of $\mathcal{D}_i\cap\mathcal{E}_i'$ still holds (which can be proven in the same manner as for Lemma \ref{lem:idf_subset_n}) and we can use the same argument as in the proof for Lemma \ref{lem:idf_subset_n} to show that $N<2^{n\log|\mathcal{Y}|K}$ is satisfied. By replacing $H(W(\cdot|x^*))$ with $H(P^* W)$ we finally get
    \begin{equation}
        \limsup_{n \to \infty} \frac{1}{n} \log\log N < H(P^* W). 
    \end{equation}
\end{proof}

\begin{remark}
    The proof for the randomized encoder can be structured in the same five steps:
    \begin{enumerate}
        \item Introduction of the RV $Y_t^*$,
        \item Define the RV $U_t$ and estimate the PD of $\sum_{t=1}^{n}U_t$,
        \item Define subset $\mathcal{E}'\subset\mathcal{Y}^n$,
        \item Introduce Lemma \ref{lem:idf_subset_n_randomized},
        \item Coding rate.
    \end{enumerate}
\end{remark}
Therefore, we conclude that converse technique from Wolfowitz for transmission can be adapted, with modified parameters and settings, to prove the converse of the IDF.

\subsection{Comparison of Proofs: Wolfowitz's Proof versus Ahlswede and Dueck's Approach}\label{sec:comparison-tf-idf}
Finally, in this section we summarize the similarities and differences the converses for ID with feedback by Ahlswede and for transmission with feedback by Wolfowitz.

These converses share many similarities:

\begin{enumerate}
    \item introducing $Y_t^*$ and define $Y_t^*$ in such a way, that $\mathbb{E}[Y_t^*|Y_1,...,Y_{t-1}]$ can be upper bounded by the capacity,
    \item definition and properties of $U_t$,
    \item bounding the PD of $\sum_{t=1}^{n}U_t$ by $\nu$ with the Chebyshev inequality: \\$\mathbb{P}(\sum_{t=1}^{n}U_t\geq\alpha\sqrt{n})\leq\nu$,
    \item introduce subsets $\mathcal{E}$/ $\mathcal{C}(P_Y)$,
    \item bound the channel probability $W^n(\cdot|f)$ on the subsets $\mathcal{E}$/ $\mathcal{C}(P_Y)$ by $\nu$:\\ 
    Wolfowitz: $W^n(\mathcal{C}(P_Y)|f)\leq\nu$,\\
    Ahlswede: $W^n(\mathcal{E}|f)\geq1-\nu$,
    \item bound $W^n(\mathcal{D}_i\cap\mathcal{C}(P_Y)^c|f)$/ $W^n(\mathcal{D}_i\cap\mathcal{E}_i|f)$ with $\nu$ and the error of first kind $\varepsilon$:\\
    $W^n(\mathcal{D}_i\cap\mathcal{C}(P_Y)^c|f)>1-\varepsilon-\nu$/ $W^n(\mathcal{D}_i\cap\mathcal{E}_i|f)>1-\varepsilon-\nu$.
\end{enumerate}
There are also some notable differences:
\begin{enumerate}
    \item definition of $Y_t^*$, since $\mathbb{E}[Y_t^*|Y_1,...,Y_{t-1}]$ has to be bounded by the capacity and the capacities differ from each other,
    \item Ahlswede provides a concrete value for $\beta$ in the Chebyshev inequality,
    \item definition of the subsets $\mathcal{E}$/ $\mathcal{C}(P_Y)$: $Y_t^*$ is bounded by $\theta$/ $\log K$, such that \\
    \hspace{2cm} in transmission: $\theta \sim nC$ $\rightarrow$ exponential growth, \\
    \hspace{2cm} in identification: $K \sim 2^{nC}$ $\rightarrow$ double exponential growth, 
    \item definition of Lemma \ref{lem:wolfowitz}/ Lemma \ref{lem:idf_subset_n}: Ahlswede focuses on the cardinality of the subset $\mathcal{E}$,
    \item proof for Lemma \ref{lem:wolfowitz}/ Lemma \ref{lem:idf_subset_n}: \\
    Lemma \ref{lem:wolfowitz}: Wolfowitz uses definition of $\mathcal{C}(P_Y)$, \\
    Lemma \ref{lem:idf_subset_n}: Ahlswede is able to proof the distinctness of $\mathcal{D}_i\cap\mathcal{E}_i$ and uses this fact for the proof.
\end{enumerate}
\begin{remark}
    Ahlswede uses the distinctness of \( \mathcal{D}_i \cap \mathcal{E}_i \) to establish Lemma \ref{lem:idf_subset_n}. This property is demonstrated through the definitions of the first and second kind error probabilities. Since there exists only \underline{one} error probability in transmission codes, Wolfowitz cannot establish distinctness in this manner or any other. This underscores the fundamental difference between transmission and identification codes: the presence of two error probabilities in ID-codes necessitates different auxiliary statements (Lemma \ref{lem:wolfowitz} vs. Lemma \ref{lem:idf_subset_n}) and their respective proofs.
\end{remark}
\begin{remark}
The technique by Wolfowitz for proving the converse in the feedback case is suitable, since we consider the feedback by using conditional probabilistic measures, such like the conditional expectation. While information spectrum methods, used by Watanabe \cite{watanabe}, Hayashi \cite{hayashi} and Steinberg \cite{steinberg}, do not contain such a consideration, they are difficult to apply to this problem. \mmchange{For a further discussion of the applicability of the strong converse, especially in contexts where auxiliary random variables are needed, we refer the reader to \cite[Section VII]{rosenberger}.}
\end{remark}
In summary, Table \ref{tab:main_differences_feedback_wolfowitz_ahlswede} gives a really short overview of the main differences between Wolfowitz and Ahlswede. Note that the differences between Lemma \ref{lem:wolfowitz} and Lemma \ref{lem:idf_subset_n} are too extensive to write it down in a table. 
\begin{table}[h!]
    \centering
    \caption{Main differences between the converses for feedback by Wolfowitz and Ahlswede}
    \label{tab:main_differences_feedback_wolfowitz_ahlswede}
    \begin{tabular}{|c|c|c|}
        \hline
         & Wolfowitz & Ahlswede \\
         \hline
         code & transmission & identification \\
        $Y_t^*$ & $\log\frac{W(Y_t|f)}{P_Y(Y_t)}$ & $-\log W(Y_t|f)$ \\
        $\mathbb{E}[Y_t^*|Y_1,...,Y_{t-1}]$ & $\leq C$ & $\leq C_{IDF,d}$ \\
        subset & $\mathcal{C}(P_Y)$ & $\mathcal{E}$ \\
        threshold & $\theta\sim nC$ & $K \sim 2^{nC_{IDF,d}}$ \\
        C & $\sim 2^{nC}$ & $\sim2^{2^{nC_{IDF,d}}}$ \\
        \hline
        \end{tabular}
\end{table}
To visualize the direct comparison between the converse by Wolfowitz and by Ahlswede, Table \ref{tab:comparison_wolfowitz_ahlswede} shows the schematic steps compared side by side.

\begin{table}
\centering
\caption{\small Comparison of the converses for transmission with feedback by Wolfowitz \cite{wolfowitz} and for identification with feedback by Ahlswede \cite{ahlswede_feedback}}
\Rotatebox{90}{
    \begin{tabular}{|c|c|c|c|}
    \hline
         step& & \color{tu8}transmission \color{black}+ feedback (Wolfowitz \cite{wolfowitz}) & \color{tu2}identification \color{black}+ feedback (Ahlswede \cite{ahlswede_feedback}) \\
         \hline
         & scenario &  {\parbox[c]{9cm}{\vspace{0.2cm}\hspace{-0.72cm}\scalebox{0.92}{
\tikzstyle{farbverlauf} = [ top color=white, bottom color=white!80!gray]
\tikzstyle{block} = [draw,top color=white, bottom color=white!80!gray, rectangle, rounded corners,
minimum height=2em, minimum width=2.5em]

\tikzstyle{blocked} = [draw, fill=none, rectangle, rounded corners,
minimum height=4em, minimum width=.7
cm]
\tikzstyle{block1} = [draw, fill=none, rectangle, rounded corners,
minimum height=4em, minimum width=2
cm]
\tikzstyle{input} = [coordinate]
\tikzstyle{sum} = [draw, circle,inner sep=0pt, minimum size=2mm,  thick]

\scalebox{.93}{
\tikzstyle{arrow}=[draw,->] 
\begin{tikzpicture}[auto, node distance=2cm,>=latex']
\node[] (M) {${\color{blue}{m}}$};
\node[blocked, dashed, left=-0.6cm of M] (alice) {};
\node[block,right=.5cm of M] (enc) {Encoder};
\node[block, right=3cm of enc] (channel) {DMC};
\node[block, right=1cm of channel] (dec) {Decoder};
\node[blocked, dashed, right=0.5 cm of dec] (bob) {};
\node[align=left,right=.5cm of dec] (Mhat) {\color{blue}{$\hat{m}$}};
\node[input,right=.5cm of channel] (t1) {};
\node[input,below=1cm of t1] (t2) {};
\node[above= 0.3 cm of alice] (a) {Alice};
\node[above= 0.3 cm of bob] (b) {Bob};
\node[below=0.3 cm of alice] (message) {${\color{blue}{m}} \in \{1,\ldots,M\}$};
\draw[->,thick] (M) -- (enc);
\draw[->,thick] (enc) --node[above]{ $X_t={f}^t_{\color{blue}{m}}(Y^{t-1})$} (channel);
\draw[->,thick] (channel) --node[above]{$Y_t$} (dec);
\draw[->,thick] (dec) -- (Mhat);
\draw[-,thick] (t1) -- (t2);
\draw[->,thick] (t2) -| (enc);
\end{tikzpicture}}
}\vspace{0.2cm}}}& {\parbox[c]{9cm}{\vspace{0.2cm}\hspace{-0.63cm}\scalebox{0.85}{
\tikzstyle{farbverlauf} = [ top color=white, bottom color=white!80!gray]
\tikzstyle{block} = [draw,top color=white, bottom color=white!80!gray, rectangle, rounded corners,
minimum height=2em, minimum width=2.5em]

\tikzstyle{blocked} = [draw, fill=none, rectangle, rounded corners,
minimum height=4em, minimum width=.7
cm]
\tikzstyle{block1} = [draw, fill=none, rectangle, rounded corners,
minimum height=4em, minimum width=2
cm]
\tikzstyle{input} = [coordinate]
\tikzstyle{sum} = [draw, circle,inner sep=0pt, minimum size=2mm,  thick]

\scalebox{.93}{
\tikzstyle{arrow}=[draw,->] 
\begin{tikzpicture}[auto, node distance=2cm,>=latex']
\node[] (M) {${\color{blue}{i}}$};
\node[blocked, dashed, left=-0.6cm of M] (alice) {};
\node[block,right=.5cm of M] (enc) {Encoder};
\node[block, right=3cm of enc] (channel) {DMC};
\node[block, right=1cm of channel] (dec) {Decoder};
\node[block1, dashed, right=0.5 cm of dec] (bob) {};
\node[align=left,right=.5cm of dec] (Mhat) {Is ${i^\prime}$ sent? \\ Yes or No?};
\node[input,right=.5cm of channel] (t1) {};
\node[input,below=1cm of t1] (t2) {};
\node[above= 0.3 cm of alice] (a) {Alice};
\node[above= 0.3 cm of bob] (b) {Bob};
\node[below=0.3 cm of alice] (message) {${\color{blue}{i}} \in \{1,\ldots,N\}$};
\draw[->,thick] (M) -- (enc);
\draw[->,thick] (enc) --node[above]{ $X_t={f}^t_{\color{blue}{i}}(Y^{t-1})$} (channel);
\draw[->,thick] (channel) --node[above]{$Y_t$} (dec);
\draw[->,thick] (dec) -- (Mhat);
\draw[-,thick] (t1) -- (t2);
\draw[->,thick] (t2) -| (enc);
\end{tikzpicture}}
}\vspace{0.2cm}}}\\
         \hline
         & capacity & $C = \max\limits_{P} I(X;Y)$ (Shannon's transmission capacity) & $C_{\textrm{IDF,d}} = \max\limits_{x} H(W(\cdot|x)) = H(W(\cdot|x^*))$ \\
         \hline
         1.1.& define RV $Y_t^*$ & $Y_t^* = \log\frac{W(Y_t|f^t_{i}(Y^{t-1}))}{\color{black}P^*_Y(Y_t)}$ & $Y_t^* = \color{black}- \color{black}\log W(Y_t|f^t_{i}(Y^{t-1}))$ \\
         \hline
         1.2.& $\mathbb{E}[Y_t^*|Y^{t-1}]$ & \parbox{9cm}{$\mathbb{E}[Y_t^*|Y^{t-1}] \leq \color{black}C$ \\ $\rightarrow \mathbb{E}[Y_t^*|Y^{t-1}] = \sum\limits_{Y_t} Y_t^* \mathbb{P}(Y_t|Y^{t-1})$ \\ 
         \begin{addmargin}[12pt]{0pt}$= \sum\limits_{Y_t} \log\Big(\frac{W(Y_t|f^t_{i}(Y^{t-1}))}{P^*_Y(Y_t)}\Big) \cdot W(Y_t|f^t_{i}(Y^{t-1}))$ \\ $\leq \max_{P} I(X;Y) = C$, \end{addmargin}} & \parbox{9.8cm}{$\mathbb{E}[Y_t^*|Y_1,...,Y_{t-1}] \leq \color{black}C_{\textrm{IDF,d}}$ \\ $\rightarrow \mathbb{E}[Y_t^*|Y_1,...,Y_{t-1}] = \sum\limits_{Y_t} Y_t^* \mathbb{P}(Y_t|Y_1,...,Y_{t-1})$ \\ \begin{addmargin}[12pt]{0pt}$= - \sum\limits_{Y_t} \log W(Y_t|f^t_{i}(Y_1,...,Y_{t-1})) \cdot W(Y_t|f^t_{i}(Y_1,...,Y_{t-1}))$ \\ $= H(W(\cdot|x)) \leq H(W(\cdot|x^*)) = C_{\textrm{IDF,d}}$\end{addmargin}} \\
         \hline
         2.1.& define RV $U_t$ & \parbox{9.6cm}{$U_t = Y_t^* - \mathbb{E}[Y_t^*|Y_1,...,Y_{t-1}]$ \\$\text{with } \mathbb{E}[U_t|Y_1,...,Y_{t-1}] = 0, \mathbb{E}[U_t|U_s] = 0$, $s<t,$ \\ $\mathbb{E}[U_s U_t] = 0,$ $s\neq t$} & \parbox{9.8cm}{$U_t = Y_t^* - \mathbb{E}[Y_t^*|Y_1,...,Y_{t-1}]$ \\$\text{with } \mathbb{E}[U_t|Y_1,...,Y_{t-1}] = 0, \mathbb{E}[U_t|U_s] = 0$, $s<t,$\\ $\mathbb{E}[U_s U_t] = 0,$ $s\neq t$} \\
         \hline
         2.2.& PD of $\sum\limits_{t=1}^{n} U_t$ & \parbox{9.1cm}{let $\alpha > 0$ such that: \\ \begin{addmargin}[14pt]{0pt}$\sum\limits_{t=1}^{n} U_t \stackrel{2.1)}{=} \sum\limits_{t=1}^{n} Y_t^* - \sum\limits_{t=1}^{n} \mathbb{E}[Y_t^*|Y^{t-1}] \color{black}> \color{black}\alpha \sqrt{n}$ \\\end{addmargin} $\Leftrightarrow \sum\limits_{t=1}^{n} Y_t^* \color{black}> \color{black}\alpha\sqrt{n} + \sum\limits_{t=1}^{n}\mathbb{E}[Y_t^*|Y^{t-1}]$ \\with $\alpha\sqrt{n}+\sum\limits_{t=1}^{n} \underbrace{\mathbb{E}[Y_t^*|Y_1,...,Y_{t-1}]}_{\leq C}\leq \alpha\sqrt{n}+nC$ \\therefore: \\ $ \mathbb{P}(\sum\limits_{t=1}^{n} Y_t^* \color{black}>\color{black}\alpha\sqrt{n} + nC|Y_1,...,Y_{t-1})$ \\ $\color{black}\leq \color{black}\mathbb{P}(\sum\limits_{t=1}^{n} Y_t^* \color{black}>\color{black}\alpha\sqrt{n} + \sum\limits_{t=1}^{n}\mathbb{E}[Y_t^*|Y_1,...,Y_{t-1}]|Y_1,...,Y_{t-1})\\ = \mathbb{P}(\sum\limits_{t=1}^{n} U_t \color{black}> \color{black}\alpha\sqrt{n}|Y_1,...,Y_{t-1}) $} & \parbox{9.8cm}{let $\alpha > 0$ such that: \\ \begin{addmargin}[14pt]{0pt}$\sum\limits_{t=1}^{n} U_t \stackrel{2.1)}{=} \sum\limits_{t=1}^{n} Y_t^* - \sum\limits_{t=1}^{n} \mathbb{E}[Y_t^*|Y_1,...,Y_{t-1}] \color{black}\leq \color{black}\alpha \sqrt{n}$ \\ \end{addmargin}$\Leftrightarrow \sum\limits_{t=1}^{n} Y_t^* \color{black}\leq \color{black}\alpha\sqrt{n} + \sum\limits_{t=1}^{n}\underbrace{\mathbb{E}[Y_t^*|Y_1,...,Y_{t-1}]}_{\leq C_{\textrm{IDF;d}}} \leq \alpha\sqrt{n} + nC_{\textrm{IDF,d}} $ \\ therefore: \\$ \mathbb{P}(\sum\limits_{t=1}^{n} Y_t^* \color{black}\leq \color{black}\alpha\sqrt{n} + nC_{\textrm{IDF,d}}|Y_1,...,Y_{t-1}) $ \\ $\color{black}\geq \color{black}\mathbb{P}(\sum\limits_{t=1}^{n} Y_t^* \color{black}\leq\color{black} \alpha\sqrt{n} + \mathbb{E}[Y_t^*|Y_1,...,Y_{t-1}]|Y_1,...,Y_{t-1}) \\= \mathbb{P}(\sum\limits_{t=1}^{n} U_t \color{black}\leq \color{black}\alpha\sqrt{n}|Y_1,...,Y_{t-1})$ } \\
         \hline
    \end{tabular}
}

\label{tab:comparison_wolfowitz_ahlswede}
\end{table}

\begin{table}
    \centering
    \Rotatebox{90}{
    \begin{tabular}{|c|c|c|c|}
    \hline
    step& & \color{tu8}transmission\color{black} + feedback (Wolfowitz \cite{wolfowitz}) & \color{tu2}identification\color{black} + feedback (Ahlswede \cite{ahlswede_feedback}) \\
    \hline
    2.3.&\parbox{2cm}{Chebyshev \\ inequality \eqref{eq:chebyshev}} & \parbox{10.2cm}{for $X=\sum\limits_{t=1}^{n}U_t$, $\mu = \mathbb{E}[\sum\limits_{t=1}^{n}U_t] = 0$, $k=\frac{\alpha}{\sqrt{\beta}}=\frac{1}{\sqrt{\nu}}$ and $\beta$ as an upper bound for $\text{Var}[U_t]$, therefore \\$\sigma^2 = \text{Var}[\sum\limits_{t=1}^{n} U_t|Y_1,...,Y_{t-1}] =\sum\limits_{t=1}^{n} \text{Var}[U_t|Y_1,...,Y_{t-1}] \leq n\beta $: \\ $\mathbb{P}(|X-\mu|\geq k\sigma) = \mathbb{P}(\sum\limits_{t=1}^{n} U_t \geq \alpha\sqrt{n}|Y_1,...,Y_{t-1}) \leq \frac{1}{k^2} = \nu$ } & \parbox{10cm}{for $X=\sum\limits_{t=1}^{n}U_t$, $\mu = \mathbb{E}[\sum\limits_{t=1}^{n}U_t|Y_1,...,Y_{t-1}] = 0$, $k=\frac{\alpha}{\sqrt{\beta}} = \frac{1}{\sqrt{\nu}}$ and $\color{black}\beta := |\mathcal{Y}|\log^2 3\color{black}$ as an upper bound for $\text{Var}[U_t]$, therefore \\$\sigma^2 = \text{Var}[\sum\limits_{t=1}^{n} U_t|Y_1,...,Y_{t-1}] =\sum\limits_{t=1}^{n} \text{Var}[U_t|Y_1,...,Y_{t-1}] \leq n\beta $: \\ $\mathbb{P}(|X-\mu|\geq k\sigma) = \mathbb{P}(\sum\limits_{t=1}^{n} U_t \geq \alpha\sqrt{n}|Y_1,...,Y_{t-1}) \leq \frac{1}{k^2} = \nu$ }  \\
    \hline
    3.&\parbox{2cm}{define \\subset} & \parbox{10.2cm}{$\mathcal{C}(P_Y) =\{y^n| \log\frac{W^n(y^n|f)}{P_Y(y^n)} \color{black}> \color{black} n\color{black}C \color{black}+ \alpha\sqrt{n} =: \color{black}\theta \color{black}\}$ \\ note that \color{tu110}$W^n(\mathcal{C}(P^*_Y)|f)\color{black}=\mathbb{P}(\sum\limits_{t=1}^{n} Y_t^* >\alpha\sqrt{n} + nC|U_1,...,U_{t-1})$\\ \begin{addmargin}[100pt]{0pt}\color{black}$\overset{\text{step}}{\underset{\text{\tiny{2.2)+2.3)}}}{\leq}}\color{tu110}\nu$\color{black}\end{addmargin}} & \parbox{10.2cm}{$\mathcal{E} = \{ y^n | -\log W^n(y^n|f) \color{black}\leq \color{black}n\color{black}C_{\textrm{IDF,d}} \color{black}+ \alpha\sqrt{n} =:\color{black} \log K \color{black}\}$\\ note that \color{tu110}$W^n(\mathcal{E}|f)\color{black}=\mathbb{P}(\sum\limits_{t=1}^{n} Y_t^* \leq\alpha\sqrt{n} + nC_{\textrm{IDF,d}}|Y_1,...,Y_{t-1} )$\\ \begin{addmargin}[16pt]{0pt} \color{black}$= 1- \mathbb{P}(\sum\limits_{t=1}^{n} Y_t^* \geq\alpha\sqrt{n} + nC_{\textrm{IDF,d}}|Y_1,...,Y_{t-1} )\overset{\text{step}}{\underset{\text{\tiny{2.2)+2.3)}}}{\geq}} \color{tu110}1-\nu$\color{black}\end{addmargin} } \\
    \hline
    4.1.&\parbox{2.05cm}{\color{tu8}Lemma \ref{lem:wolfowitz}\color{black}/ \\ \color{tu2}Lemma \ref{lem:idf_subset_n}}\color{black} & \parbox{10.2cm}{for  
    $\color{tu110}\min\limits_{P_Y}\max\limits_{f} W^n(\mathcal{C}(P_Y)|f) < \nu $ \color{black}$\Rightarrow M < \frac{2^\theta}{1-\nu-\varepsilon}$ \\ \textit{Proof: } \\ $2^\theta\cdot P_{Y}^*(\mathcal{D}_i\cap \mathcal{C}(P_Y^*)) \color{black}\overset{\text{def. of}}{\underset{\mathcal{C}(P_Y)}{\geq}} \color{black}W^n(\mathcal{D}_i\cap \mathcal{C}(P_Y^*)^c|f) \overset{\text{step}}{\underset{\text{4.2)}}{>}} 1-\nu-\varepsilon$ \\ $\Leftrightarrow 2^\theta \geq 2^\theta\underbrace{\sum\limits_{i=1}^{M} P_{Y}^*(\mathcal{D}_i\cap \mathcal{C}(P_Y^*))}_{\leq1} > \sum\limits_{i=1}^{M} 1-\nu-\varepsilon = M(1-\nu-\varepsilon)$ } & \parbox{9.8cm}{$\min\limits_{\mathcal{E}\subset\mathcal{Y}^n:\color{tu110}W^n(\mathcal{E}|f)\geq 1-\nu} \color{black}|\mathcal{E}| \leq K =: 2^{nC_{IDF,d}+\alpha\sqrt{n}}$  $\Rightarrow N<2^{n\log|\mathcal{Y}|K}$ \\ \textit{Proof: } \\ $N\color{black}\overset{\mathcal{D}_i\cap\mathcal{E}_i}{\underset{\text{distinct}}{\leq}} \color{black}\sum\limits_{i=0}^{K} {{|\mathcal{Y}|^n}\choose{i}} {\leq} (|\mathcal{Y}|^n)^K = (2^{n\log|\mathcal{Y}|^n})^K = 2^{n\log|\mathcal{Y}|K} $  } \\
    \hline
    4.2.&\parbox{2cm}{bound \\ \footnotesize{$W^n(\mathcal{D}_i\cap\cdot|f_i)$}} & \parbox{10.3cm}{$ W^n(\mathcal{D}_i\cap\mathcal{C}(P_{Y^*})^{\color{black}c}|f_i) {\geq} \underbrace{W^n(\mathcal{D}_i|f_i)}_{>1-\varepsilon} - \underbrace{W^n(\mathcal{C}(P_{Y^*})|f_i)}_{\color{tu110}<\nu \text{\tiny{ by step 3)}}}$ \\ \begin{addmargin}[100pt]{0pt} $> 1-\varepsilon-\nu $, with $\varepsilon$ as type $I$ error \end{addmargin}} & \parbox{9.8cm}{$W^n(\mathcal{D}_i\cap\mathcal{E}_i|f_i) {\geq} \underbrace{W^n(\mathcal{E}_i|f_i)}_{\color{tu110}\geq1-\nu\text{\tiny{ by step 3)}}} - \underbrace{W^n(\mathcal{D}_i^c|f_i)}_{<\varepsilon} > 1- \nu - \varepsilon $, \\ with \color{black} $\mathcal{D}_i\cap\mathcal{E}_i$ distinct \color{black} and $\varepsilon$ as type $I$ error} \\
    \hline
    5.&\parbox{2cm}{coding rate} & \parbox{10.2cm}{$M<\frac{2^\theta}{1-\nu-\varepsilon}$ \\ $ \Leftrightarrow \log M < \theta - \log(1-\nu-\varepsilon) = nC+\alpha\sqrt{n}-\log(1-\nu-\varepsilon)$ \\ $\Leftrightarrow \frac{1}{n}\log M< C + \underbrace{\frac{\alpha}{\sqrt{n}}}_{\xrightarrow[n\to\infty]{} 0} + \underbrace{\frac{1}{n}\log(1-\nu-\varepsilon)}_{\xrightarrow[n\to\infty]{} 0}$ \\ $\Rightarrow \limsup_{n \to \infty}\frac{1}{n}\log M < C$ } & \parbox{10cm}{$N<2^{n\log|\mathcal{Y}|K} \Leftrightarrow \log N < n\log|\mathcal{Y}|2^{nC_{IDF,d}+\alpha\sqrt{n}} $ \\ $\Leftrightarrow \log\log N < \log(n\log|\mathcal{Y}|) + nC_{IDF,d}+\alpha\sqrt{n}$ \\ $\Leftrightarrow \frac{1}{n} \log\log N < \underbrace{\frac{1}{n}\log(n\log|\mathcal{Y}|)}_{\xrightarrow[n\to\infty]{} 0} + C_{IDF,d} + \underbrace{\frac{\alpha}{\sqrt{n}}}_{\xrightarrow[n\to\infty]{} 0}$ \\ $\Rightarrow \limsup_{n \to \infty}\frac{1}{n}\log\color{black}\log \color{black}N < C_{IDF,d}$ } \\
    \hline
    \end{tabular}
    }
\end{table}

\newpage

\section*{Acknowledgments}
C. Deppe, W. Labidi, and Y. Zhao acknowledge the financial support by the Federal Ministry of Education and Research
of Germany (BMBF) in the program of “Souverän. Digital. Vernetzt.”. Joint project 6G-life, project identification number: 16KISK002.
E. Jorswieck and M. Mross are supported by
the Federal Ministry of Education and Research (BMBF, Germany) through
the Program of “Souveran. Digital. Vernetzt.” Joint Project 6G-Research and
Innovation Cluster (6G-RIC) under Grant 16KISK031
W. Labidi were further supported in part by the BMBF within the national initiative on Post Shannon Communication (NewCom) under Grant 16KIS1003K. C.\ Deppe was further supported in part by the BMBF within NewCom under Grant 16KIS1005. C. Deppe was also supported by the DFG within the project DE1915/2-1. 

\bibliographystyle{splncs04}
\bibliography{references} 
\end{document}